\numberwithin{theorem}{section}
\newcounter{runidnum}
\newcolumntype{R}{>{\columncolor{gray!20}}r}
\newcolumntype{L}{>{\columncolor{gray!20}}l}
\newcolumntype{C}{>{\columncolor{gray!20}}c}
\newcommand{\vect}[1]{\boldsymbol{#1}} 
\DeclareMathOperator*{\argmin}{arg\,min}
\newcommand{\alert}[1]{\textcolor{red}{#1}} \renewcommand{\alert}[1]{}
\definecolor{light-gray}{gray}{0.80}
\newcommand{\rbr}[1]{\left(#1\right)}
\newcommand{\opA}{\ensuremath{f}}
\newcommand{\uq}{\ensuremath{\vect{x}}}
\newcommand{\uqmap}{\ensuremath{\vect{x}_{\mathrm{MAP}}}}
\newcommand{\md}{\ensuremath{y}}
\newcommand{\DT}{\ensuremath{\mathbb{Y}}}
\newcommand{\ind}{\ensuremath{\boldsymbol{\mathrm{1}}}}
\newcommand{\var}{\ensuremath{\mathbb{V}}}
\newcommand{\mrm}[1]{\ensuremath{\mathrm{#1}}}
\newcommand{\TheTitle}{BIMC: The Bayesian Inverse Monte Carlo 
    method for goal-oriented uncertainty quantification. Part I.}
\newcommand{\TheAuthors}{Siddhant Wahal and George Biros}
\headers{\TheTitle}{\TheAuthors}
\title{{\TheTitle}}
\author{
  Siddhant Wahal\thanks{
    Oden Institute for Computational Engineering and Sciences, 
    The University of Texas at Austin, Austin, TX, 78712,
(\email{siddhant@oden.utexas.edu}, \email{biros@oden.utexas.edu}). }
  \and
  George Biros\footnotemark[1]}
\begin{document}
\maketitle

\begin{abstract}
We consider the problem of estimating rare event probabilities, 
focusing on systems whose evolution is governed by differential 
equations with uncertain input parameters. If the system dynamics is expensive 
to compute, standard sampling algorithms such as the Monte Carlo method may 
require infeasible running times to accurately evaluate these probabilities. 
We propose an importance sampling scheme (which we call \emph{``BIMC''}) that 
relies on solving an auxiliary, \emph{``fictitious''} Bayesian inverse problem. The solution of the inverse problem yields a posterior PDF, a local Gaussian approximation to which serves as the importance sampling density. We apply BIMC to several problems and  demonstrate that it can lead to computational savings of several orders of  magnitude over the Monte Carlo method. We delineate conditions under which BIMC is optimal, as well as conditions when it can fail to yield an effective IS density.
\end{abstract}

\begin{keywords}
  Monte Carlo method, Bayesian inference, rare events,  importance
  sampling, uncertainty quantification
\end{keywords}

\begin{AMS}
65C05, 62F15, 62P30 
\end{AMS}

\section{Introduction}
\label{section:intro}
We consider the following goal-oriented uncertainty quantification
(UQ) problem. Let $\opA(\uq):\mathbb{R}^m\rightarrow\mathbb{R}$ be a
\emph{smooth nonlinear operator}, and $p(\uq)$ a given probability
density function (PDF) for $\uq$. Given a \emph{target interval} $\DT
\subset \mathbb{R}$, our goal is to compute $\mu = \mathbb{P}(\opA(\uq)
\in \DT)$. Equivalently, $\mu$ is the expectation under $p(\uq)$ of
the \emph{indicator function} $\ind_{\DT}(\opA( \uq))$. 
\footnote{The indicator
function, $\ind_{\DT}(y)$ assumes the value 1 if $y \in \DT$, and 0 otherwise.}
We focus on the case when $\mu \ll 1$, \emph{i.e.,} the event $f(\uq) \in \DT$
is rare.

In our context, $f(\uq)$ is a map from some random finite dimensional parameter
space to a quantity-of-interest (QoI). Such
parameter-to-QoI maps are often a composition of the solution of a
differential equation for a state variable, and an operator that 
extracts the QoI from the state. The parameters $\uq$ represent uncertain 
parameters in the physical model. This uncertainty can arise from a variety of
sources, such as lack of knowledge, measurement errors, or noise. 
Here, we assume that the uncertainty is described by a known PDF, $p(\uq)$.
A \emph{Monte Carlo} (MC) method can be used to compute $\mu$ by 
sampling $\uq$ from $p(\uq)$ and then checking whether $\opA(\uq) \in \DT$. 
But such an approach can be prohibitively expensive if the operator $\opA$ is
expensive to evaluate, especially when $\mu \ll 1$.

\paragraph{Summary of the methodology}
We propose a variance reduction scheme based on importance sampling (IS). 
In IS, samples are drawn from a new distribution, say $q(\uq)$, in order to
increase the occurrences of the rare event. We construct our IS density as
follows. We begin by setting up an auxiliary \emph{inverse problem}. 
First we select a $\md \in \DT$, and then we find $\uq$
such that $\opA(\uq)\approx \md$. This is an ill-posed or inverse problem since given a scalar $\md$ we want to reconstruct the vector $\uq$. A simple counting argument shows  that this is impossible unless we use some kind of  regularization. To address this ill-posedness we adopt a Bayesian perspective, that is, the solution of the inverse problem is not a specific point estimate $\uq$ but  a ``posterior distribution'', 
$p(\uq | \md)$, a PDF on the parameters $\uq$ conditioned on $\md$. We will use a Gaussian
approximation of this posterior around the \emph{Maximum A Posteriori} (MAP) point as the importance sampling distribution. The mean of the approximating Gaussian is the MAP point itself, and its covariance is the inverse of the Gauss-Newton Hessian, $\mathbf{H}_{\mathrm{GN}}^{-1}$, of $-\log(p(\uq | y))$ at the MAP point. 

\paragraph{Contributions}
In summary, our contributions are the following. 
\begin{itemize}
    \item We introduce the concept of solving \emph{inverse} problems for
    \emph{forward} uncertainty quantification.
  \item To our knowledge, this is the first algorithm that exploits derivatives
  of the forward operator $f$ to arrive at an IS density for simulating rare
  events. 
  \item We offer a thorough theoretical analysis of the affine-Gaussian inverse
      problem. This analysis establishes conditions for optimality of our
      algorithm, as well as guides the tuning of various algorithmic
      ``knobs''.
  \item We apply our methodology to several real and synthetic
    problems and demonstrate orders-of-magnitude speedup over a vanilla MC
    implementation.
\end{itemize}
\paragraph{Limitations}
\begin{itemize}
    \item The success of our algorithms depends strongly upon the quality
      (both in terms of accuracy and speed) of the inverse problem
      solution.  When the operator $\opA$ involves differential
      equations, efficiently solving the inverse problem requires  adjoint operators
      and perhaps sophisticated PDE-constrained optimization solvers and
      preconditioners.
    \item Our methodology has several failure mechanisms. These are described in
        detail in \Cref{section:experiments}. In light of these
        failure mechanisms, the question of \emph{a priori} assessing
        the applicability of BIMC to a given problem (\emph{i.e.}, a given
        combination of $f(\uq)$, $p(\uq)$, and $\DT$) has also been left
        unexplored.
\end{itemize}

\paragraph{Related work} The literature on goal-oriented techniques,
importance sampling, rare-event probability estimation, and Bayesian inference is quite
extensive. Here, we review work that is most relevant.

\subparagraph{Goal-oriented methods}
The idea of goal-oriented techniques for UQ isn't new 
(see~\cite{lieberman-willcox13, lieberman2014nonlinear, spantini2017goal}).
However, most of these works focus on dimensionality reduction, and not rare events.
Also pertinent is the measure-theoretic approach to inverse
problems~\cite{breidt-butler-estep11, butler-estep-sandelin12}.

\subparagraph{Rare-event probability estimation}
A large body of work on rare events has been
motivated by the problem of assessing the reliability of systems.
In such problems, the task is to compute the probability of failure of a
system, which occurs when $f(\uq) < 0$ (or in our framework, when $\DT =
(-\infty, 0)$). 

Analytical approaches to approximate this failure probability
include the First and Second Order Reliability Methods
(see~\cite{rackwitz2001reliability} for a review). These methods are based
on approximating $f$ with a truncated Taylor series expansion
around a ``design'' point. A drawback of these methods is that
they have no means of estimating the error in the computed failure probability.
We would like to note that the concept of
``design'' points here is similar to the MAP point in our algorithm, but they
are not exactly identical. The
design point, say $\uq^*$, is always constrained to satisfy $f(\uq^*) = 0$. That
is, it lies at the edge of the pre-image $f^{-1}(\DT)$. The MAP point, 
on the other hand, is expected to lie in the interior of the region
$f^{-1}(\DT)$. Moreover, the fact that $\uq_{\mathrm{MAP}}$ lies in the interior
of $f^{-1}(\DT)$ is accounted for, and in fact, exploited, when we choose
tunable parameters of our algorithm.

Statistical approaches to evaluate the failure probability have received
considerable attention (see~\cite{rubino2009rare} for a review). 
As opposed to analytical methods, these methods have
well-understood convergence properties, and they come with a natural error
estimate. In this context, a simple Monte Carlo method is usually inefficient,
and some form of variance reduction is usually required. Several  importance sampling methods have been  proposed to this effect.  We refer the reader to~\cite{mcbook} for a general introduction to importance
sampling. Several IS algorithms 
(\cite{schueller1987critical, bucher1988adaptive, melchers1989importance, 
au1999reliability}) reuse the concept of design points by placing normal
distributions centered there. In \cite{bucher1988adaptive,
schueller1987critical}, the covariance of the 
IS distribution is either set equal to that of
$p(\uq)$, or evaluated heuristically, for example, from samples. In
our method, approximating the posterior via a Gaussian yields a natural
covariance for the IS density. 

Within reliability analysis, another class of algorithms uses surrogate models
to reduce the computational effort required to build an IS 
density~\cite{peherstorfer2016, peherstorfer2017multifidelity, kramer2017multifidelity}. 
A different approach involves simulating a sequence of relatively higher
frequency events to arrive at the rare event probability. 
This idea is used in the Cross Entropy
algorithm~\cite{de2005tutorial} to arrive at an optimal IS distribution within a
parametric family. It has also been coupled with Markov Chain Monte Carlo
methods for high-dimensional reliability 
problems~\cite{au1999new, katafygiotis2007estimation, au2001estimation}.

A common feature of all these algorithms is that they only use pointwise
evaluations of the forward model $f$ (or its low-fidelity surrogates) to arrive
at the IS distribution. Hence, these methods are ``non-intrusive''. On the other
hand, the manner in which we construct
our IS density naturally endows it with information from derivatives of $f$.
To our knowledge, the only other algorithms that utilize derivative information
to construct IS densities are IMIS and 
LIMIS~\cite{raftery2010estimating, fasiolo2018langevin}. However, these aren't
tailored for rare-event simulation. Directly substituting the zero-variance
(zero-error) IS density (see \Cref{subsection:imp_sampling}) 
for the target distribution in these algorithms wouldn't
work, since the zero-variance density is non-differentiable, owing to the
presence of the characteristic function.

\subparagraph{PDE-constrained optimization and Bayesian inverse problems}
In BIMC, we rely on adjoints to compute gradients and Hessians of $-\log p(\uq |
y)$. We refer to~\cite{gunzburger2002perspectives} for an introduction to the 
method of adjoints. Computing the MAP point is a PDE-constrained optimization 
problem which can require sophisticated algorithms~\cite{akccelik2006parallel}.
Scalable algorithms for characterizing the Hessian of $-\log(p(\uq | y))$ are
described in~\cite{isaac2015scalable}.
Because we construct our IS density through the solution of an inverse
problem, our approach can be easily built on top of existing scalable
frameworks for solving Bayesian inverse problems, such
as~\cite{villa2018hippylib, villa2019hippylib}.

\paragraph{Outline of the paper}
The rest of this paper is organized as follows - 
\Cref{table:notation} introduces the
notation adopted in this paper. \Cref{section:background} provides 
introductions to the Monte Carlo method, importance sampling, as well as 
Bayesian inference. In \Cref{section:methodology}, we describe our algorithm. 
including analysis that governs the choice of tunable parameters
that arise in the algorithm. \Cref{section:experiments} contains numerical experiments and their
results, as well as a description of the failure mechanisms of our method. 
Finally, we summarize our conclusions in \Cref{section:conclusion}.

\label{section:notation}
\begin{table}[h]
\small
\centering
\begin{tabular}[c]{l l}
\toprule
Symbol             & Meaning\\
\midrule
$f$                & The input-output, or the forward, map\\
$\uq$              & Vector of input parameters to $f$\\
$p(\uq)$           & Input probability density for $\uq$\\
$\DT$              & Target interval for $f(\uq)$\\
$\mathbb{P}(f(\uq) \in \DT)$
                   & Probability of the event $f(\uq) \in \DT$\\
$\mu$              & $\mathbb{P}\rbr{f(\uq) \in \DT}$ \\
$\mathcal{N} \rbr{\uq_0, \bm{\Sigma}_0}$ 
                   & Normal distribution with mean $\uq_0$
                     and covariance $\bm{\Sigma}_0$\\
$N$                & Number of Monte Carlo (MC) or Importance Sampling 
                     (IS) samples\\
$\hat{\mu}^N$        & MC estimate for $\mu$ computed using $N$ samples\\
$\tilde{\mu}^N$      & IS estimate for $\mu$ computed using $N$ samples\\
$\hat{e}_{\mathrm{RMS}}$ 
                   & Root Mean Square (RMS) error in $\hat{\mu}^N$\\
$\tilde{e}_{\mathrm{RMS}}$ 
                   & RMS error in $\tilde{\mu}^N$\\
$p(y | \uq)$       & The likelihood density\\
$p(\uq | y)$       & The posterior density\\
$\uqmap$           & The Maximum \emph{A Posteriori} (MAP) point of $p(\uq|y)$\\
$\mathbf{H}_{\mathrm{GN}}$ 
                   & The Gauss-Newton Hessian of $-\log p(\uq | y)$\\
$D_{\mathrm{KL}}(p||q)$ 
                   & The Kullback-Leibler divergence between 
                     densities $p$ and $q$\\
\bottomrule
\end{tabular}
\caption{Summary of key notation used in this paper.}
\label{table:notation}
\end{table}

\section{Background}
\label{section:background}
\subsection{The Monte Carlo Method}
One way to compute the rare-event probability, $\mu$, is using the Monte Carlo 
method. The forward operator is applied on $N$ independent, identically 
distributed (i.i.d.) samples from $p(\uq)$, ${\{\uq_i\}}_{i = 1}^{N}$. Then, 
an unbiased estimate of $\mu$ is:

\begin{align}
    \hat{\mu}^N = \frac{\sum_{i = 1}^{N} \ind_{\DT}(\opA \left(\uq_i\right))}{N}.
    \label{mcProb}
\end{align}

The law of large numbers guarantees that in the limit $N \rightarrow \infty$, 
$\hat{\mu}$ converges to $\mu$~\cite{Robert2004Monte}. The relative Root Mean 
Square Error (RMSE) in $\hat{\mu}$ is: 

\begin{align*}
    \hat{e}_{\mathrm{RMS}}
            = \frac{1}{\mu}\sqrt{\mathbb{E}_p\Big((\hat{\mu}^N - \mu)^2\Big)}
            = \frac{1}{\mu}
                    \sqrt{\frac{\var_p\Big(\ind_{\DT}(\opA (\uq))\Big)}{N}}
            = \sqrt{\frac{\sigma_p^2}{\mu^2 N}}.
\end{align*}

Since $\ind_{\DT}(\opA(\uq))$ is a binary random variable, its variance is 
$\sigma_p^2 = \mu(1 - \mu)$. This implies that the relative RMSE is 
approximately $\sqrt{1/\mu N}$ when $\mu \ll 1$. In order to achieve a 
specified relative accuracy threshold, the number of samples must then scale as 
$N \sim 1/\mu$. This is problematic since it can render evaluating extremely 
rare probabilities virtually impossible if $f(\uq)$ is expensive to evaluate. 
The evaluation of rare probabilities
can be made tractable by reducing the variance of the MC estimate. In BIMC, 
we aim to achieve variance reduction through importance sampling, which is 
briefly introduced in the next section.

\subsection{Importance Sampling}
\label{subsection:imp_sampling}
Importance sampling biases samples towards regions which trigger the rare event 
(or in our context, where  $f(\uq) \in \DT$)  with the help of a new probability 
density $q$. The contribution from each sample, however, must be weighed to 
account for the fact that one is no longer sampling from the original 
distribution $p$. Thus,

\begin{align}
    \mu &= \int_{\mathbb{R}^m}\ind_{\DT}\left(\opA(\uq)\right)p(\uq)\mrm{d}\uq
         = \int_{\mathbb{R}^m} \ind_{\DT}
           \left(\opA (\uq)\right)\frac{p(\uq)}{q(\uq)}q(\uq) \mrm{d} \uq
         = \mathbb{E}_q\left(\ind_{\DT}\left(f(\uq)\right)
           \frac{p(\uq)}{q(\uq)}\right).
\end{align}

Then, $q$ is called the importance distribution and $p(\uq)/q(\uq)$ 
is the likelihood ratio. The importance sampling estimate for $\mu$ is:

\begin{align}
    \tilde{\mu}^N = \frac{1}{N} \sum_{i = 1}^{N} 
                  \frac{\ind_{\DT}\big(\opA (\uq_i)\big) 
                  p(\uq_i)}{q(\uq_i)},\quad \uq_i \sim q(\uq).
    \label{impSamplingProb}
\end{align}

The relative RMSE in estimating $\mu$ using importance sampling is:

\begin{align}
\begin{split}
    \tilde{e}_{\mathrm{RMS}} &= \frac{1}{\mu}\sqrt{\mathbb{E}_q
                          \big((\tilde{\mu}^N - \mu)^2\big)}
              = \sqrt{\frac{\sigma_q^2}{\mu^2 N}},\,\text{where,}\\
  \sigma_q^2 &= \var_q\bigg(\ind_{\DT}\big(\opA (\uq)\big)
                    \frac{p(\uq)}{q(\uq)}\bigg).
\label{isRMSE}
\end{split}
\end{align}

If $\sigma_q^2$ is smaller than $\sigma_p^2$, the importance sampling estimate
of $\mu$ is more accurate than the one obtained using simple MC. The main 
challenge in importance sampling is selecting an importance density $q$ such 
that $\sigma_q < \sigma_p$. The IS density that 
minimizes $\sigma_q$ is known to be $q^* = \ind_{\DT}(\opA (\uq)) p(\uq) / \mu$ 
(see \cite{kahn1953methods}). That is, the optimal density for importance
sampling is just $p(\uq)$ truncated over regions where $f(\uq) \in \DT$, and 
then appropriately renormalized. However, $q^*$ cannot be 
sampled from, since the renormalization constant $\mu$ is exactly the 
probability we set out to compute in the first place. Nevertheless, it defines
characteristics desirable of a good importance density - it must have most
of its mass concentrated over regions where  $f(\uq) \in \DT$ and resemble
$p(\uq)$ in those regions. 

So the first step in constructing an effective IS density is identifying regions
where $f(\uq) \in \DT$. As mentioned in \Cref{section:intro}, this is done by
solving a Bayesian inverse problem. Before describing
the BIMC methodology in detail, we first provide a brief introduction to
Bayesian inference in a generalized setting.

\subsection{Bayesian inference}

In a general setting where inference must be performed, the problem is slightly
different. Here the goal is to infer input parameters $\uq$ from a (possibly
noisy) real-world observation of the output, say $y$. In the Bayesian approach,
this problem is solved in the statistical sense. The
solution of a Bayesian inference problem is a probability density over the space
of parameters that takes into account any prior knowledge about the parameters 
as well as uncertainties in measurement and/or modeling. This probability 
density, known as the posterior, expresses how likely it is for a particular 
estimate to be the true parameter corresponding to the observation. 

In addition to the observation $y$, assume the following quantities have been 
specified - i) a suitable probability density $p(\uq)$ that 
captures prior knowledge about the parameters $\uq$, and ii)
the conditional probability density of observing the data $y$ given the
parameters $\uq$, $p(y | \uq)$. Then, from Bayes' theorem, the posterior is
given by:

\begin{align}
    p(\uq | y) \propto p(y | \uq) p(\uq).
\end{align}

The posterior can also be interpreted to be updated beliefs once the data and
errors have been assimilated. We would like to emphasize here that in an 
actual inverse  problem, the observation $y$, as well as the likelihood 
density $p(y | \uq)$ are physically meaningful. The former corresponds to 
real-world measurements of the output of the forward model. The latter 
describes a model for errors arising out due to modeling inadequacy or 
measurement. 

The posterior by itself is of little use. Often, the task is to evaluate
integrals involving the posterior. This might be the case, for example, when
trying to characterize uncertainty in the inferred parameters by 
evaluating moments (mean, covariance) of the posterior. Analytical evaluation 
of these integrals is often out of the question and a sample based estimate 
must be used. Except in certain cases, the posterior is
an arbitrary PDF in $\mathbb{R}^m$ and generating samples from it 
requires sophisticated methods such as Markov Chain Monte Carlo. For easy 
sample generation, the posterior can be locally approximated by a Gaussian
around its mode (also known as the Maximum \emph{A Posteriori} point). 
By linearizing $f$ around the MAP point, it can be
shown that the mean of the approximating Gaussian is the MAP point, and its
covariance is the inverse of the Gauss-Newton Hessian matrix of 
$-\log p(\uq | y)$ at the MAP point~\cite{isaac2015scalable}.

As a concrete example, consider the case when the likelihood density represents Gaussian 
additive error of magnitude $\sigma$, $p(y | \uq)  = \mathcal{N}(f(\uq), 
\sigma^2)$. Then, $p(\uq | y) \propto \exp \left(-\frac{\left(y -
f(\uq)\right)^2}{2\sigma^2}\right)p(\uq)$, and we have (up to an additive constant),

\begin{align}
    -\log p(\uq | y) =  
            \frac{1}{2\sigma^2}\left(y - f(\uq)\right)^2 - \log p(\uq),
\label{negLogPost}
\end{align}
 
and, $\uqmap$ can be found as:

\begin{align}
    \begin{split}
        \uqmap &= \argmin_{\uq \in \mathbb{R}^m} \frac{1}{2\sigma^2}
                   \left(y - f(\uq)\right)^2 - \log p(\uq).
        \label{mapEqn}
    \end{split}
\end{align}

Then, the Gauss-Newton Hessian matrix of $-\log(p(\uq | y))$ 
can be written as 

\begin{align}
    \begin{split}
        \mathbf{H}_{\mathrm{GN}} &= - \nabla_{\uq}^2 \log p(\uq | y)\\
                   &= \frac{1}{\sigma^2}(\nabla_{\uq}f) (\nabla_{\uq}f)^T 
                      - \nabla_{\uq}^2 \log p(\uq).
    \end{split}
\end{align}

Note that, the Gauss-Newton Hessian has the attractive 
property of being positive-definite.
These expressions show that $\uqmap$ can be interpreted as that point in
parameter space that minimizes mismatch with the observation but is also 
highly likely under the prior. %
So sampling from a Gaussian approximation of the posterior can be thought of as
drawing samples in the vicinity of a point that is consistent with the data as
well as the prior. In addition, the covariance or spread of the samples is
informed by the derivatives of the forward model.
While constructing the IS density in BIMC, this feature of the Gaussian 
approximation of the posterior in a general, real-world setting will be used in 
conjunction with the knowledge of the shape of the ideal IS density. This
completes the presentation of the necessary theoretical background and we are
ready to describe the BIMC methodology.

\section{Methodology}
\label{section:methodology}

Recall that the forward UQ problem is to compute $\mathbb{P}\left(f(\uq) \in
\DT\right)$ when $\uq \sim p(\uq)$. In BIMC, we use the ingredients of the
forward UQ problem to construct a \emph{fictitious} Bayesian inverse problem as 
follows. We
\begin{enumerate}
    \item select some $y \in \DT$ as a surrogate for real-world observation,
    \item use $p(\uq)$ as the prior, and, 
    \item concoct a likelihood density $p(\uq | y)$.
\end{enumerate}

This enables us to define a pseudo-posterior $p(\uq | y)$, and subsequently, a
Gaussian approximation to it. We call this inverse problem fictitious
because both the observation $y$ and the likelihood density $p(\uq | y)$ are arbitrarily chosen by us. Neither is $y$ a real-world measurement of a physical quantity, nor does  $p(\uq | y)$ correspond to an actual error model. From here on, we will refer to these artificial quantities as the pseudo-data and the pseudo-likelihood respectively.

We propose using the Gaussian approximation to the posterior as an IS 
density. As outlined in the previous section, in the real-world setting, the 
mean of the Gaussian approximation of the posterior (the MAP point) is that 
point in parameter space that is consistent with the data as well as the prior. 
So by solving the fictitious Bayesian inverse problem defined earlier, we expect
the mean of the IS density to be a point that is consistent with some 
$y \in \DT$ as well as the nominal PDF $p(\uq)$. This ensures the IS density is
centered around regions where $f(\uq) \in \DT$. Further, the covariance matrix 
of the Gaussian approximation, and hence the IS density,
contains first-order derivative information. This approach is illustrated in
\Cref{fig:summary}.

Since a Gaussian likelihood model has been assumed, the pseudo-posterior is
proportional to 
$\exp\left(-{\left(y - f(\uq)\right)^2}/{2\sigma^2}\right) p(\uq)$. Thus,
an alternative interpretation of the pseudo-posterior in this case is as a
``mollified'' approximation of the ideal IS density, 
$\ind_{\DT}\left(f(\uq)\right) p(\uq)$,
where the mollification has been achieved by smudging the sharply defined
characteristic function $\ind_{\DT}(f(\uq))$ into a Gaussian, 
$\exp\left(-{\left(y - f(\uq)\right)^2}/{2\sigma^2}\right)$
The advantage of doing this lies in the fact that the mollified ideal IS density has
well-defined derivatives and can be explored via derivative-aware methods,
unlike the true ideal IS density, which isn't differentiable. Algorithms like
IMIS~\cite{raftery2010estimating},  and LIMIS~\cite{fasiolo2018langevin} 
can now be employed for rare-event probability estimation by
plugging in the pseudo-posterior as the target.

\begin{figure}[htbp]
        \centering
        \begin{subfigure}[b]{0.28\textwidth}
            \includegraphics[width=\textwidth,bb=0 0 266 347]{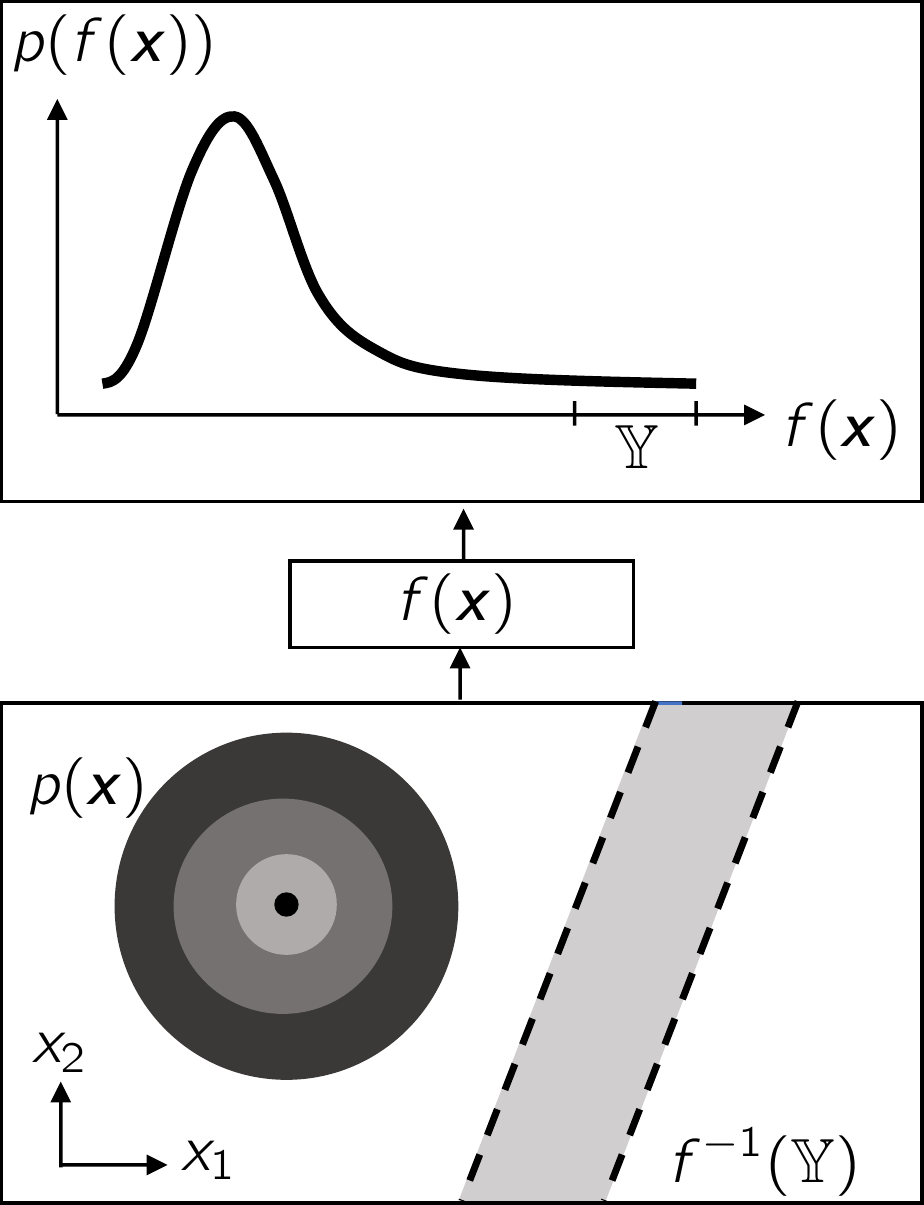}
            \caption{}
            \label{fig:invFig1}
        \end{subfigure}
        ~
        \begin{subfigure}[b]{0.28\textwidth}
            \includegraphics[width=\textwidth,bb=0 0 266 347]{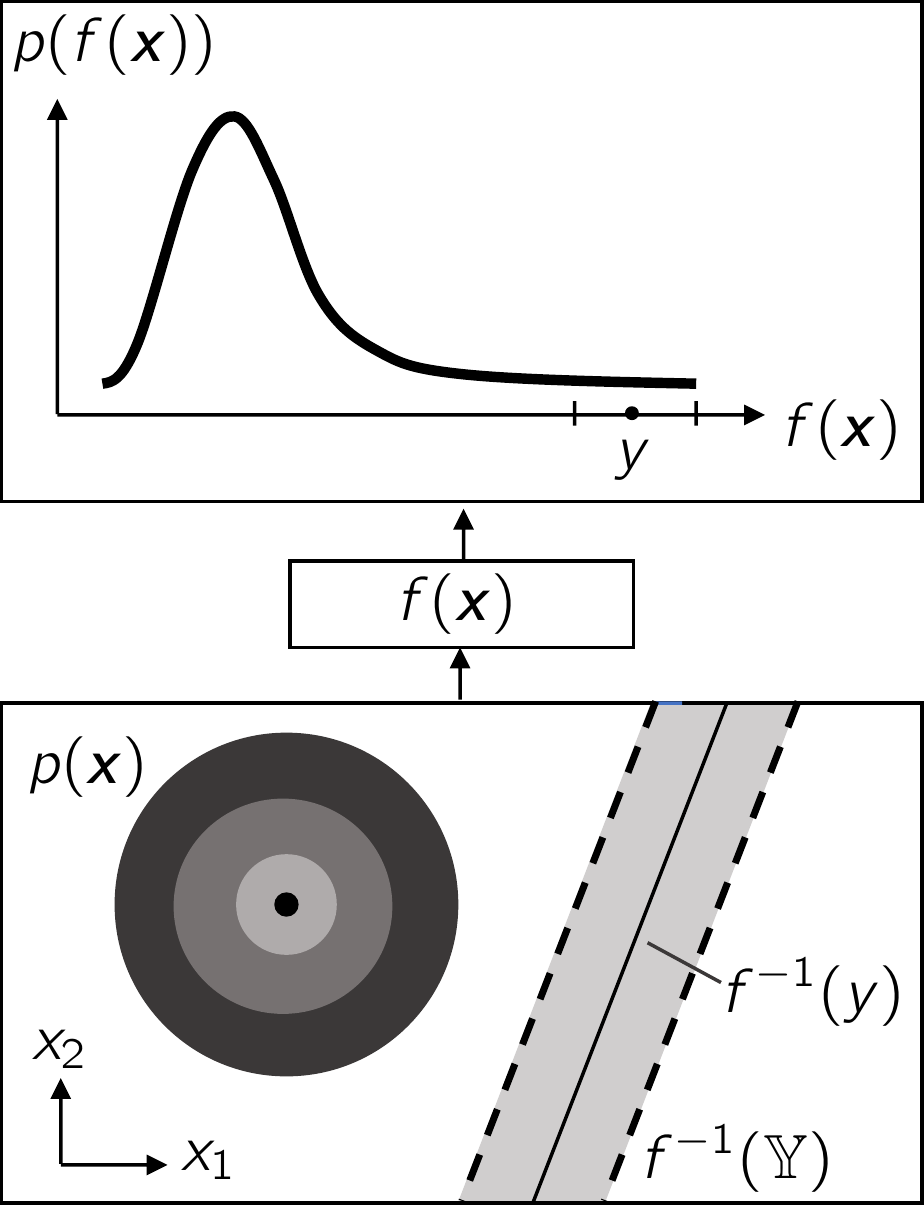}
            \caption{}
            \label{fig:invFig2}
        \end{subfigure}\\
        \begin{subfigure}[b]{0.28\textwidth}
            \includegraphics[width=\textwidth,bb=0 0 266 347]{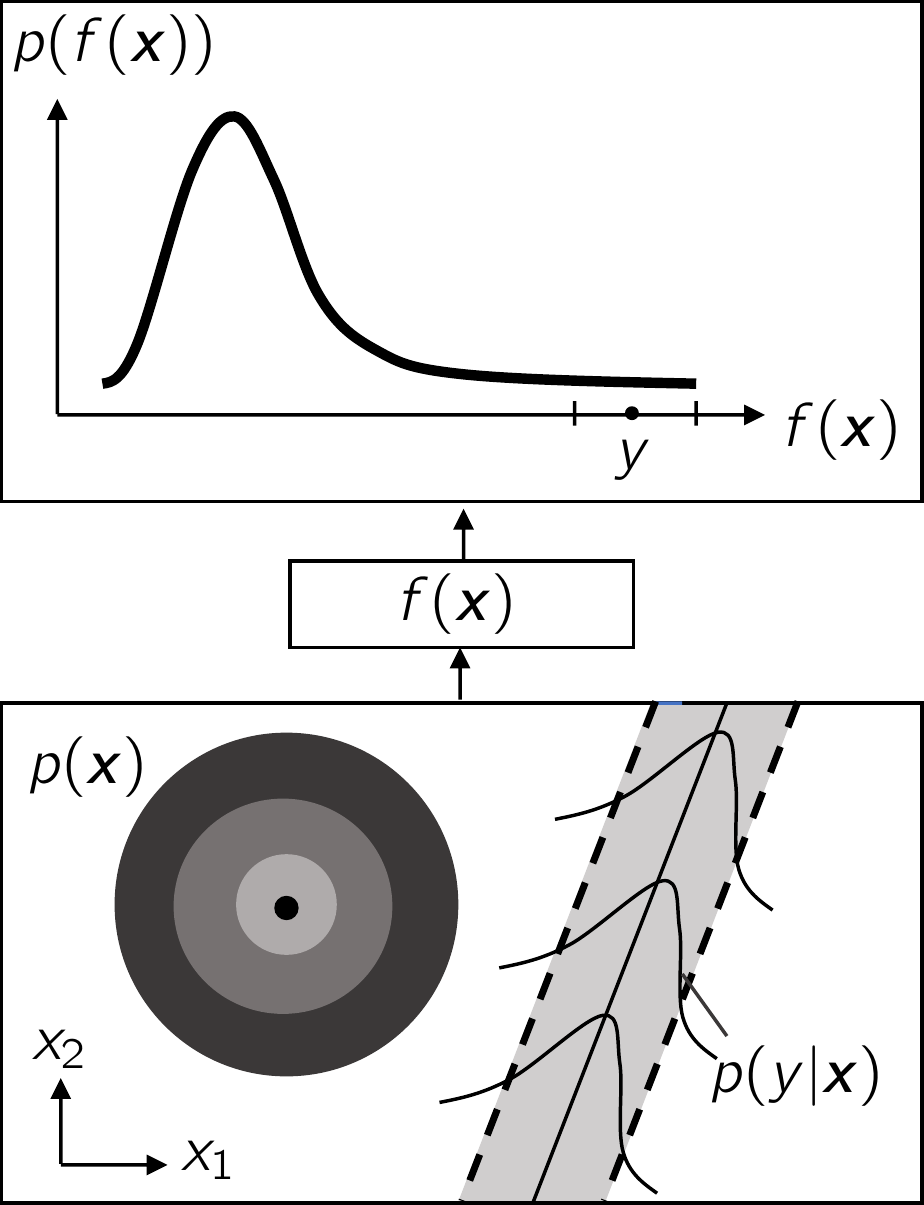}
            \caption{}
            \label{fig:invFig3}
        \end{subfigure}
        ~
        \begin{subfigure}[b]{0.28\textwidth}
            \includegraphics[width=\textwidth,bb=0 0 266 347]{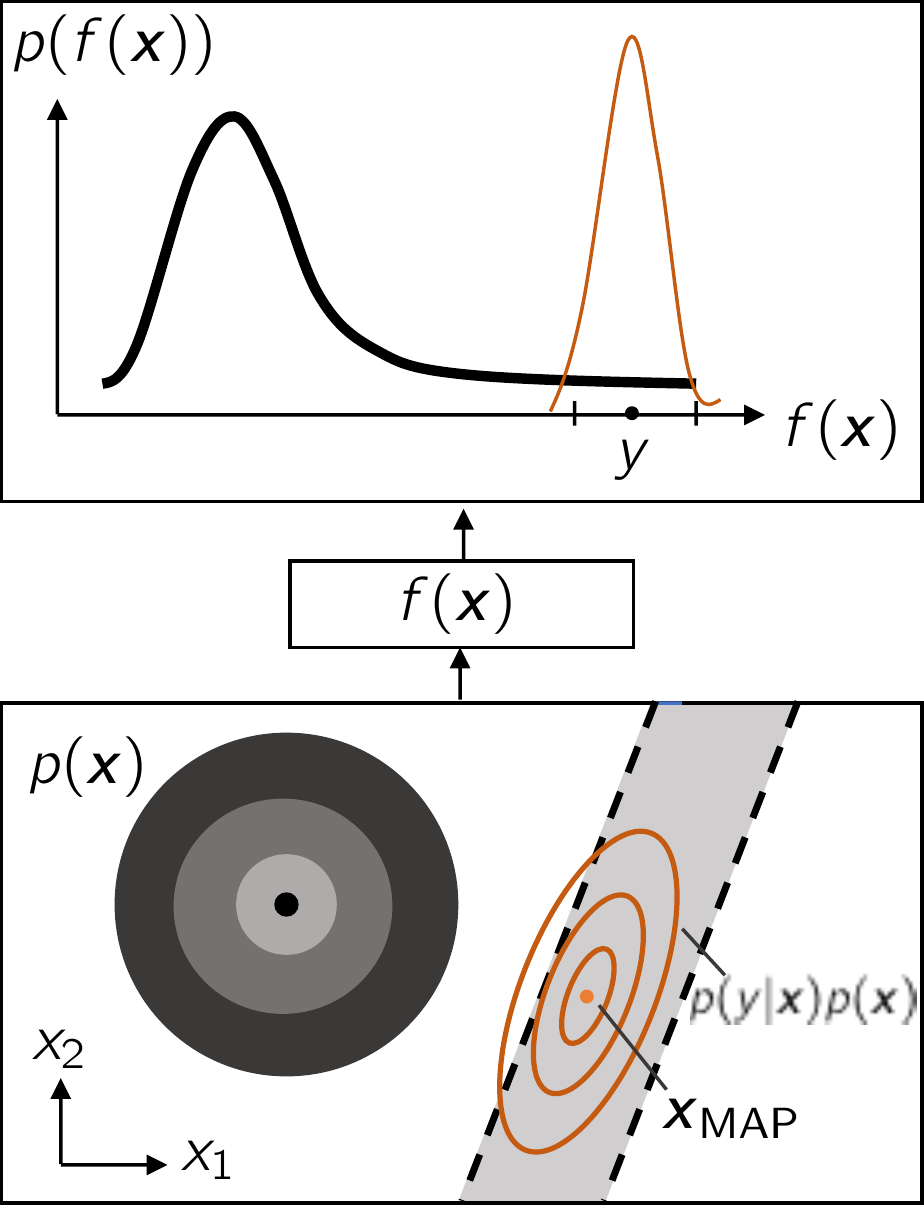}
            \caption{}
            \label{fig:invFig4}
        \end{subfigure}
        \caption{Summary of the BIMC methodology. In (\subref{fig:invFig1}), the
            problem statement is summarized - we need to compute the
            probability of the pre-image of the target interval, $f^{-1}(\DT)$.
            In (\subref{fig:invFig2}), we introduce the pseudo-data point $y \in \DT$ and
            redefine $p(\uq)$ to be the pseudo-prior in a fictitious Bayesian inverse
            problem. The true inverse of the pseudo-data point $y$ is a straight line in
            $\mathbb{R}^2$. Next, in (\subref{fig:invFig3}), we select a pseudo-likelihood 
            density, $p(y | \uq)$. The pseudo-likelihood density can also be viewed
            as a mollified approximation of the characteristic function. 
            Part (\subref{fig:invFig4}) shows the contours of the
            posterior, $p(\uq | y)$, which is proportional to $p(y | \uq)p(\uq)$. We use
            a Gaussian approximation to this posterior as an IS density.}
        \label{fig:summary}
\end{figure}

Irrespective of the interpretation, this methodology introduces 
two tunable parameters---the pseudo-data $y \in \DT$
and variance of the pseudo-likelihood density, $\sigma^2$. These parameters can have a
profound effect on the accuracy of the importance sampler and must be tuned with
care. The tuning strategy depends on the nature of $f(\uq)$ as well as $p(\uq)$.
Next, we discuss possible cases as well as the corresponding tuning strategy. 

\subsection{Affine $f$, Gaussian $p$}
Although for a affine $f(\uq)$ and Gaussian $p(\uq)$, the probability $\mu =
\mathbb{P}(f(\uq) \in \DT)$ may be analytically computed, the availability of
analytical expressions for $\uqmap$ and $\mathbf{H}_{\mathrm{GN}}$ in this case illustrates
how our importance sampler achieves variance reduction. 
Let $\opA(\uq) = \vect{v}^T\uq +\beta$, $p(\uq) = \mathcal{N}(\uq_0, 
\mathbf{\Sigma}_0)$ for some $\vect{v}$,
$\uq_0 \in \mathbb{R}^m$, $\beta \in \mathbb{R}$ and $\mathbf{\Sigma}_0 \in
\mathbb{R}^{m\times m}$. Then, $\mu$ is given analytically as

\begin{align}
    \mu = \Phi\left(\frac{y_{\max} - \nu}{\gamma}\right) - 
    \Phi\left(\frac{y_{\min} - \nu}{\gamma}\right),
    \label{linProb}
\end{align}

where $\nu = \vect{v}^T\uq_0 + \beta$, $\gamma^2 =
\vect{v}^T\mathbf{\Sigma}_0\vect{v}$, and $\Phi$ is the standard Normal CDF.

Now, suppose the pseudo-data is some $y \in \DT$
and the variance of the pseudo-likelihood is some $\sigma^2 \in \mathbb{R}$.
Then the pseudo-posterior $p(\uq | y)$ is also a
Gaussian and no approximations are necessary. Hence, the IS density is given 
by
$q(\uq) = \mathcal{N}(\uqmap, \mathbf{H}_{\mathrm{GN}}^{-1})$, where,

\begin{align}
    \uqmap = \uq_0 + \frac{y - f(\uq_0)}{\sigma^2 + \vect{v}^T
    \mathbf{\Sigma}_0\vect{v}} \mathbf{\Sigma}_0 \vect{v},\quad
    \mathbf{H}_{\mathrm{GN}}^{-1} = \mathbf{\Sigma}_0 - \frac{1}{\sigma^2 + 
    \vect{v}^T \mathbf{\Sigma}_0 \vect{v}}\big(\mathbf{\Sigma}_0\vect{v}\big)
    {\big(\mathbf{\Sigma}_0 \vect{v}\big)}^{T}.
\end{align}
 
The expressions for $\uqmap$ and $\mathbf{H}_{\mathrm{GN}}^{-1}$ expose  how our 
importance sampler achieves variance reduction. The MAP point identifies the 
region in parameter space where $f(\uq) \approx y$. The spread of the 
importance sampler, as encapsulated in its covariance, is reduced over its 
nominal value $\bm{\Sigma}_0$ in a direction informed by $\vect{v}$, the 
gradient of $f(\uq)$. Note that the reduction in variance occurs in just one
direction, $\bm{\Sigma}_0\vect{v}$; the variance of $p(\uq)$ is retained in all other directions. 
The parameter 
$\sigma^2$ controls how much $q(\uq)$ is updated over $p(\uq)$- a small value 
for $\sigma$ results in a larger shift from $\uq_0$ and a larger reduction in 
its spread. These claims become more transparent by noticing that the 
pushforward of $q(\uq)$ under $f$ is another Gaussian 
distribution in $\mathbb{R}$ (the pushforward density represents how $f(\uq)$ 
will be distributed if $\uq$ is distributed according to $q(\uq)$). 
The mean $\chi$ and variance $\xi^2$ of this pushforward density are- 

\begin{align}
    \chi = (1 - \rho^2) y + \rho^2 f(\uq_0),\,
    \xi^2 = \rho^2 \vect{v}^T\bm{\Sigma}_0\vect{v},\,\text{where,}\,
    \rho^2 = \frac{\sigma^2}{\sigma^2 + \vect{v}^T\bm{\Sigma}_0\vect{v}} < 1 .
\end{align}

A small $\sigma$ implies small $\rho$, which means $\chi$ is closer $y$ and
$\xi^2$ is small. 

Since our goal is importance sampling, we wish to select those values for the
tunable parameters that deliver just the right amount of update over $p$. We do
this by minimizing the
Kullback-Leibler (KL) divergence between $q(\uq)$ and the 
ideal IS distribution $q^*(\uq)$. Although not a true metric, the KL divergence 
between two probability densities is a measure of the distance between them.
It is defined as:

\begin{align}
    D_{\mathrm{KL}}(p || q) = \int p(\uq) \log \frac{p(\uq)}{q(\uq)}
    \mathrm{d}\uq .
\end{align}

Then, the optimal pseudo-data point and the optimal variance of the
pseudo-likelihood density can be obtained as:

\begin{align}
    \label{param_selection}
    \begin{pmatrix}
        \sigma^*\\
        y^*
    \end{pmatrix}    &= \argmin_{\sigma, y} D_{\mathrm{KL}}(q^* || q;
    \sigma, y) .
\end{align}

Analytic expressions for $D_{\mathrm{KL}}$, $y^*$, and $\sigma^*$ are derived
for the affine Gaussian case in the supplement in \Cref{section:supplementAffineGauss}. 
Selecting the tunable parameters in this way, in fact, allows us to make the
following claim regarding the resulting IS distribution:

\begin{claim}[BIMC optimality]\label{claim:bimc_optimality}
    In the affine-Gaussian case, the importance sampling density 
    that results from the BIMC procedure is
    equivalent to the Gaussian distribution closest in KL divergence to
    $q^*(\uq)$.
\end{claim}
\begin{proof}
    Proof given in the supplement in \Cref{section:supplementAffineGauss}:
\end{proof}

Hence, BIMC is implicitly searching for the best Gaussian approximation of
$q^*(\uq)$. A Gaussian distribution in $m$ dimensions has ${m(m + 1)}/{2}$
free variables, so a naive search for the best Gaussian approximation of
$q^*(\uq)$ will optimize over all $\mathcal{O}(m^2)$ free variables. However,
BIMC accomplishes this task be optimizing just 2 free variables. This can be
attributed to the similar structure of the pseudo-posterior ${p(y |
\uq)p(\uq)}/{p(y)}$, and the ideal IS density,
${\ind_{\DT}(f(\uq))p(\uq)}/{\mu}$.

\begin{figure}[htbp]
    \centering
    \includegraphics[width=0.7\textwidth]{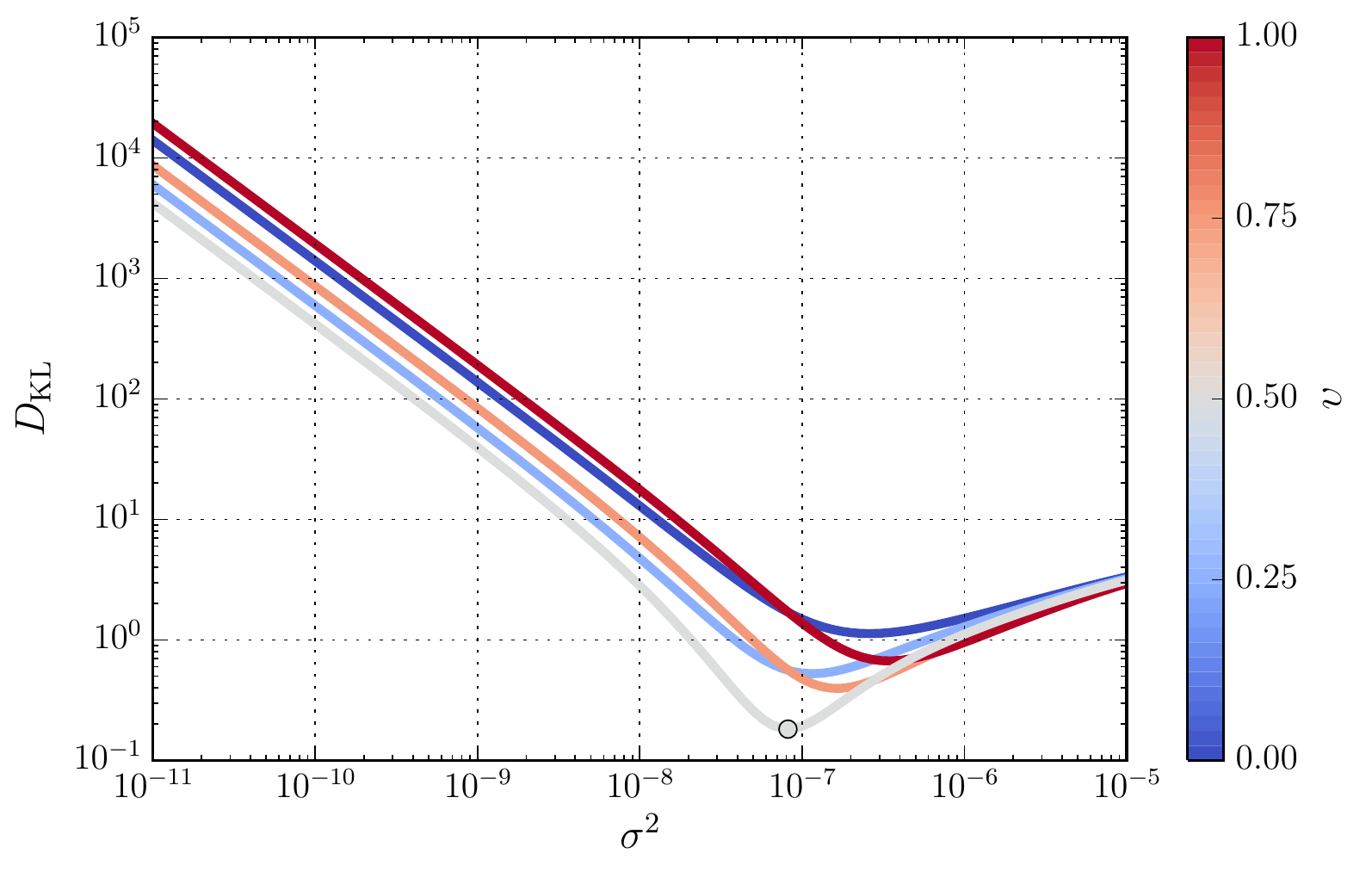} 
    \caption{This figure shows the variation of $D_{\mathrm{KL}}$ at various
    values of the data point $y$ and pseudo-likelihood variance $\sigma^2$. 
    Here, the data point $y$ has been
    normalized to $\upsilon = (y - y_{\min})/(y_{\max} - y_{\min})$ and
    $f(\uq)$ is an affine transformation from $\mathbb{R}^{100}$ to
    $\mathbb{R}$. The marker shows the $(\sigma, \upsilon)$ combination 
    that resulted from numerical minimization of $D_{\mathrm{KL}}$.}
\label{fig:dklLinear}
\end{figure}

To verify whether minimizing $D_{\mathrm{KL}}$ to obtain parameters actually
translates to improved performance of our IS density, we synthesized a affine 
map from $\mathbb{R}^{100}$ to $\mathbb{R}$ (implementation details are provided
in the supplement in \Cref{supplement:fwdModels}). We measure performance by the 
relative RMSE in the probability estimate, $\tilde{e}_{\mathrm{RMS}}$, and we 
expect $\tilde{e}_{\mathrm{RMS}}$ to be small when $y = y^*$ and 
$\sigma = \sigma^*$. In addition, in this case, $\mu$ is available to us 
analytically. This provides yet 
another indicator of performance- the absolute difference between the 
analytical value and the IS estimate must be small when the optimal parameters 
are being used. 

\Cref{fig:dklLinear} shows the variation of 
$D_{\mathrm{KL}}$ with $\sigma^2$ at various
$y$ in addition to the optimal $(\sigma, y)$ combination that results from
numerical optimization. We conclude the following from the figure:

\begin{itemize}
    \item The optimal pseudo-data point lies almost exactly at the mid-point of $\DT$. 
    \item $D_{\mathrm{KL}}$ is extremely sensitive to the spread of the
        pseudo-likelihood probability density $\sigma$, much more so than the
        pseudo-data $y$. Intuitively, a large value for $\sigma$ emphasizes the pseudo-prior
        over the data so that sampling from $q(\uq)$ is akin
        to sampling from $p(\uq)$. On the other hand,
        too small a value for $\sigma^2$ results in $q(\uq)$ not having enough
        spread to cover the region where $f(\uq) \in \DT$, which could result in
        significant bias when the number of samples is small.
\end{itemize}

\begin{figure}[htbp]
    \centering
    \begin{subfigure}[t]{0.48\textwidth}
        \includegraphics[width=\textwidth]{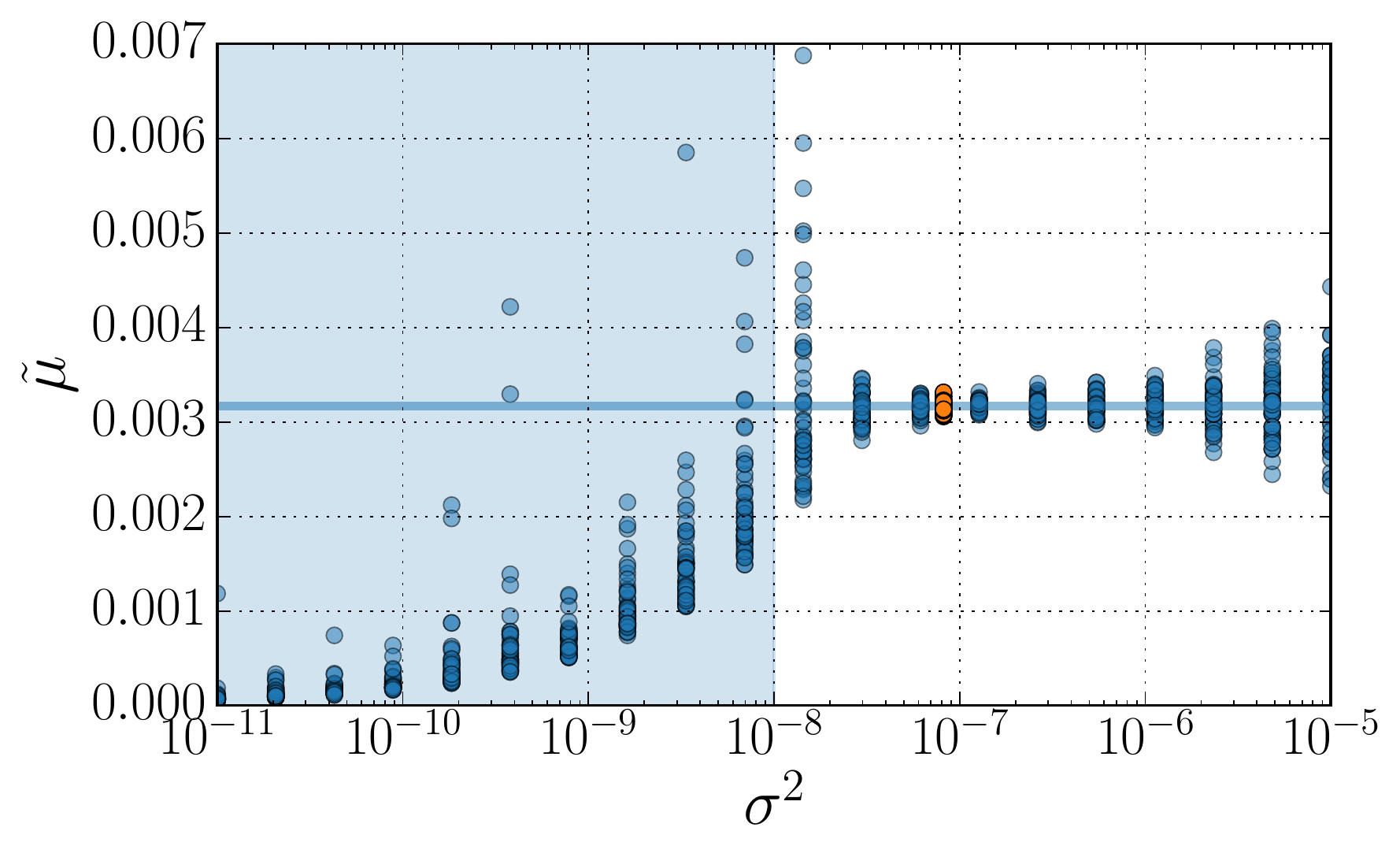} 
        \caption{The IS probability estimate $\tilde{\mu}$ against $\sigma^2$
            at $y = y^*$. Each marker indicates $\tilde{\mu}$ from an
            individual run. The solid line indicates the true value of $\mu$.
        The markers in orange denote runs at $\sigma = \sigma^*$.}
   \label{fig:varConvProbLinearNOpt1}
    \end{subfigure}
    ~
    \begin{subfigure}[t]{0.48\textwidth}
        \includegraphics[width=\textwidth]{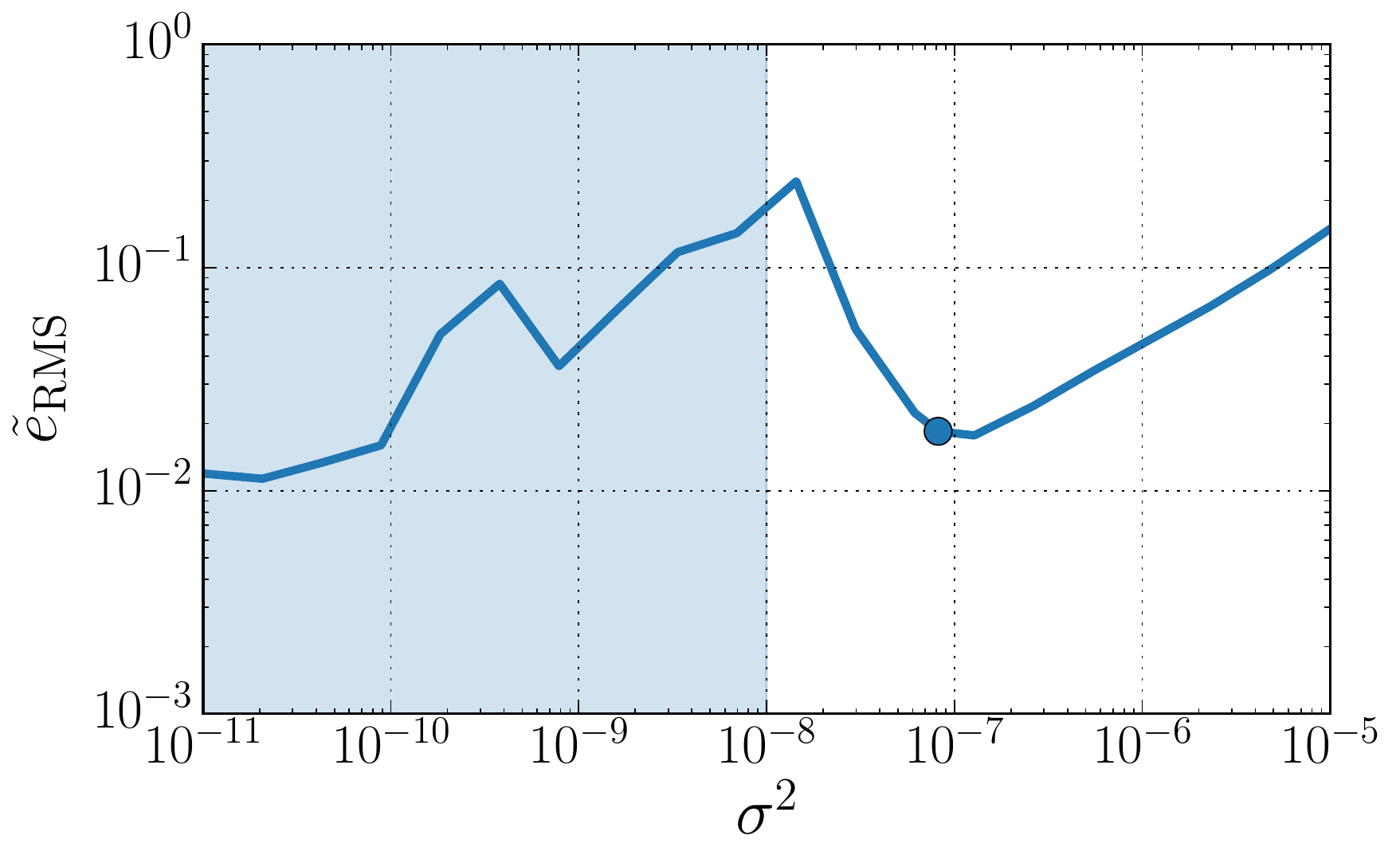} 
        \caption{The relative RMSE $\tilde{e}_{\mathrm{RMS}}$ against 
            $\sigma^2$ at $y
            = y^*$. Only the average $\tilde{e}_{\mathrm{RMS}}$ over all runs at fixed
        $\sigma^2$ is plotted. The marker is at $\sigma = \sigma^*$.} 
   \label{fig:varConvRMSELinearNOpt1}
    \end{subfigure}
    \caption{This figure shows the variation of the probability estimate
        $\tilde{\mu}$ and the relative RMSE $\tilde{e}_{\mathrm{RMS}}$ with the 
        likelihood 
        variance $\sigma^2$. At fixed $\sigma^2$, we performed 50 independent 
        runs using $N = 1000$ samples. For $\tilde{\mu}$, 
        we plot the probability estimate obtained from each run, whereas for
        $\tilde{e}_{\mathrm{RMS}}$, we plot the ensemble average at each 
        $\sigma$.} 
\label{fig:varConvLinearNOpt1}
\end{figure}

In \Cref{fig:varConvLinearNOpt1}, we fix $y = y^*$ and plot the variation of
the probability estimate, $\tilde{\mu}$, and the relative RMSE,
$\tilde{e}_{\mathrm{RMS}}$, with $\sigma^2$. For each value of $\sigma^2$, we 
performed
several independent runs. \Cref{fig:varConvProbLinearNOpt1} plots $\tilde{\mu}$
obtained from each run. In \Cref{fig:varConvRMSELinearNOpt1}, we plot the 
ensemble average of $\tilde{e}_{\mathrm{RMS}}$ over all simulations at fixed 
$\sigma^2$.
Both figures demonstrate that when $\sigma^2$ is small, both the probability 
estimate and the associated RMSE have significant bias (shaded region in the 
figures).  When $\sigma^2$ is large, error increases with $\sigma^2$ since the
emphasis on pseudo-data decreases. There lies an optimal $\sigma^2$ somewhere in 
between, and indeed, minimizing $D_{\mathrm{KL}}$ helps identify it. 
So far we've been using just one pseudo-data point $y \in \DT$. However, it is
also possible to use multiple pseudo-data points, $\{y_i\}_{i = 1}^n, y_i \in 
\DT$. In this case, using the same pseudo-likelihood density for all $y_i$, a 
posterior $p(\uq | y_i)$ and its corresponding Gaussian approximation can 
be obtained for each $y_i$. These Gaussians can then be collected into a 
mixture distribution to form the IS density. So, a possibility is to use the
following IS density:
\begin{align}
    q(\uq) = \frac{1}{n}\sum_{i = 1}^{n} \mathcal{N}\left(\uqmap^{(i)}, 
    {\left(\mathbf{H}^{(i)}_{\mathrm{GN}}\right)}^{-1}\right).
\end{align}

where $\uqmap^{(i)}$ is the MAP point corresponding to $y_i$ and
$\mathbf{H}_{\mathrm{GN}}^{(i)}$ is
the Hessian matrix of $-\log p(\uq | y_i)$ at $\uqmap^{(i)}$. Next,
we investigate whether using $n > 1$ pseudo-data points in $\DT$ leads to better performance
than using just one pseudo-data point, i.e., $n = 1$. 

\begin{figure}[htbp]
    \centering
    \begin{subfigure}[b]{0.49\textwidth}
    \includegraphics[width=\textwidth]{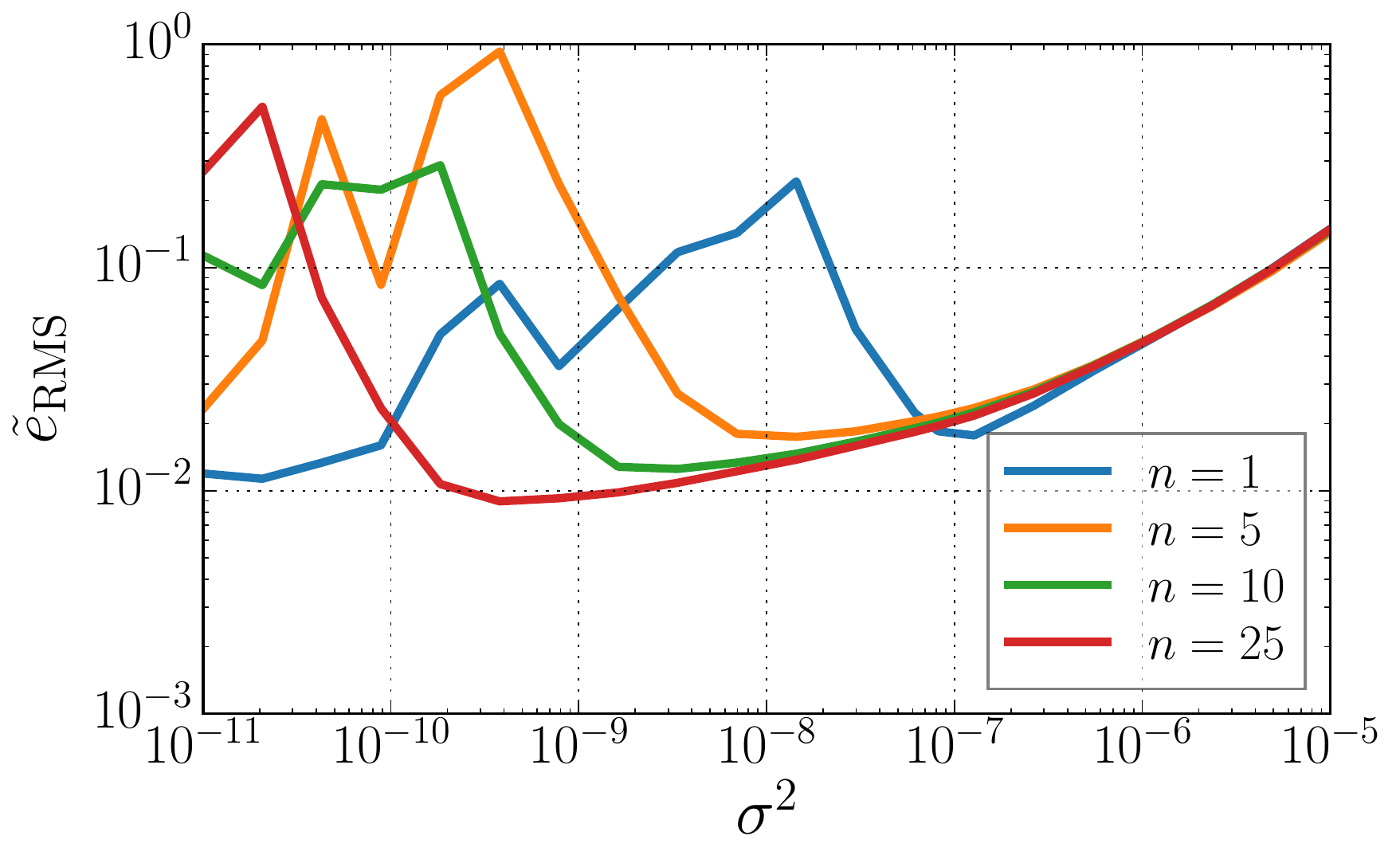} 
    \caption{Linear inverse problem.}
    \end{subfigure}
    \begin{subfigure}[b]{0.49\textwidth}
    \includegraphics[width=\textwidth]{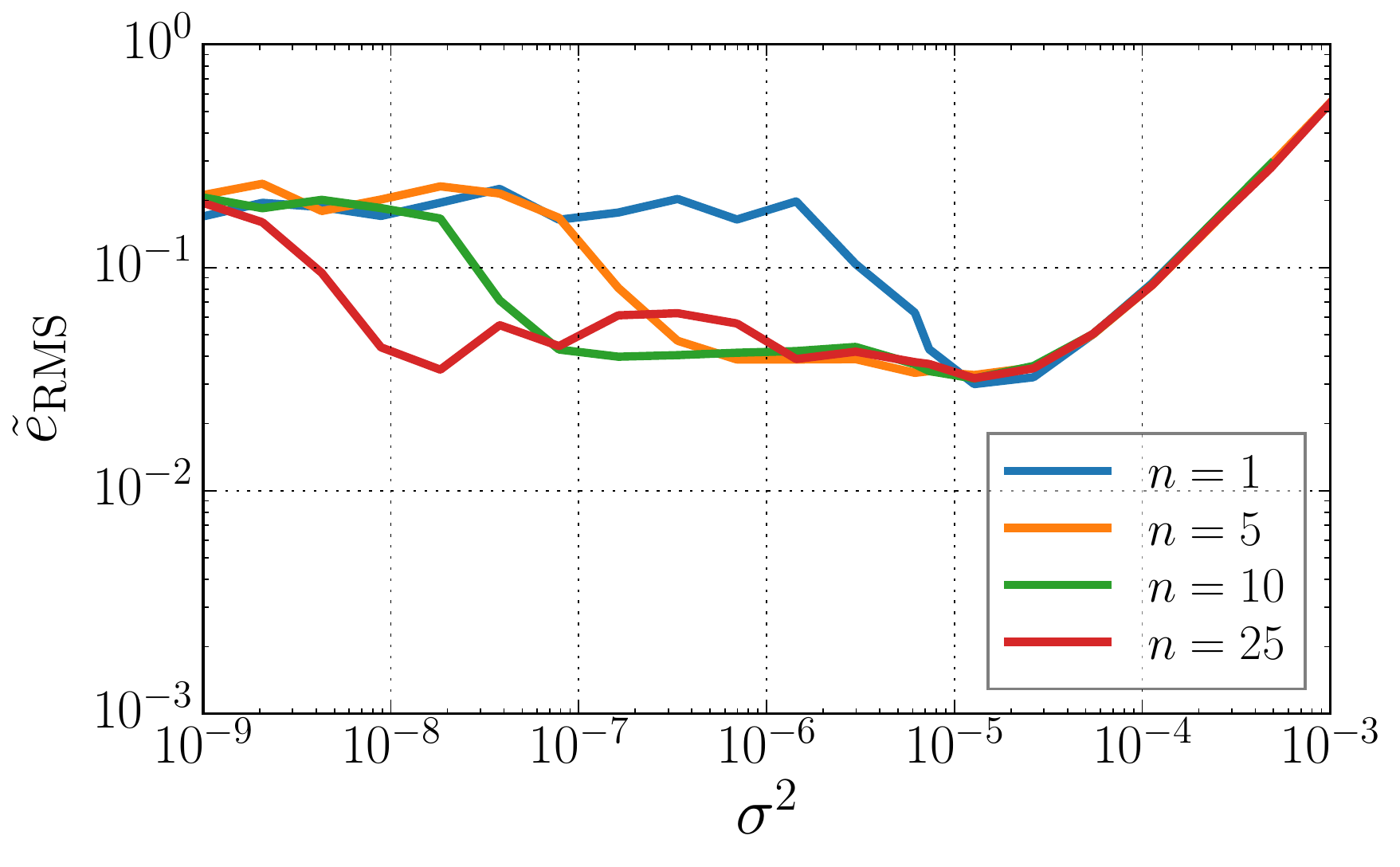} 
    \caption{Synthetic non-linear problem.}
    \end{subfigure}
    \begin{subfigure}[b]{0.49\textwidth}
    \includegraphics[width=\textwidth]{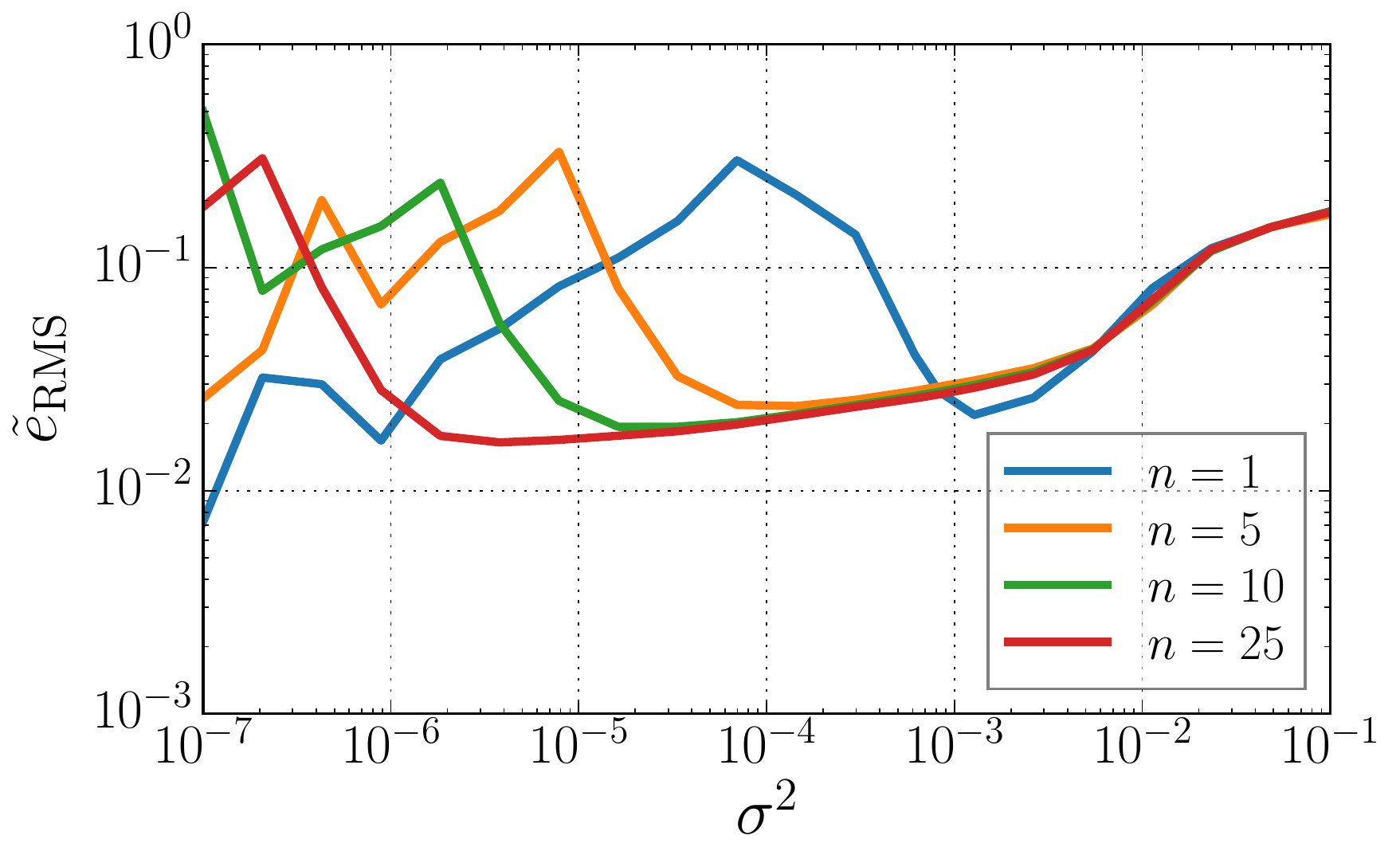} 
    \caption{Single step reaction.}
    \end{subfigure}
    \begin{subfigure}[b]{0.49\textwidth}
    \includegraphics[width=\textwidth]{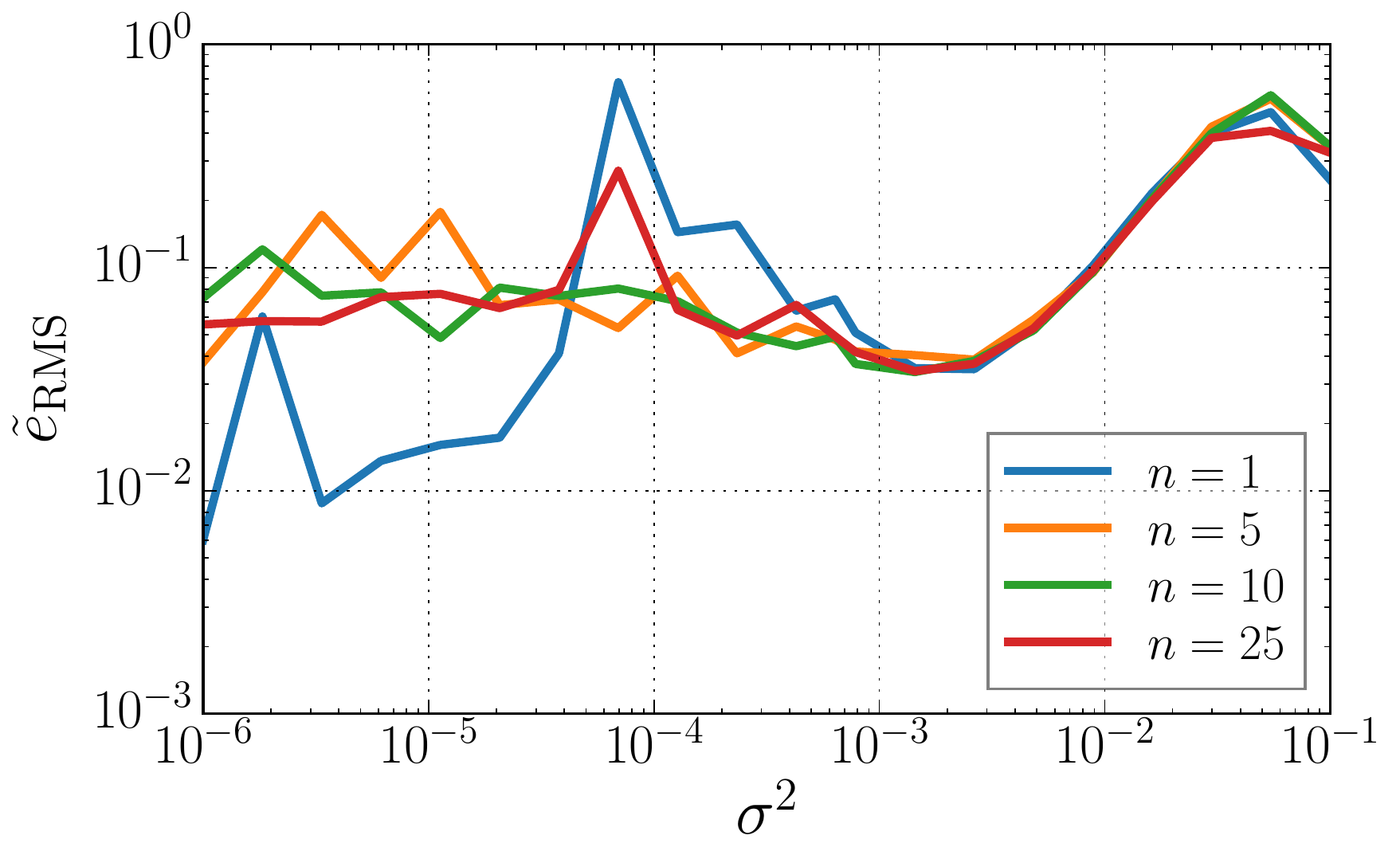} 
    \caption{Elliptic PDE.}
    \end{subfigure}
    \caption{This figure shows how $\tilde{e}_{\mathrm{RMS}}$ varies with 
        $\sigma^2$ at different values of $n$ for various forward models. At 
        fixed $\sigma^2$, we perform 50 independent simulations and report the 
        ensemble averaged $\tilde{e}_\mathrm{RMS}$. When $n = 1$, we use the 
        optimal data point obtained employing the appropriate tuning strategy
        described in the text. When $n > 1$, we select $n$ evenly spaced 
        points in $\DT$. Similar to \Cref{fig:varConvLinearNOpt1}, IS estimates are
        biased when $\sigma^2$ is small. The extent of the biased regions depends on
        $n$ and appears to decrease as $n$ increases.}
\label{fig:varConv}
\end{figure}

To ensure a fair comparison between the two cases, they must each be run using
their respective optimal parameters. When $n > 1$, the tunable parameters are 
the number of pseudo-data points, $n$, their values, $\{y_i\}_{i = 1}^{n}$, 
and the variance of the common pseudo-likelihood density, $\sigma^2$. However, 
minimizing the Kullback-Leibler distance between $q$ and $q^*$ to obtain 
parameters is no longer possible. This is because the Kullback-Leibler distance 
between two Gaussian mixtures doesn't have a closed form
expression~\cite{hershey2007approximating}. To 
proceed, given $n > 1$, we fix $\{y_i\}$ to be $n$ evenly spaced points in 
$\DT$. We then sweep over several values of $n$ and $\sigma^2$ to investigate 
whether increasing $n$ has any advantages.

Empirical evidence seems to suggest no. In \Cref{fig:varConv}, we plot the
variation of the ensemble averaged $\tilde{e}_{\mathrm{RMS}}$ with $\sigma^2$ 
at various
values of $n$ and using various forward models, both linear and non-linear
(details of the forward models are provided in the supplement in \Cref{supplement:fwdModels}).
While the  error decreases with increasing $n$ for some cases, we believe the 
decrease isn't large enough to justify the increased computational cost of 
solving additional inverse problems. 

\begin{figure}[H]
\centering
    \begin{tikzpicture}[every text node part/.style={align=center}]
    \draw[black, ultra thick] (0,1) rectangle (7,5);
    \node at (2, 3) {$\bullet$};
    \draw[dashed, ultra thick] (4, 1) -- (4.5,5)
        node[pos=0.5, pin={[pin distance=3em, pin edge=solid]
        320:{\footnotesize $f^{-1}(\mathbb{Y})$}}] {};
    \draw[dashed, ultra thick] (5, 1) -- (5.5,5);
    \path[fill=gray!50] (4, 1) to (4.5, 5) 
                               to (5.5, 5) to (5, 1)
                               to (4, 1); 
    \draw [thick] plot [smooth cycle] coordinates {
            (4.25, 1.5) (4.5, 1.75) 
            (3, 2.5) (3, 3.1) (3.5, 3.5)
            (5.0, 4.0) (4.75, 4.5)
            (1.5, 3.75) (1.5, 2.25) 
            }
            node[
            pin={[pin distance=1em, pin edge=solid] 
            below left:{\footnotesize $p(\boldsymbol{x})$}}] {};;
    \draw [thick] plot [smooth cycle] coordinates {
            (4, 4) (3.9, 4.2) 
            (1.9, 3.7)
            (1.5, 3) (1.8, 2.3) 
            (3.25, 1.8) (3.35, 2.0)
            (2.6, 2.7) (2.7, 3.2)};
     \draw [thick] plot [smooth cycle] coordinates {
            (3, 3.8) 
            (1.8, 3.3)
            (1.75, 2.6)
            (2.5, 2.2)
            (2.7, 2.2)
            (2.3, 2.8) (2.4, 3.2)
            (2.9, 3.6)};
    \coordinate (f0) at (3, 0.25) {};
\end{tikzpicture}
\caption{Arbitrary unimodal $p$. In this case, the true posterior and the ideal
    IS density are multimodal. The optimizer used for computing $\uqmap$ will
only find one of these modes. Our IS density will then only sample around that
mode, leading to incorrect estimates.}
\label{fig:tikz_banana}
\end{figure}
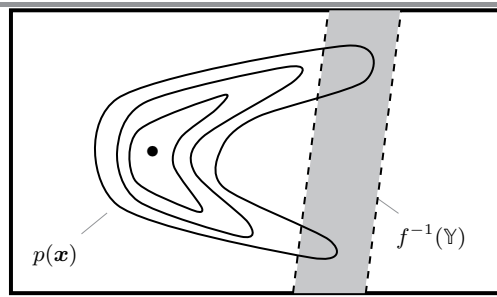

\subsection{Affine $f$, mixture-of-Gaussians $p$}
\label{subsection:linearMixGauss}

\sloppy If $p(\uq)$ is a mixture of Gaussians, $p(\uq) = \sum_{i = 1}^{k} w_i
p_i(\uq)$, then, notice that,

\begin{align}
    \mu = \int \ind_{\DT}\left(f(\uq)\right) p(\uq) \mathrm{d}\uq
        = \sum_{i = 1}^{k} w_i \int \ind_{\DT}\left(f(\uq)\right)
          p_i(\uq)\mathrm{d}\uq
        = \sum_{i = 1}^{k}w_i\mu_i
        \label{mixtureProb}
\end{align}

The contribution to $\mu$ from each component $p_i(\uq)$, $\mu_i$, can then be
calculated using BIMC as described above. If $\tilde{\mu}_i$ is the estimate
from each component, $\mu$ can be estimated as $\tilde{\mu} = \sum_{i =
1}^{k}w_i\tilde{\mu}_i$.

\subsection{Affine $f$, arbitrary unimodal $p$}

In this case, even though $p$ is unimodal, $q^*$, and the pseudo-posterior $p(\uq |
y)$, can be multi-modal (see \Cref{fig:tikz_banana}). Then, depending on the initial guess provided, 
the optimizer used for computing the MAPs may converge to only one of the 
modes. A local Gaussian characterization of the pseudo-posterior will only sample 
near this mode and all the other modes will be ignored. This will cause $\mu$ 
to be underestimated. To avoid this, we propose approximating $p$ with a 
mixture of Gaussians and then proceeding with the methodology outlined in the 
previous section. This will lead to an estimate whose accuracy is as good as 
the accuracy in approximating $p$ with a mixture of Gaussians.

\subsection{Non-linear $f$, Gaussian $p$}
\label{subsection:nonlinearGauss}

When $f(\uq)$ is non-linear, the KL divergence may not have a tractable 
closed-form expression even when only one pseudo-data point is used. Although a
sample based estimate of the KL divergence can be obtained, it would require
evaluating $f(\uq)$ for each sample, increasing the cost of constructing the IS
density. To compute the optimal parameters in this case, we instead
propose linearizing  $f(\uq)$ around the MAP point corresponding to 
an initial pseudo-data point, $y^{\mathrm{mid}} = \mrm{mid}\,\DT$, 
which we denote $\uqmap^{\mathrm{mid}}$. This necessitates solving 
another optimization problem (as in \Cref{mapEqn}), for which we 
require $\sigma^2$, a quantity we set out 
to tune in the first place. However, this $\sigma^2$ is only used to 
construct the linearization and has little bearing on subsequent sampling. 
We recommend setting $\sigma = 0.1  (y_{\max} - y_{\min})$. Once we have 
$\uqmap^{\mathrm{mid}}$, we linearize $f(\uq)$ as follows:

\begin{align}
    f(\uq) \approx f(\uqmap^{\mathrm{mid}}) + \mathbf{J}^{\mathrm{mid}} 
                    (\uq - \uqmap^{\mathrm{mid}})
\end{align}

Here, $\mathbf{J}^{\mathrm{mid}} \in \mathbb{R}^{1\times m}$ is the Jacobian 
matrix of the $f(\uq)$ evaluated at $\uqmap^{\mathrm{mid}}$. From here on, we 
can proceed to obtain the optimal parameters as in the affine case by 
identifying $\vect{v}^T \equiv \mathbf{J}^{\mathrm{mid}}$ and 
$\beta \equiv f(\uqmap^{\mathrm{mid}}) 
        - \mathbf{J}^{\mathrm{mid}}\uqmap^{\mathrm{mid}}$. Note that such a
procedure will not reveal the true optimal parameters that correspond to the
non-linear forward model. It only provides an estimate, but allows us to use
analytically derived expressions and keep computational costs low. Another
consequence of linearizing $f(\uq)$ is that it allows for the analytical
computation of the rare event probability associated with the linearized map
(\Cref{linProb}). We will refer to this estimate of $\mu$ as the linearized 
probability estimate, $\mu_{\mathrm{lin}}$.

\subsection{Non-linear $f$, mixture-of-Gaussian $p$}

This case is similar to \Cref{subsection:linearMixGauss}. Recall that $\mu$ is
just the weighted sum of probability corresponding to each component mixtures,
$\mu_i$. Each $\mu_i$ can be estimated by the method outlined above, and then
weighed and summed to obtain an estimate for $\mu$. 
 
\begin{algorithm}[H] \caption{{BIMC}}
\label{algo:bimc}
  \begin{algorithmic}[1]
      \Require $f(\uq)$, $p(\uq)$, $\DT$, $N$
      \Ensure $\tilde{\mu}$\\
      \Comment{\textcolor{gray}{\% Select optimal parameters, $y^*, \sigma^*$}}
      \State $y^{\max} \gets \max \DT$,  
      $y^{\min} \gets \min \DT$, $y^{\mathrm{mid}} \gets 0.5 (y^{\mathrm{min}} +
      y^{\mathrm{max}})$
      \State $\sigma_0 \gets 0.1(y^{\max} - y^{\min})$
      \State $\uqmap^{\mathrm{mid}} \gets \texttt{getMAP}(y_{\mathrm{mid}},
          \sigma_0)$ \Comment{{\scriptsize \textcolor{gray}{\% Minimize \Cref{negLogPost} using $y =
          y^{\mathrm{mid}}, \sigma=\sigma_0$}}}
      \State $\vect{v}^T \gets
             \left. \frac{\partial f(\uq)}{\partial \uq}\right|_{\uq = \uqmap}$
      \State $\beta \gets 
             f(\uqmap^{\mathrm{mid}}) - \vect{v}^{T}
             \uqmap^{\mathrm{mid}}$
             \State $(y^*, \sigma^*) \gets
            \texttt{minimizeKLDiv}(\vect{v}, \beta, p)$
            \Comment{{\scriptsize \textcolor{gray}{\% Minimize
            $D_{\mathrm{KL}}(q^* || q)$ as in \Cref{dkl}}}}\\

      \State
      \Comment{\textcolor{gray}{\% Build IS density using optimal parameters}}
      \State $\uqmap \gets \texttt{getMAP}(y^*, \sigma^*)$
          \Comment{{\scriptsize \textcolor{gray}{\% Minimize \Cref{negLogPost} using $y =
          y^*, \sigma=\sigma^*$}}}

      \State $\mathbf{H}_{\mathrm{GN}} \gets \texttt{getHessian}(\uqmap,
      y^*, \sigma^*)$
      \Comment{{\scriptsize \textcolor{gray}{\% Compute Hessian of
                  \Cref{negLogPost} at $\uqmap$ using $y =
          y^{*}, \sigma=\sigma^*$}}}

      \State $q(\uq) \gets \mathcal{N}(\uqmap,
      \mathbf{H}_{\mathrm{GN}}^{-1})$\\

      \State
      \Comment{\textcolor{gray}{\% Sample from $q$ to estimate $\mu$}}
      \For{$i = 1, \ldots, N$}
            \State $\uq_i \sim q(\uq)$
            \State $w_i \gets \ind_{\DT}(f(\uq_i))p(\uq_i) / q(\uq_i)$
      \EndFor

      \State $\tilde{\mu} \gets \sum_{i = 1}^{N}w_i/N$

      \State \Return $\tilde{\mu}$
  \end{algorithmic}
\end{algorithm}

\subsection{Summary}
To summarize, in this section we described how a fictitious Bayesian inverse
problem can be constructed from the components of the forward UQ problem. The
solution of this fictitious inverse problem yields a posterior whose Gaussian
approximation is our IS density. The parameters on 
which the IS density depends can be tuned by minimizing an analytical 
expression for its Kullback-Leibler divergence with respect to the ideal IS 
density. A drawback of our method is that we're restricted to nominal densities 
that are Gaussian mixtures or easily approximated by one. The overall algorithm
for arbitrary, non-linear $f$ is given in \Cref{algo:bimc}. Next, we present 
and discuss results of our numerical experiments.

\section{Experiments}

\begin{figure}[htbp]
        \centering
        \begin{subfigure}[b]{0.3\textwidth}
            \includegraphics[width=\textwidth]{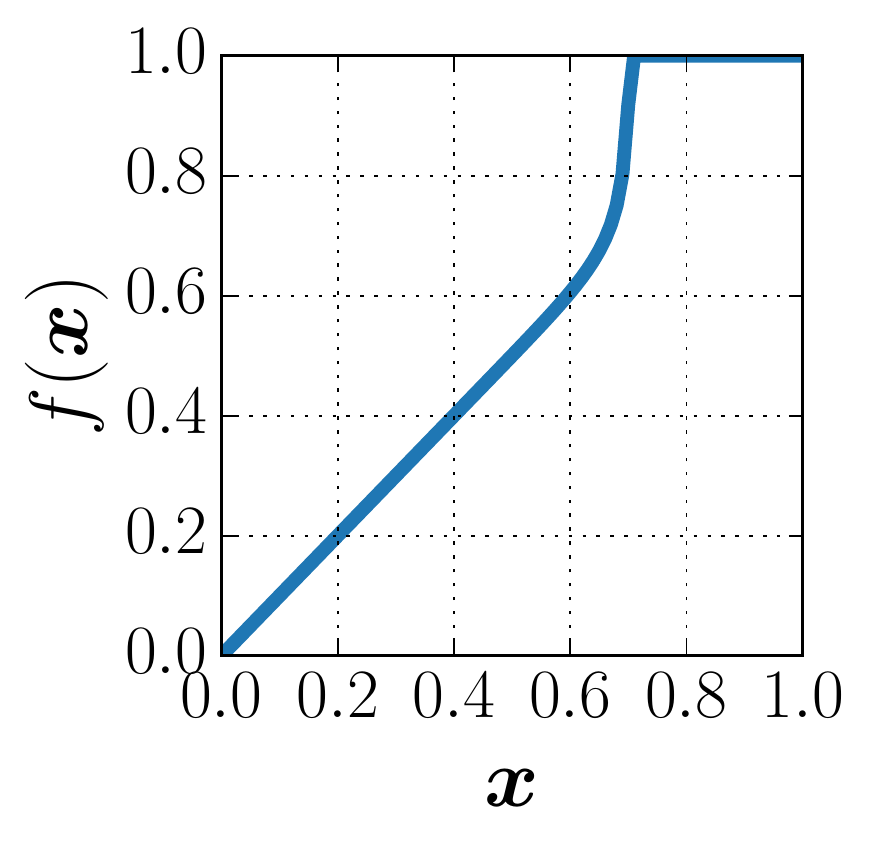}
            \caption{Single step reaction}
        \end{subfigure}
        \begin{subfigure}[b]{0.34\textwidth}
            \includegraphics[width=\textwidth]{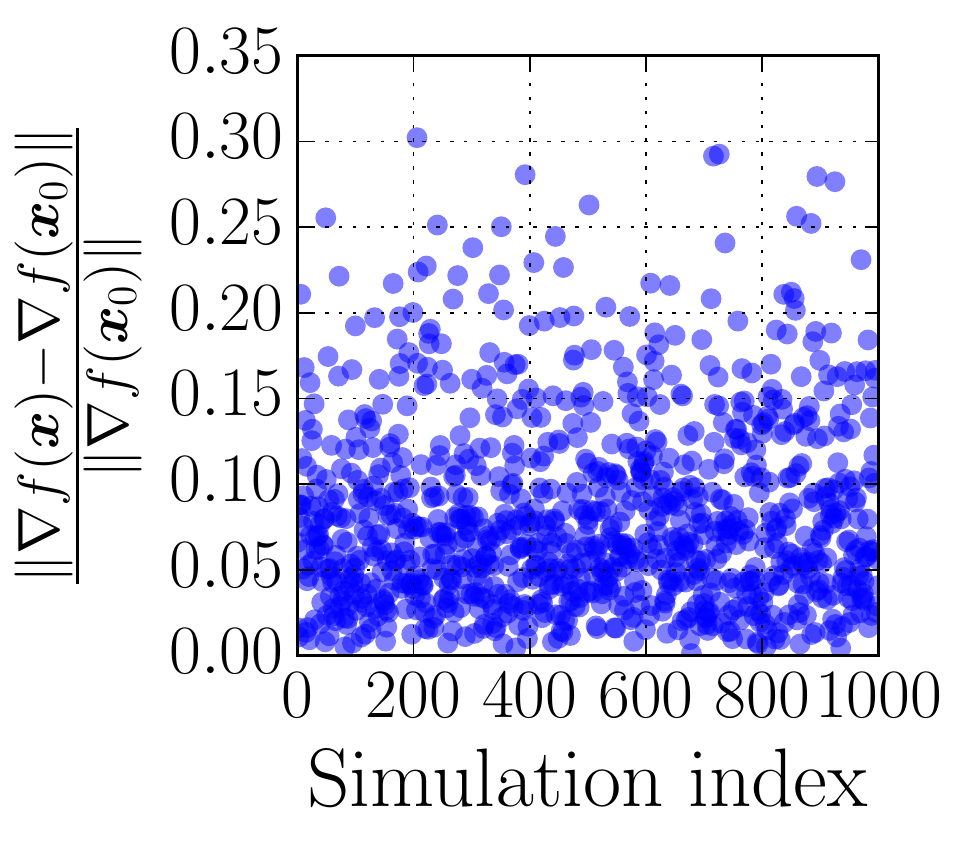}
            \caption{Autoignition}
        \end{subfigure}
        \begin{subfigure}[b]{0.34\textwidth}
            \includegraphics[width=\textwidth]{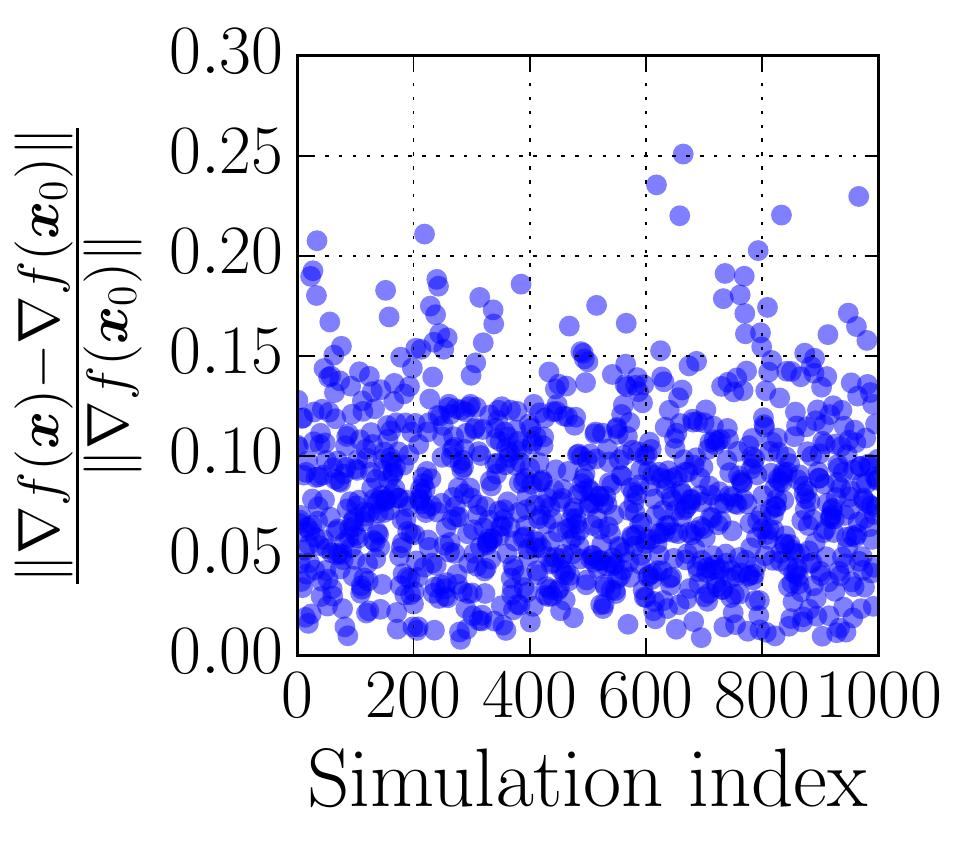}
            \caption{Synthetic non-linear}
        \end{subfigure}
        \caption{Non-linearity of $f$. For the single step reaction problem,
        we plot the full forward map $f$ for all possible values of the input
        $\uq$. For the autoignition and synthetic non-linear problems, we
        demonstrate how $\nabla f(\uq)$ varies. We draw samples, $\uq_i$,
        from $p(\uq)$, and evaluate $\|\nabla f(\uq) - \nabla
        f(\uq_0)\|/\|\nabla f(\uq_0) \|$, where $\uq_0$ is the prior mean.
        Departure from 0 of this quantity signifies the degree of non-linearity.}
        \label{fig:nonlinDemo}
\end{figure}

\label{section:experiments}
In this section, we present results that demonstrate the efficacy of our method.
We also report cases where our method fails (detailed discussion about failure
mechanisms of BIMC is postponed to the end of this section). The forward models 
we used in our experiments are briefly summarized
below. A detailed description of the models and the problem setup is given in
the supplement in 
\Cref{supplement:fwdModels}. \Cref{fig:nonlinDemo} shows the variation of
$f$ for select models and demonstrates that it is indeed non-linear.

\begin{itemize}
    \item Affine case: In this case $f(\uq)$ is a affine map  from
        $\mathbb{R}^{m}$ to $\mathbb{R}$. We choose $m = 2$ for illustration,
        and $m = 100$ for comparison with MC.
    \item Synthetic non-linear problem: In this case, $f(\uq)$ is defined to be the
        following map from $\mathbb{R}^m$ to $\mathbb{R}$. 
        \begin{align}
            f(\uq) = \vect{o}^T\vect{u},\,\text{where}
            \left(\mathbf{S} + \varepsilon \uq\uq^T\right)\vect{u} = \vect{b}.
        \end{align}
        Here, $\varepsilon \in \mathbb{R}$, $\vect{o}, \vect{u}, \vect{b} 
        \in \mathbb{R}^m$, and $\mathbf{S} \in \mathbb{R}^{m\times m}$. Again,
        $m = 2$ was chosen for illustration and $m = 10$ for comparison with MC.
    \item Single step reaction: The forward model here describes a single step
        chemical reaction using an Arrhenius type rate equation. A progress
        variable $u \in \left[0, 1\right]$ is used to describe the reaction. The
        parameter $\uq$ is the initial value of progress variable $u(0)$ and the
        observable $f(\uq)$ is the value of the progress variable at some final time 
        $t_f, u(t_f)$. Thus $f(\uq)$ is a map from $\mathbb{R}$ to $\mathbb{R}$.
    \item Autoignition: Here, we allow a mixture of hydrogen and air to undergo
        autoignition in a constant pressure reactor. A simplified mechanism with
        5 elementary involving 8 chemical species is used to describe the
        chemistry. The parameter $\uq$ is the vector of the initial equivalence
        ratio, initial temperature and the initial pressure in the reactor and
        the observable is the amount of heat released so that $f(\uq)$ is a map
        from from $\mathbb{R}^3$ to $\mathbb{R}$. 
    \item Elliptic PDE: In this system, we invert for the discretized 
        log-permeability field in some spatial domain given an observation of 
        the pressure at some point. The forward problem, that is, 
        obtaining the pressure from the log-permeability field, is 
        governed by an elliptic PDE. A finite element discretization results 
        in $f(\uq)$ being a map from $\mathbb{R}^{4225}$ to $\mathbb{R}$.
    \item The Lorenz system: Here, the forward problem is governed by the
        chaotic Lorenz equations \cite{lorenz1963deterministic}. 
        The parameter $\uq$ is the initial condition of
        the system while the observable is value of the first component of the
        state vector at some final time $t_f$. We simulate the Lorenz system
        over three time horizons, $t_f = 0.1$s, $t_f = 5$s, and $t_f = 15$s.
        BIMC fails over longer time horizons, i.e., when $t_f = 5$s and $t_f = 15$s. 
    \item Periodic case: Here, $f(\uq)$ is a periodic function in 
        $\mathbb{R}^2$, $f(\uq) = \sin(x_1) \cos(x_2)$. This is another case
        when BIMC fails. 
\end{itemize}

\paragraph{Sampling illustration}
We begin by presenting examples in low-dimensions that illustrate the quality of
samples from BIMC. In \Cref{fig:viz}, we compare samples generated using MC 
and BIMC. We also depict the ideal IS density $q^*$ in the figures, either using 
contours, or through samples. As expected, the variance of the IS density in our
method is only decreased in one data-informed direction. The extent of this
decrease depends on the variance of the pseudo-likelihood density, $p(y | \uq)$,
and a tuning algorithm based on minimizing the Kullback-Leibler distance leads
to a good fit between the spread of $q^*(\uq)$ and $q(\uq)$ in this direction.
In all other directions, the spread of $q(\uq)$ is same as that of $p(\uq)$.
This is because the pseudo-data $y$ does not inform these directions.

\begin{figure}[htbp]
    \centering
    \begin{subfigure}[b]{0.45\textwidth}
        \includegraphics[width=\textwidth]{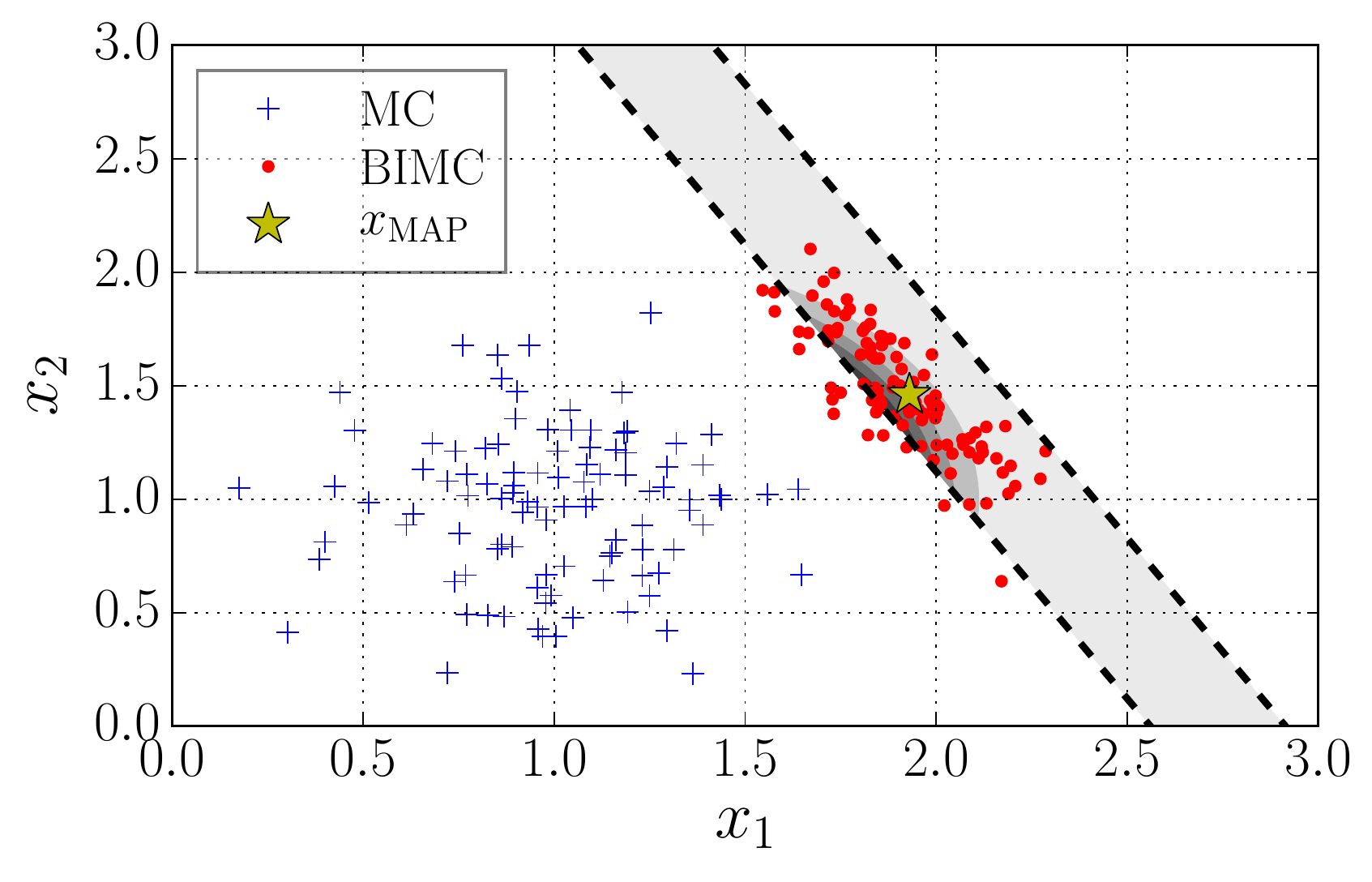} 
        \caption{The affine case.}
        \label{fig:vizLinear}
    \end{subfigure}
    ~
    \begin{subfigure}[b]{0.45\textwidth}
        \includegraphics[width=\textwidth]{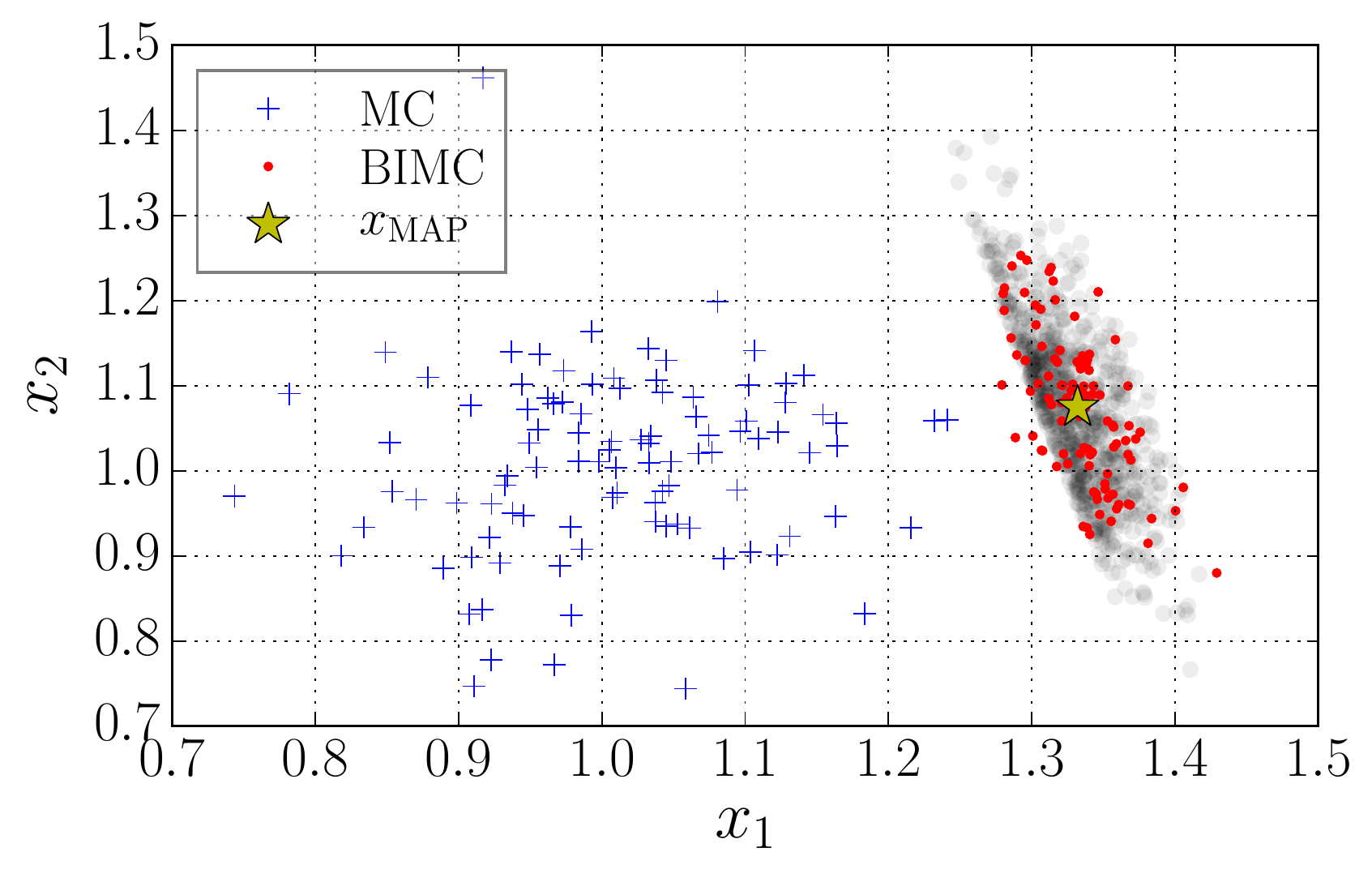} 
        \caption{The synthetic non-linear case.}
        \label{fig:vizRk1pert}
    \end{subfigure}
    ~
    \begin{subfigure}[b]{0.45\textwidth}
        \includegraphics[width=\textwidth]{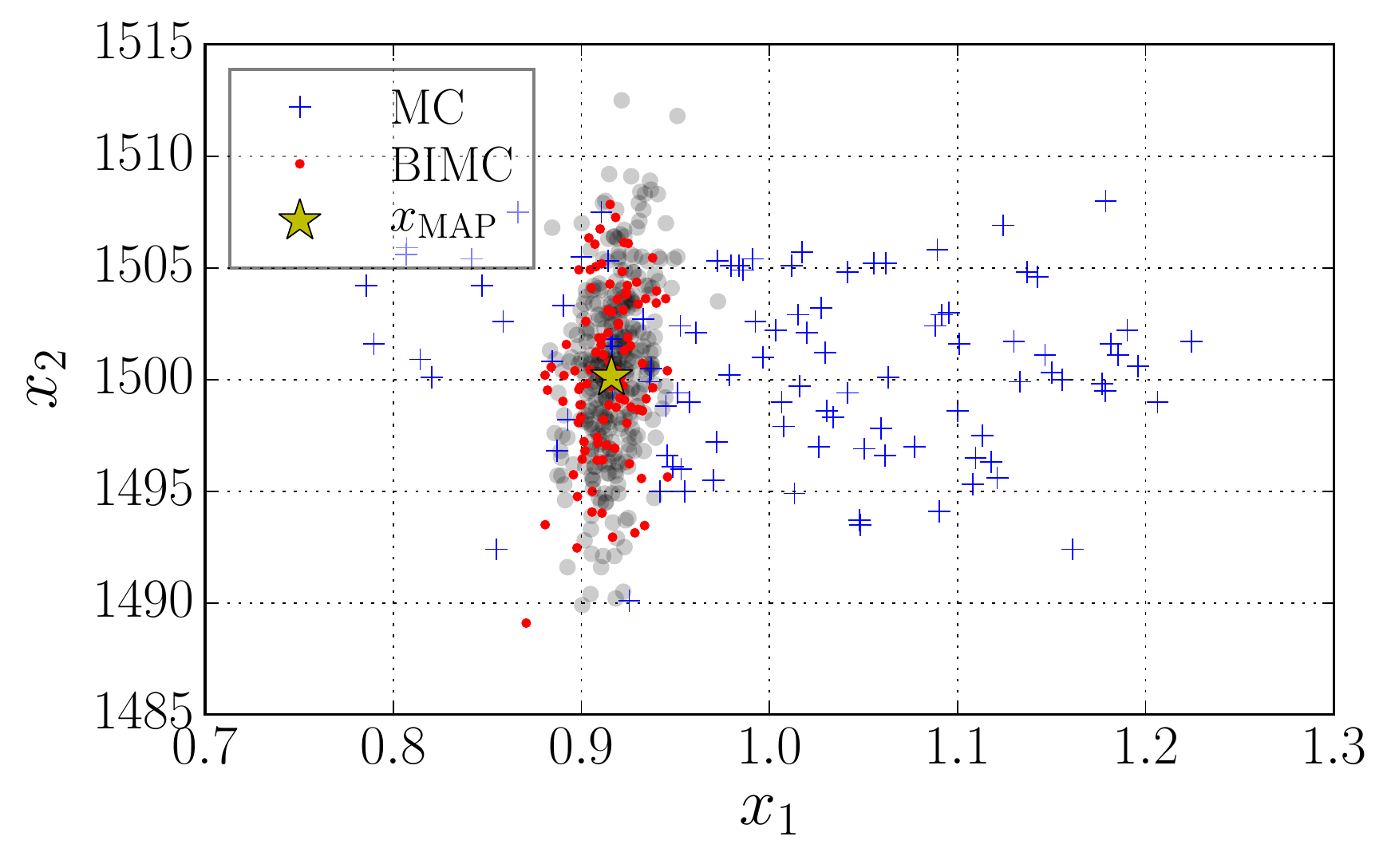} 
        \caption{Autoignition, $x_1$ - $x_2$ plane.}
        \label{fig:vizAuto12}
    \end{subfigure}
    ~
    \begin{subfigure}[b]{0.45\textwidth}
        \includegraphics[width=\textwidth]{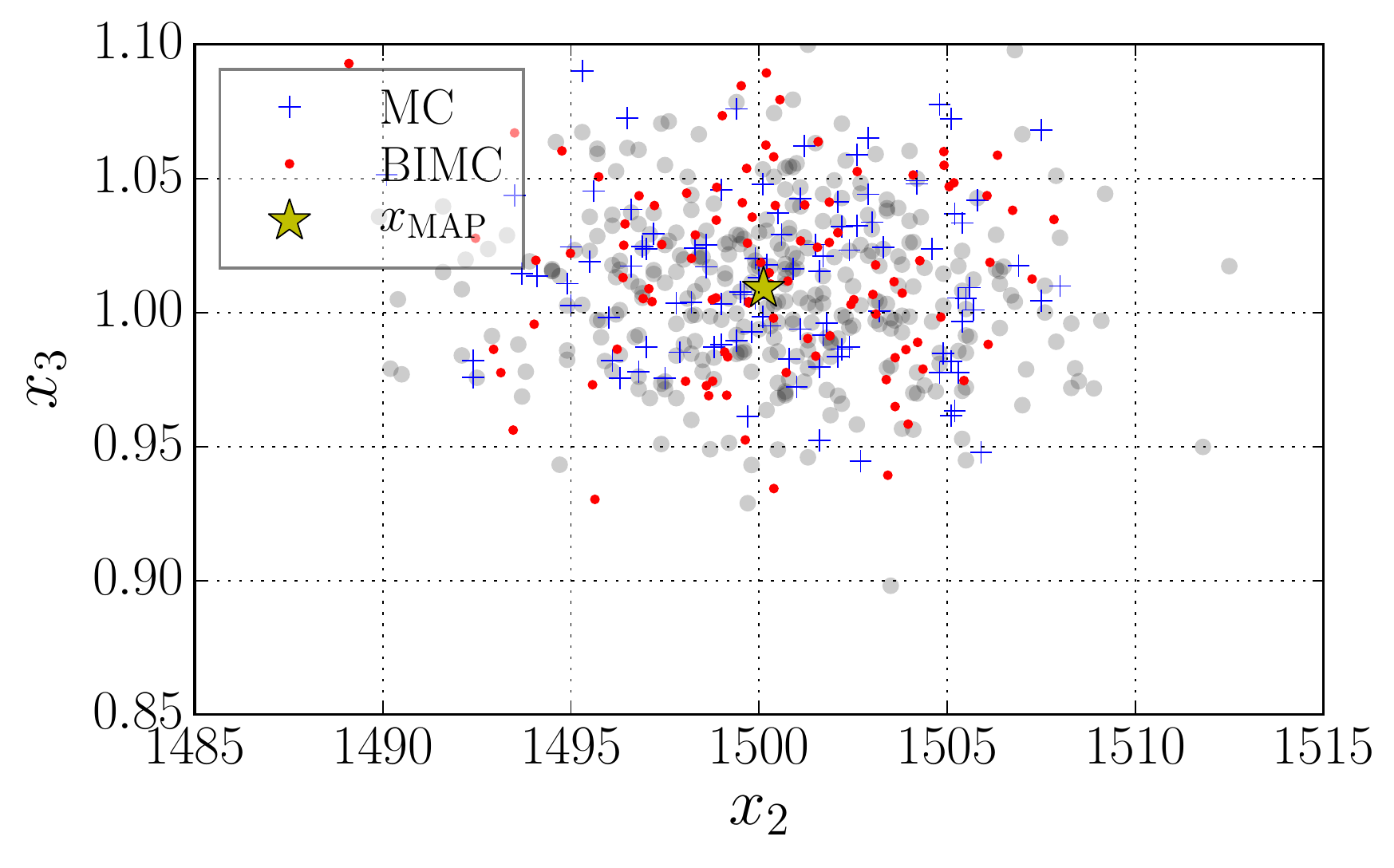} 
        \caption{Autoignition, $x_2$ - $x_3$ plane.}
   \label{fig:vizAuto23}
    \end{subfigure}
    ~
    \begin{subfigure}[b]{0.45\textwidth}
        \includegraphics[width=\textwidth]{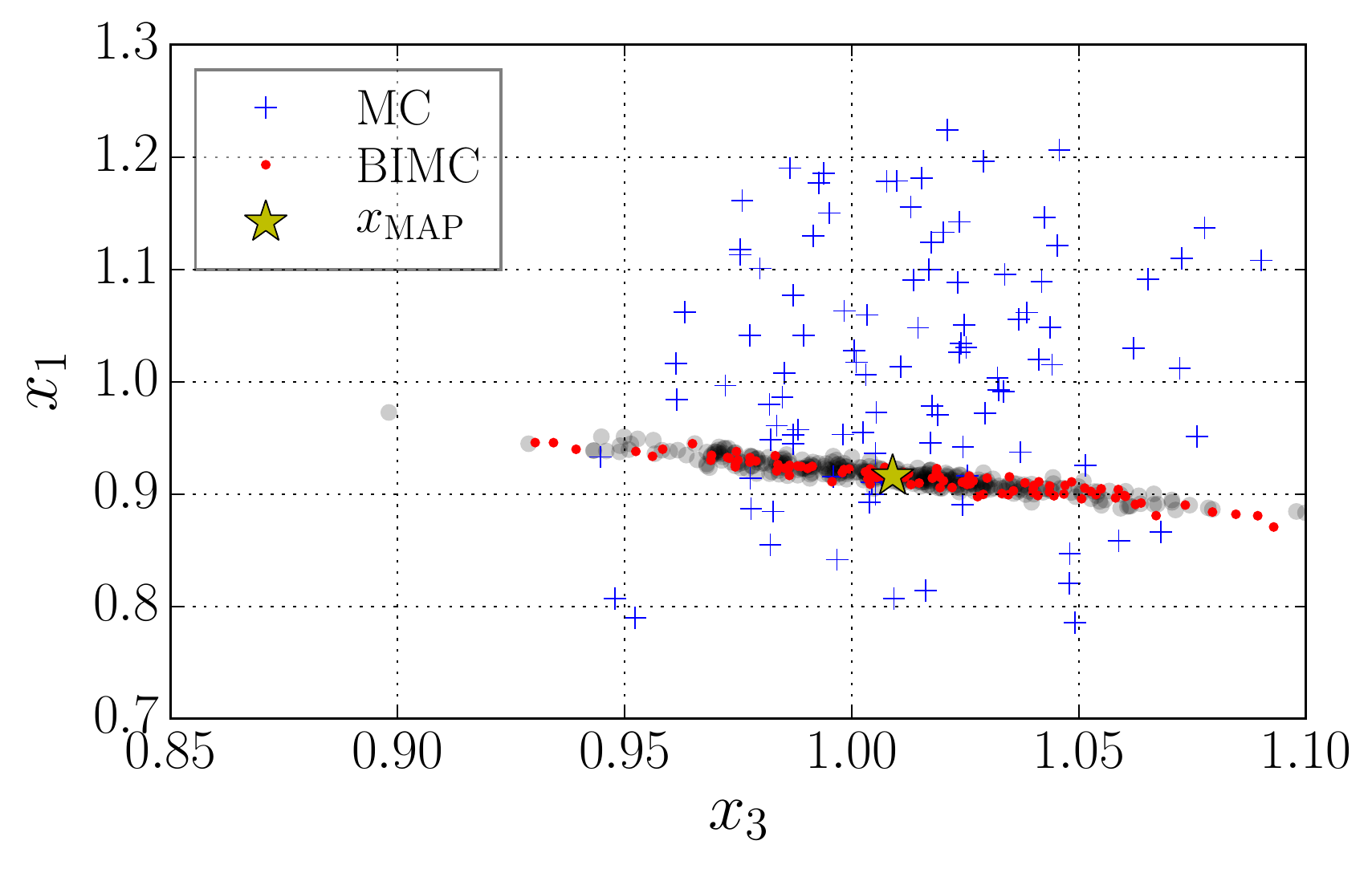} 
        \caption{Autoignition, $x_3$ - $x_1$ plane.}
   \label{fig:vizAuto31}
    \end{subfigure}
    ~
    \begin{subfigure}[b]{0.45\textwidth}
        \includegraphics[width=\textwidth]{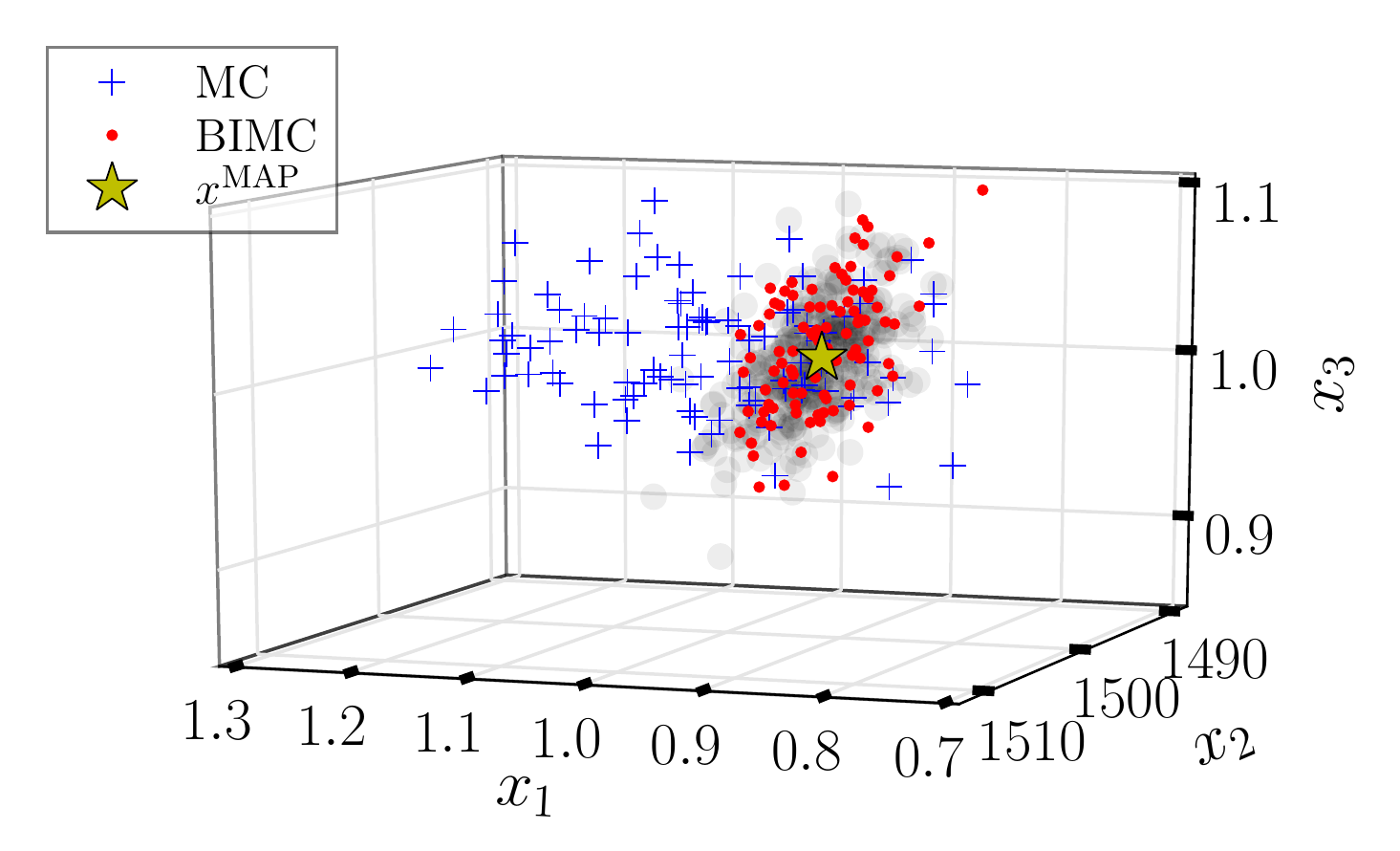} 
        \caption{Autoignition, 3D view.}
   \label{fig:vizAuto3D}
    \end{subfigure}
    \caption{Sampling illustration. In this figure, we plot 100 samples from
    $p(\uq)$ (which corresponds to vanilla MC) as well as $q(\uq)$ (which
    corresponds to BIMC) with $n = 1$ for the affine, synthetic non-linear, and
    the autoignition problems. For the affine case 
    (\subref{fig:vizLinear}), the region in $\mathbb{R}^2$ that evaluates 
    inside $\DT$ is analytically available and is plotted between the thick, 
    dashed lines. Also analytically available is the
    ideal IS density $q^*$ whose contours are plotted. For all other forward
    models, a scatter plot of samples drawn from $q^*$
    is used to represent its magnitude.} 
    \label{fig:viz}
\end{figure}

\begin{figure}[htbp]
    \centering
    \includegraphics[width=0.8\textwidth]{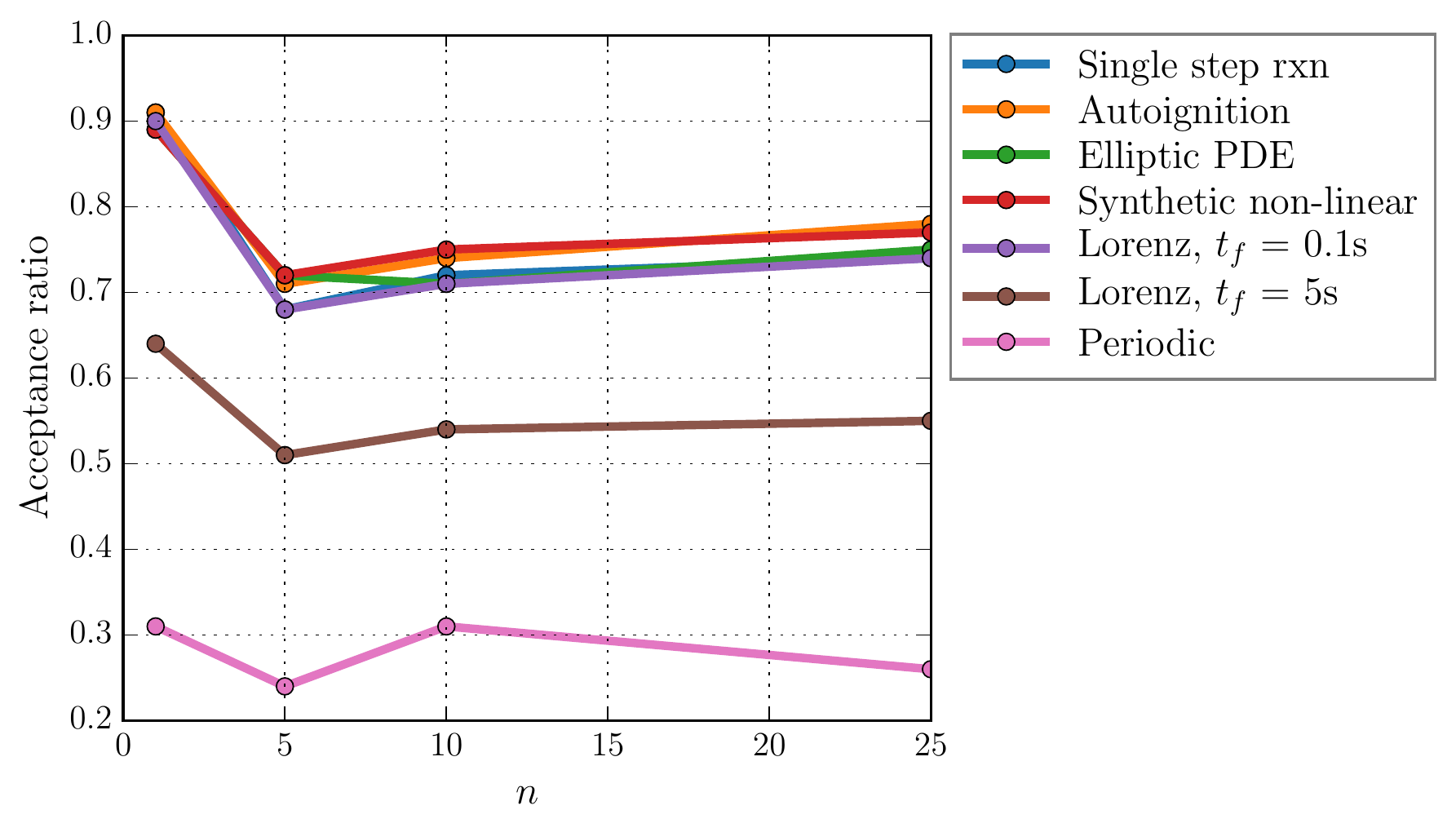} 
    \caption{Fraction of samples that evaluate inside $\DT$ for the different
    forward models at various values of $n$. In this experiment $N = 1000$ and
    $\mu$ spans two orders of magnitude, from $\mathcal{O}(10^{-2})$ to
    $\mathcal{O}(10^{-4})$. The BIMC methodology fails for the periodic and
    Lorenz, $t_f = 5$s cases, hence the lower acceptance ratio.}
\label{fig:frac}
\end{figure}

As a quantitative estimate of the quality of samples, we report the acceptance
ratio, defined as the fraction of samples that evaluate inside $\DT$. The
acceptance ratio resulting from BIMC is plotted in \Cref{fig:frac} (the
acceptance ratio from MC on the other hand is $\hat{\mu}$ by definition). 
We observe that $n = 1$ consistently leads 
to an acceptance ratio of around 90\% irrespective of $\mu$ (except in the
Periodic and Lorenz, $t_f = 5$s cases; these are failure cases and will be
discussed at the end of this section). The slight dip 
in the acceptance ratio when $n > 1$ can be attributed to the effect of always 
having $y_{\min}$ and  $y_{\max}$ as data points. Because these points 
lie at the edge of the interval $\DT$, they lead to an increased 
number of samples that are close to these limit 
points, but don't actually evaluate inside $\DT$. As $n$ increases however, the 
number of samples drawn from mixture components corresponding to these two points 
decreases and the acceptance ratio shows an upward trend. 

\paragraph{Convergence with number of samples}
Next, we compare the relative RMS error, 
$e_{\mathrm{RMS}}$, from MC and BIMC in \Cref{fig:numSamplesConv}. BIMC offers 
the same accuracy using far fewer number of samples and results in an order of 
magnitude or more of speedup. The exact speedup achieved depends on the 
magnitude of the probability. In addition, there is little asymptotic effect 
of using $n > 1$. The corresponding probability estimates are presented in 
the supplement in 
\Cref{supplement:results}.

\begin{figure}[H]
    \centering
    \begin{subfigure}[b]{0.45\textwidth}
        \includegraphics[width=\textwidth]{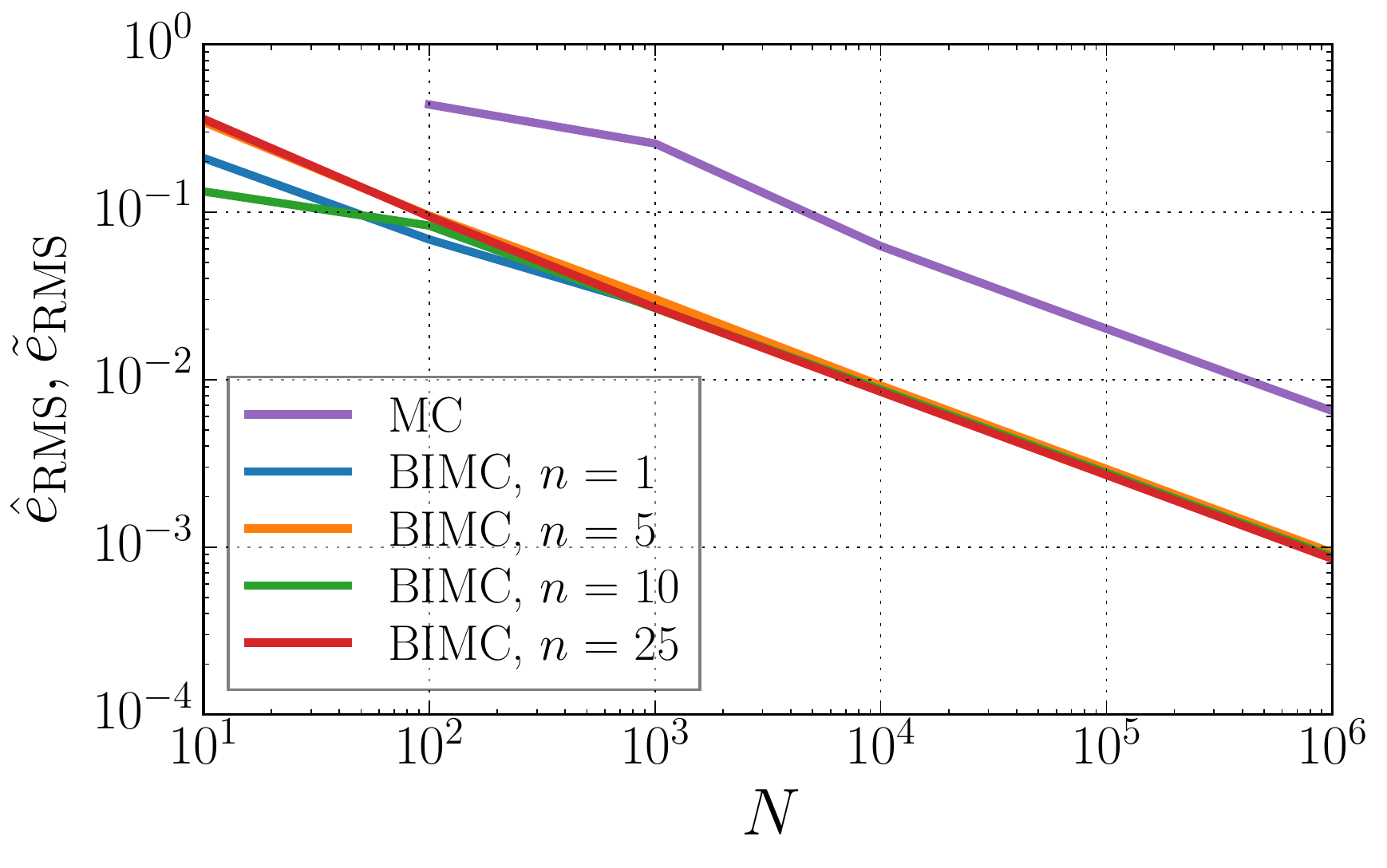} 
        \caption{Single step reaction, $\mu \approx 2.30\times10^{-2}$.}
   \label{fig:numSamplesConv1DComb}
    \end{subfigure}
    \begin{subfigure}[b]{0.45\textwidth}
        \includegraphics[width=\textwidth]{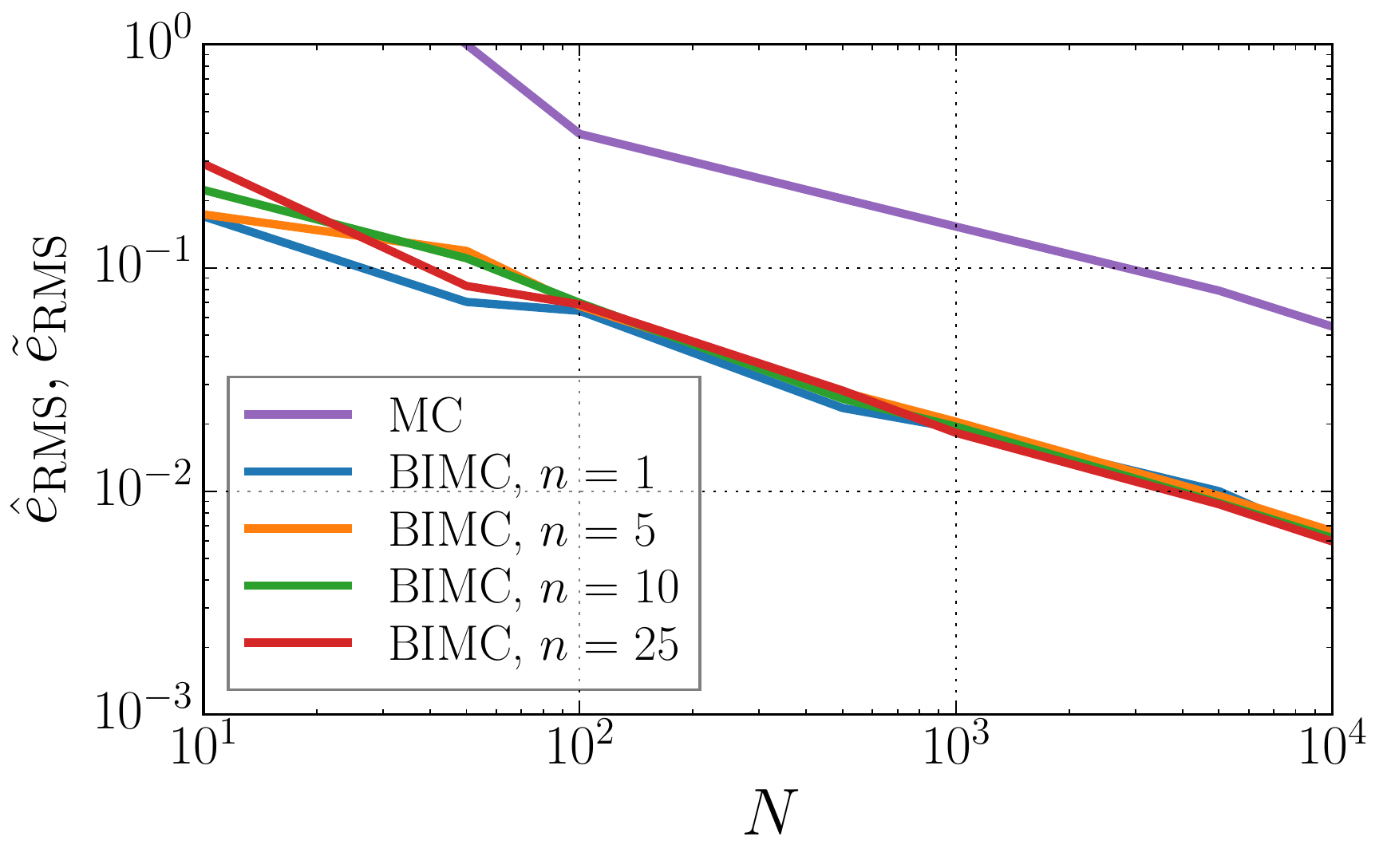} 
        \caption{Autoignition, $\mu\approx 3.24\times10^{-2}$.} 
   \label{fig:numSamplesConvAuto}
    \end{subfigure}
    \begin{subfigure}[b]{0.45\textwidth}
        \includegraphics[width=\textwidth]{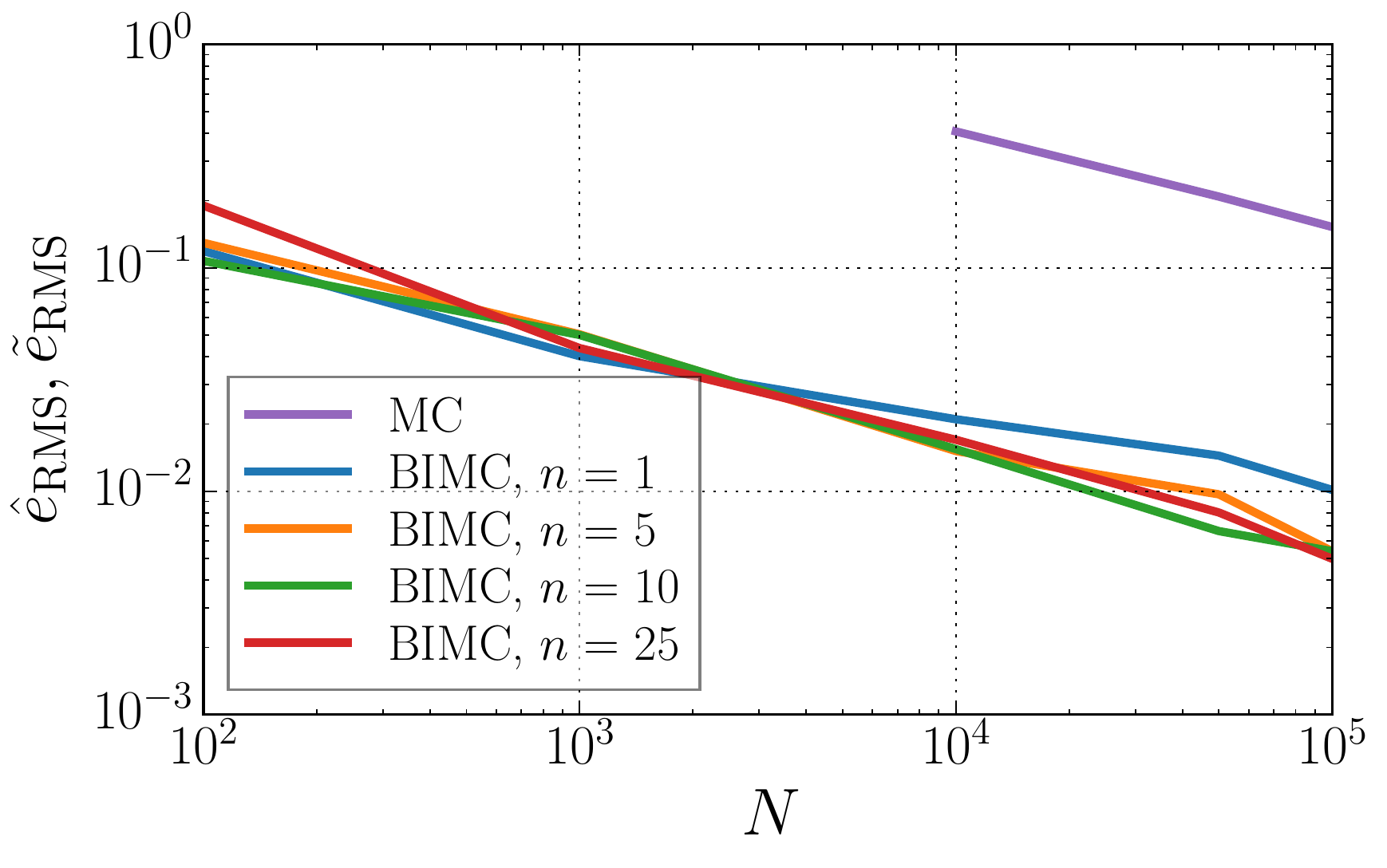} 
        \caption{Elliptic PDE, $\mu \approx 3.91\times10^{-4}$.} 
   \label{fig:numSamplesConvPoisson}
    \end{subfigure}
    \begin{subfigure}[b]{0.4\textwidth}
        \includegraphics[width=\textwidth]{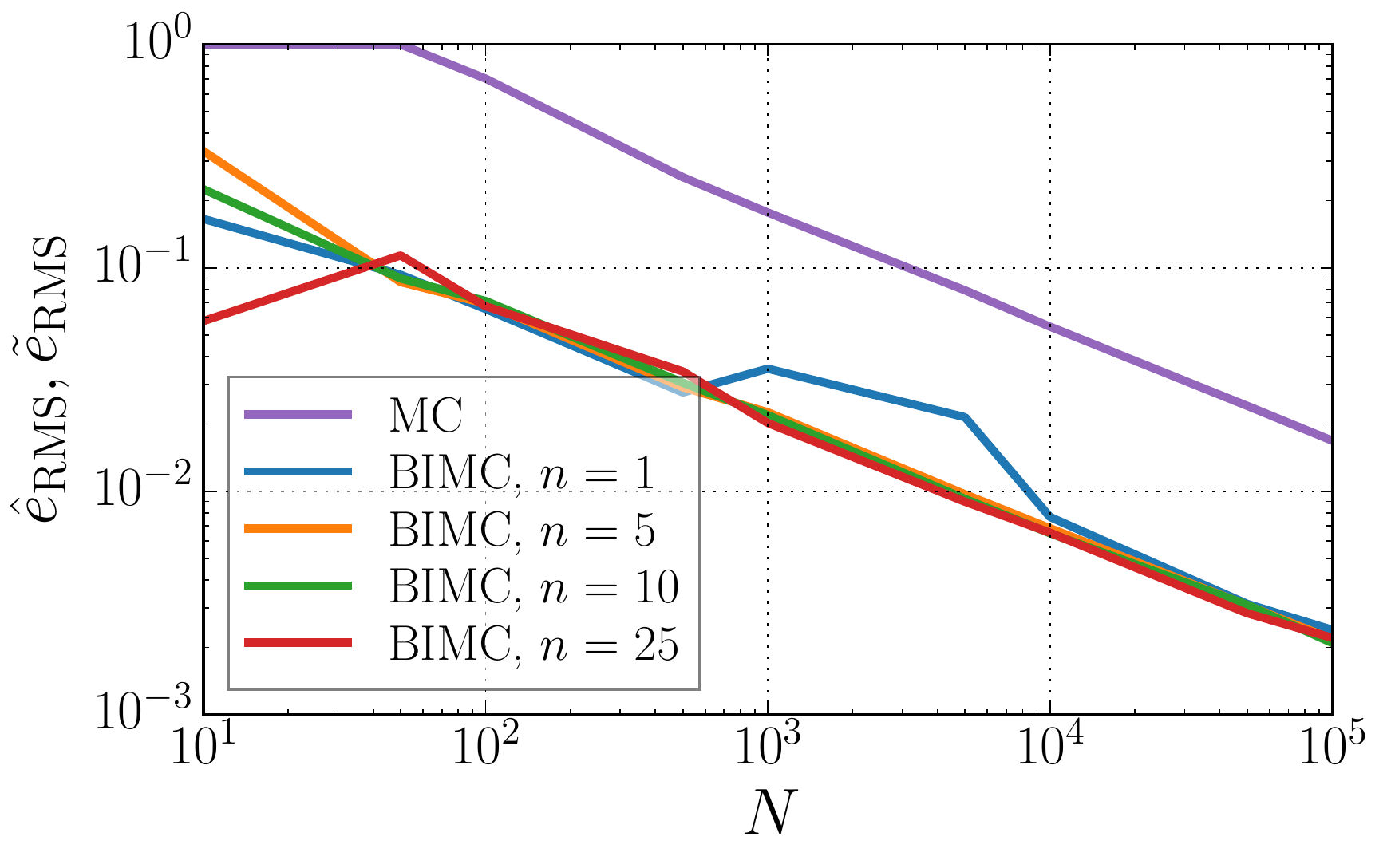} 
        \caption{Lorenz, $t_f$ = 0.1s., $\mu \approx 3.28\times10^{-2}$.} 
   \label{fig:numSamplesConvLorenzShort}
    \end{subfigure}    
    \caption{Comparison of performance of MC and BIMC. The variation of the 
        relative RMSE, $e_{\mathrm{RMS}}$, is plotted against the number of
        samples $N$. For reference, the most accurate probability estimate is
    also reported.} 
\label{fig:numSamplesConv}
\end{figure}

\paragraph{Effect of probability magnitude}
In \Cref{fig:probLevelConvRk1pert}, we study the effect of the probability
magnitude on the relative RMSE, $\tilde{e}_{\mathrm{RMS}}$. We notice that 
BIMC is only weakly dependent on the probability magnitude. This is because 
selecting parameters by minimizing $D_{\mathrm{KL}}$ leads to an IS density that 
is optimally adapted for sampling around $\DT$.

\begin{figure}[H]
    \centering
    \begin{subfigure}{0.4\textwidth}
        \includegraphics[width=\textwidth]{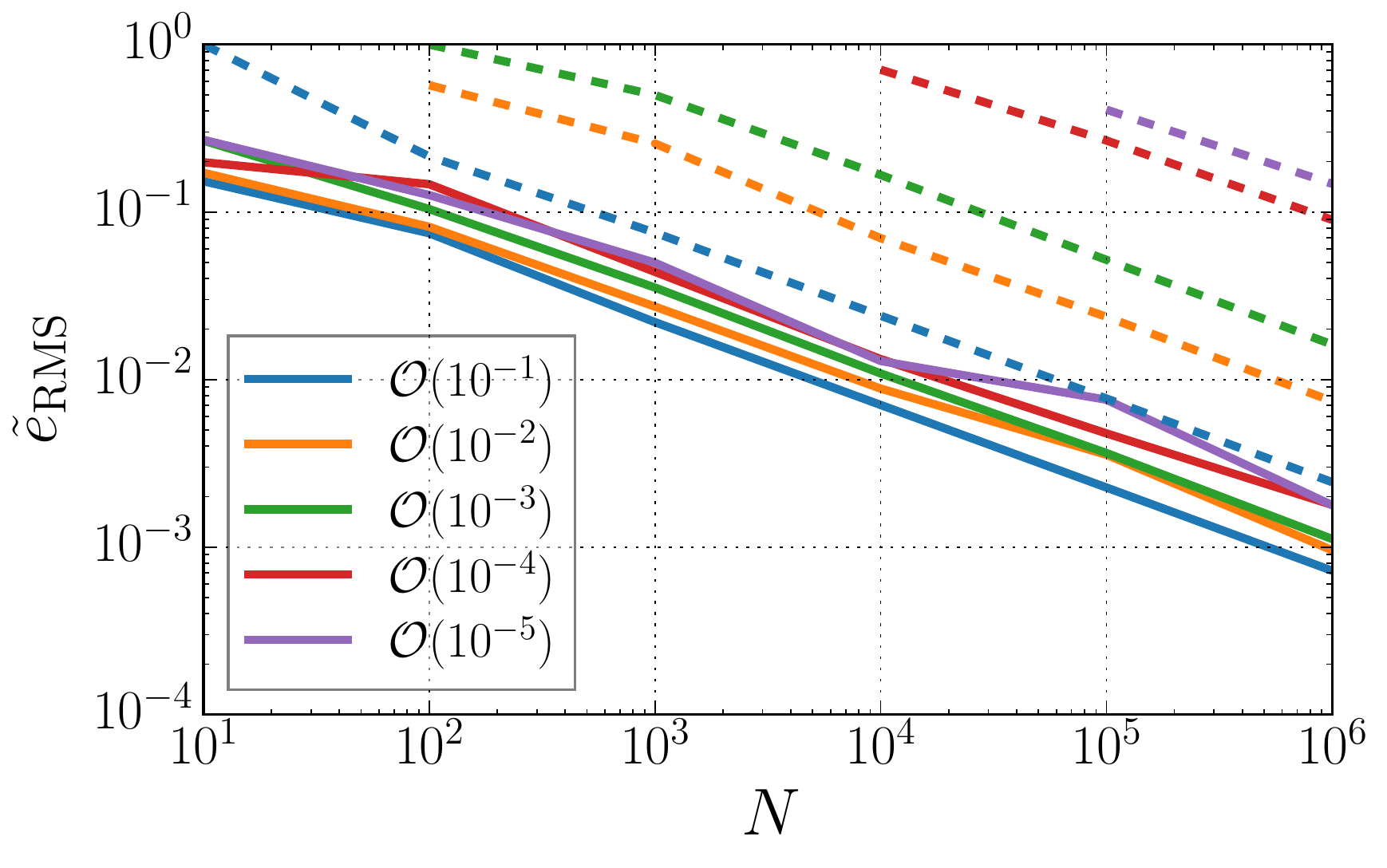} 
        \caption{Synthetic non-linear problem}
    \end{subfigure}
    \begin{subfigure}{0.4\textwidth}
        \includegraphics[width=\textwidth]{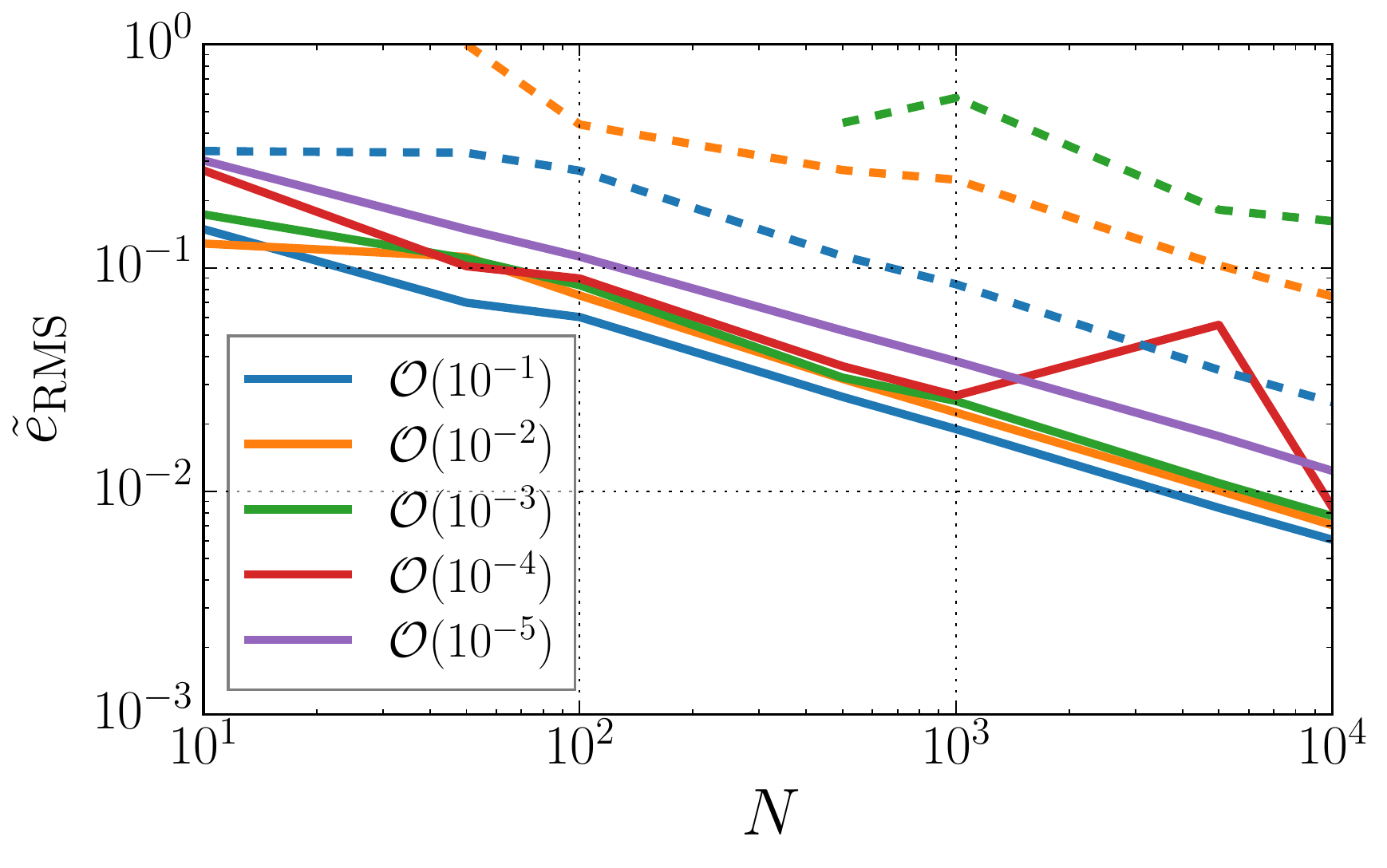} 
        \caption{Autoignition}
    \end{subfigure}
    \begin{subfigure}{0.4\textwidth}
        \includegraphics[width=\textwidth]{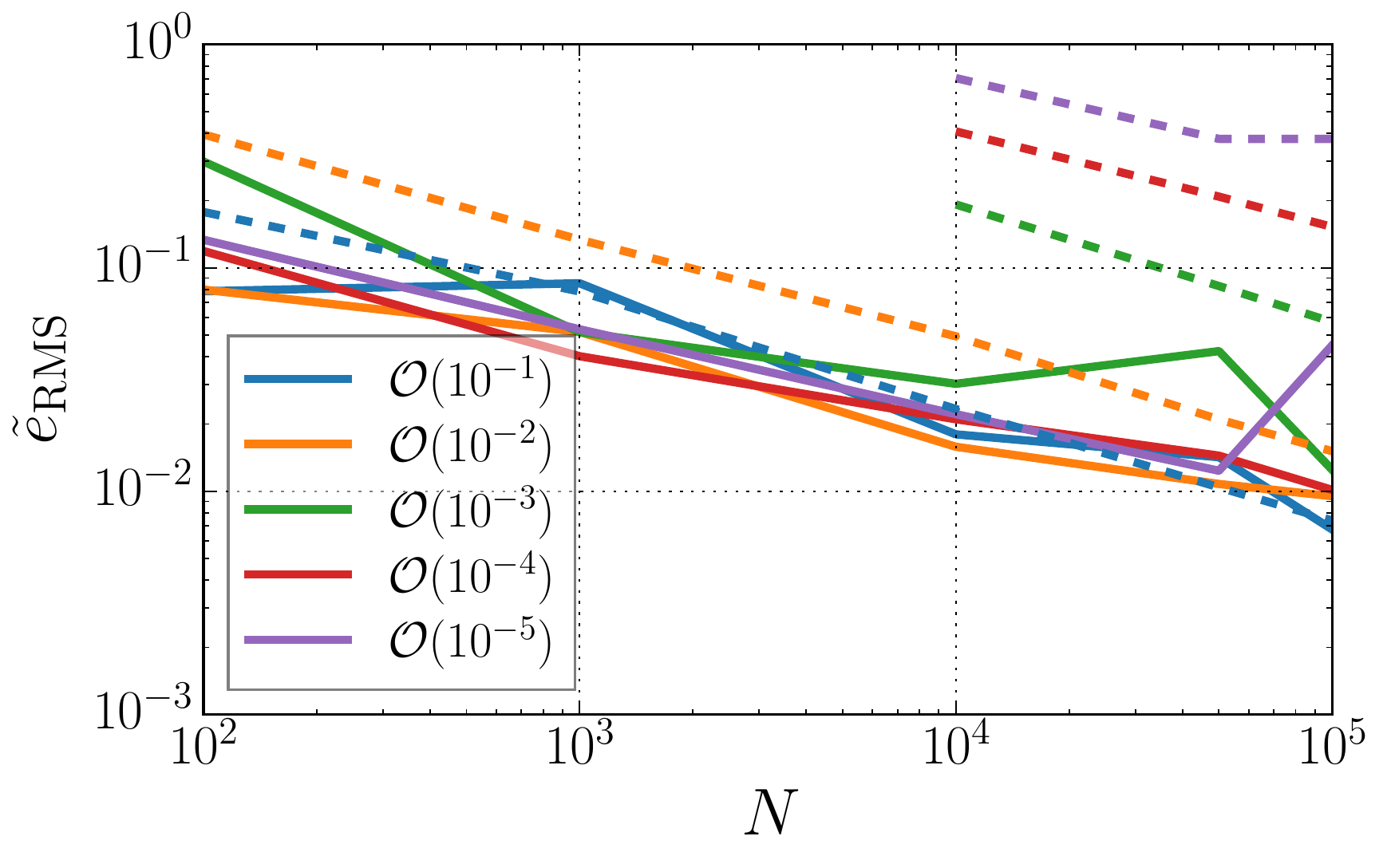} 
        \caption{Elliptic PDE}
    \end{subfigure}
\caption{Effect of varying probability levels. In this figure, we plot the
    variation of ensemble averaged relative $\tilde{e}_{\mathrm{RMS}}$ 
        with the number of samples $N$ 
        for various forward models with varying levels of probability.
        The varying probability levels are selected by moving $\DT$ to the tail
        regions of $p(\uq)$. The dashed lines indicate errors associated with MC
        while the solid lines indicate errors in BIMC. }
\label{fig:probLevelConvRk1pert}
\end{figure}

\paragraph{Extremely rare events}

In our final experiment, we push BIMC to compute probabilities of extremely rare
events. The rare events were
constructed by shifting $\DT$ further and further into the tail region of the
push forward of $p(\uq)$ under $f$. BIMC is able to compute
extremely small probabilities using a modest number of samples. This experiment
also corroborates our claim that the accuracy of our method is only weakly
dependent on the probability magnitude $\mu$. We also report the probability
estimate resulting from linearizing $f(\uq)$ around $\uq^{\mathrm{mid}}$ and
conclude that the linearized probability estimate is a good indicator of the
order of magnitude of the true probability. 

\begin{table}[H]
    \caption{Extremely rare events, $N$ = 1000.}
    \label{table:rareEvents}
    \begin{subtable}{\textwidth}
    \small
    \centering
    \begin{tabular}{c c c c c}
        \toprule
        \multicolumn{2}{c}{ BIMC, $n = 1$ } 
        & \multicolumn{2}{c}{ BIMC, $n = 5$ } 
        & Linearized\\
        \cmidrule(l{0.5em}r){1-2}
        \cmidrule(l{0.5em}r){3-4}
        \cmidrule(l{0.5em}r){5-5}
        $\tilde{\mu}^N$ & $\tilde{e}_{\mathrm{RMS}}^N$ & 
        $\tilde{\mu}^N$ & $\tilde{e}_{\mathrm{RMS}}^N$ & 
        $\mu_{\mathrm{lin}}$\\
        \cmidrule(l{0.5em}r){1-1}
        \cmidrule(l{0.5em}r){2-2}
        \cmidrule(l{0.5em}r){3-3}
        \cmidrule(l{0.5em}r){4-4}
        \cmidrule(l{0.5em}r){5-5}
        
        $3.6214 \times 10^{-3}$ & $3.24 \times 10^{-2}$
        & $3.7892 \times 10^{-3}$ & $3.87 \times 10^{-2}$
        & $5.9770 \times 10^{-3}$\\

        $2.3938 \times 10^{-6}$ & $6.00 \times 10^{-2}$
        & $2.1421 \times 10^{-6}$ & $6.10 \times 10^{-2}$
        & $1.8252 \times 10^{-6}$  \\
        
          $5.4224 \times 10^{-8}$ & $6.64 \times 10^{-2}$
        & $5.4310 \times 10^{-8}$ & $7.08 \times 10^{-2}$ 
        & $4.2072 \times 10^{-8}$\\

          $5.7578 \times 10^{-10}$ & $1.04 \times 10^{-1}$
        & $5.6271 \times 10^{-10}$ & $6.79 \times 10^{-2}$ 
        & $3.8026 \times 10^{-10}$\\
        
        \bottomrule
    \end{tabular}
    \caption{Synthetic non-linear problem}
    \label{table:rareRk1pert}
\end{subtable}
\begin{subtable}{\textwidth}
    \small
    \centering
    \begin{tabular}{c c c c c}
        \toprule
        \multicolumn{2}{c}{ $n = 1$ } 
        & \multicolumn{2}{c}{ $n = 5$ } 
        & Linearized\\
        \cmidrule(l{0.5em}r){1-2}
        \cmidrule(l{0.5em}r){3-4}
        \cmidrule(l{0.5em}r){5-5}

        $\tilde{\mu}^N$ & $\tilde{e}_{\mathrm{RMS}}^N$ & 
        $\tilde{\mu}^N$ & $\tilde{e}_{\mathrm{RMS}}^N$ & 
        $\mu_{\mathrm{lin}}$\\
        \cmidrule(l{0.5em}r){1-1}
        \cmidrule(l{0.5em}r){2-2}
        \cmidrule(l{0.5em}r){3-3}
        \cmidrule(l{0.5em}r){4-4}
        \cmidrule(l{0.5em}r){5-5}

        $4.3626 \times 10^{-3}$ & $2.47 \times 10^{-2}$
        & $4.3688 \times 10^{-3}$ & $3.52 \times 10^{-2}$
        & $4.5667 \times 10^{-3}$  \\

          $1.1158 \times 10^{-5}$ & $3.91 \times 10^{-2}$
        & $1.1278 \times 10^{-5}$ & $4.52 \times 10^{-2}$
        & $8.4646 \times 10^{-5}$  \\
        
          $7.6348 \times 10^{-7}$ & $5.86 \times 10^{-2}$
        & $8.0428 \times 10^{-7}$ & $7.76 \times 10^{-2}$ 
        & $3.8022 \times 10^{-6}$\\
        
          $3.5977 \times 10^{-10}$ & $9.69 \times 10^{-2}$
        & $3.8106 \times 10^{-10}$ & $1.54 \times 10^{-1}$ 
        & $2.6634 \times 10^{-10}$\\
        
        \bottomrule
    \end{tabular}
    \caption{Autoignition}
    \label{table:rareAuto}
\end{subtable}
\begin{subtable}{\textwidth}
    \small
    \centering
    \begin{tabular}{c c c c c}
        \toprule
        \multicolumn{2}{c}{ $n = 1$ } 
        & \multicolumn{2}{c}{ $n = 5$ } 
        & Linearized\\
        \cmidrule(l{0.5em}r){1-2}
        \cmidrule(l{0.5em}r){3-4}
        \cmidrule(l{0.5em}r){5-5}

        $\tilde{\mu}^N$ & $\tilde{e}_{\mathrm{RMS}}^N$ & 
        $\tilde{\mu}^N$ & $\tilde{e}_{\mathrm{RMS}}^N$ & 
        $\mu_{\mathrm{lin}}$\\
        \cmidrule(l{0.5em}r){1-1}
        \cmidrule(l{0.5em}r){2-2}
        \cmidrule(l{0.5em}r){3-3}
        \cmidrule(l{0.5em}r){4-4}
        \cmidrule(l{0.5em}r){5-5}
        
        $2.6422 \times 10^{-3}$ & $5.05 \times 10^{-2}$
        & $2.7045 \times 10^{-3}$ & $3.90 \times 10^{-2}$
        & $2.2526 \times 10^{-3}$\\

        $5.6726 \times 10^{-6}$ & $9.44 \times 10^{-2}$ 
        & $5.1764 \times 10^{-6}$ & $4.76 \times 10^{-2}$ 
        & $4.1409 \times 10^{-6}$ \\
        
          $8.4630 \times 10^{-9}$ & $4.94 \times 10^{-2}$ 
        & $8.6889 \times 10^{-9}$ & $5.58 \times 10^{-2}$ 
        & $8.7048 \times 10^{-9}$\\

          $8.2730 \times 10^{-10}$ & $4.99 \times 10^{-2}$ 
        & $8.0669 \times 10^{-10}$ & $7.36 \times 10^{-2}$
        & $9.1534 \times 10^{-10}$\\
        \bottomrule
    \end{tabular}
    \caption{Elliptic PDE}
    \label{table:rarePoisson}
\end{subtable}
\end{table}

\paragraph{Failure cases}

Here, we report cases which caused BIMC to fail.
\Cref{fig:failTaylorGreen} shows MC and BIMC samples for the periodic forward 
problem. Because $f(\uq)$ has circular contours, the ideal IS density $q^*$ has
support over a circular region in $\mathbb{R}^2$. This is also evident from how
the samples from $q^*$ are spread. Using a single Gaussian distribution to 
approximate this complicated density results in a poor fit, and hence, failure of the
BIMC method.  The nature of the poor fit is noteworthy. The IS density
approximates $q^*(\uq)$ well in the direction that is informed by the data. In
the directions orthogonal to this data-informed direction, it inherits the
covariance of $p(\uq)$, and as such, cannot approximate $q^*$ as it curves
around. 

Also, notice that the pre-image $f^{-1}(\DT)$ is the union of two disconnected
regions in parameter space. As a result, the ideal IS density, $q^*$, has two
modes, one near $[1, 1]^T$, and a weaker one near $[-1, 2.5]^T$. Which mode is
discovered depends on the initial guess provided to the numerical optimization
routine. Currently, there exists no robust mechanism in BIMC to discover all the
modes of $q^*$. This is also the cause of failure when the Lorenz
system is inverted over $t_f$ = 5s.

Another route to failure occurs if the optimal parameters based on an analysis
of the linearized inverse problem aren't appropriate for the full non-linear
problem. While we don't expect the two to be exactly equal, we implicitly assume
that they will be close enough, and serious problems may occur if they're not.
For instance, if the pseudo-likelihood variance from the linearized analysis is
much smaller than the (unknown) optimal pseudo-likelihood variance for the full
non-linear problem, then large IS weights may be observed, leading to biased
estimates of the failure probability. 

Finally, BIMC can also fail when the solution of the inverse problem cannot
be computed. This happens when the Lorenz problem is simulated over a much longer
time horizon, $t_f = 15$s. In this case, the optimizer failed to identify a
descent direction and converge to a
minimum. Physically, this happens because of the chaotic nature of the problem.
Since all trajectories of the Lorenz system eventually settle on the attractor,
going from a point on the attractor back in time is a highly ill-conditioned
problem.

\begin{figure}[H]
    \centering
    \includegraphics[width=0.8\textwidth]{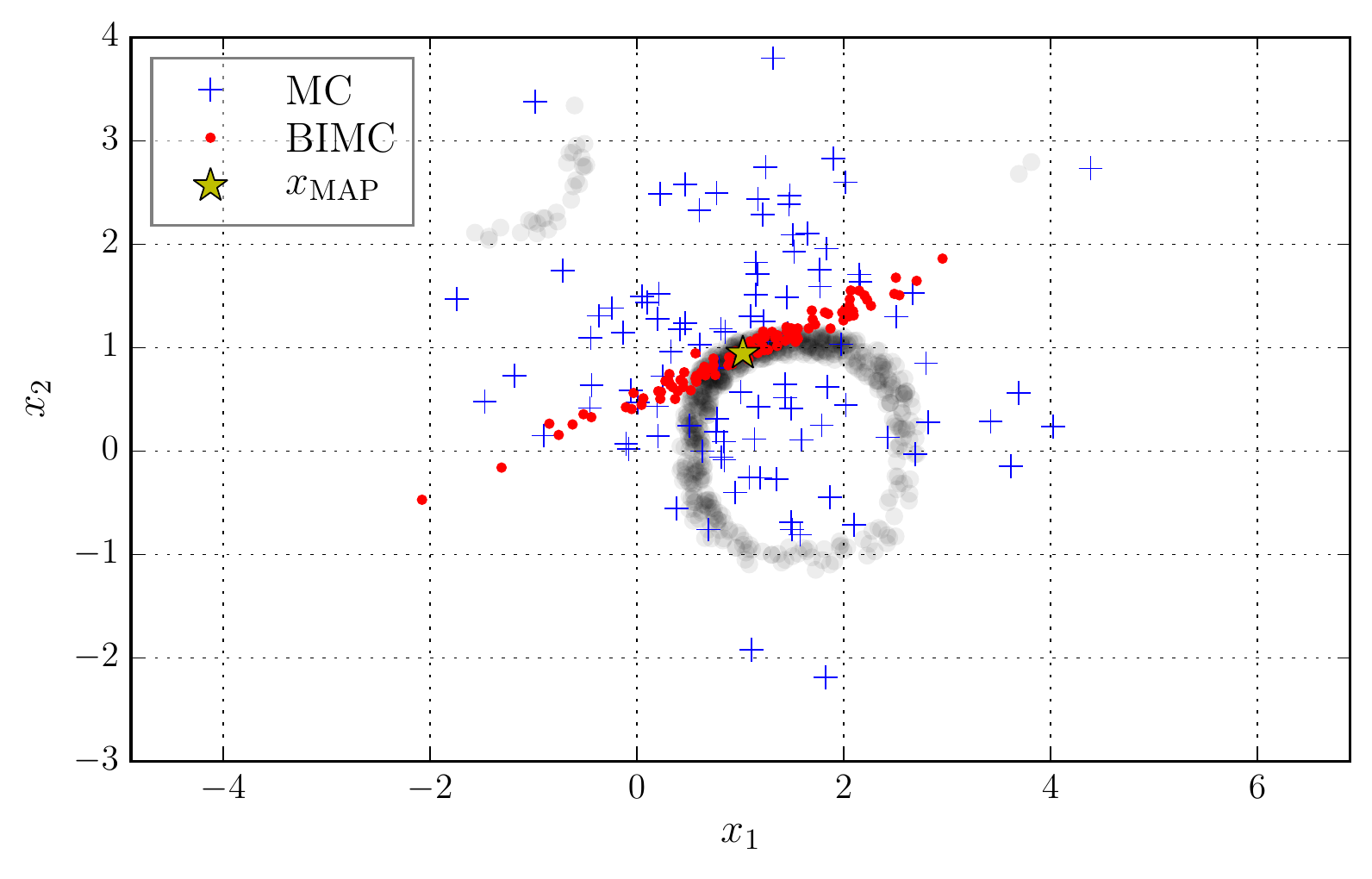} 
    \caption{A failure case. Here, $f(\uq) = \sin(x_1)\cos(x_2)$ is a periodic 
        function in $\mathbb{R}^2$. Gray markers depict samples from the ideal IS 
        density $q^*$ in this case.}
    \label{fig:failTaylorGreen}
\end{figure}

%

\paragraph{Summary}

In summary, the effectiveness of BIMC depends on the interplay between the
directions \emph{not} informed by the pseudo-data point, and the variation of 
the forward map in these directions. If, at the scale of the covariance of the
nominal density $p$,  $f(\uq)$ varies too quickly in these 
directions (like the Periodic example), the PDF constructed in BIMC 
will make for a poor IS density. On the other hand, if $f(\uq)$ varies slowly
enough (as in the synthetic non-linear, and autoignition examples) or not at all
(the affine case), then BIMC is effective. Thus, we conclude that BIMC is best
suited to forward maps that are weakly non-linear at the scale of the covariance
of the nominal density $p$. Physically, this means that the uncertainties in the
input parameters must small enough that $f$ appears almost linear. 
Note that $f$ can still be highly non-linear at
larger scales.

Apart from the forward map being only weakly non-linear, there are two 
additional requirements. The regions in parameter space that
evaluate inside $\DT$ should not be disjoint. The final and perhaps the most
important requirement is that the solution of the inverse problem must be
computable.

\section{Conclusion and future work}
\label{section:conclusion}


In this article, we addressed the problem of efficiently computing rare-event
probabilities in systems with uncertain input parameters. Our approach, called
BIMC, employs importance sampling in order to achieve efficiency. Noting the
structural similarity between the (theoretical) ideal importance sampling
density and the posterior distribution of a fictitious inference problem, our
importance sampling distribution is constructed by approximating such a
fictitious posterior via a Gaussian distribution. The approximation process
allows the incorporation of the derivatives of the input-output map into the
importance sampling distribution, which is how our scheme achieves parsimonious
sampling. Our theoretical analysis establishes that this procedure is optimal
in the setting where the input-output map is affine and the nominal density is
Gaussian. Hence, BIMC is best applied to maps that appear nearly affine at the
scale of the covariance of the nominal distribution. Our numerical experiments
support this conclusion and demonstrate that when this is the case, BIMC can
lead to speedups of several orders-of-magnitude. Experiments also reveal
several drawbacks in BIMC. We will concern ourselves with fixing these
drawbacks in part II of this paper.

\section*{Acknowledgments}
We would like to acknowledge Umberto Villa's assistance in setting up the
Elliptic PDE example. A conversation with Dr. Youssef Marzouk sparked the search
for an optimality result for BIMC.
\bibliographystyle{siamplain}
\bibliography{refs}
\end{document}


\maketitle

\section{Analysis of the affine-Gaussian case}
\label{section:supplementAffineGauss}

In this section, we offer detailed derivations of claims made in the main text. 
In what follows we assume:

\begin{itemize}
    \item $f(\uq)$ is affine and defined as $f(\uq) =
        \vect{v}^T\uq + \beta$ for some $\vect{v} \in \mathbb{R}^m$, $\beta
        \in \mathbb{R}$,
    \item the nominal density $p(\uq)$ is Gaussian with mean $\uq_0$ and
        covariance $\mathbf{\Sigma}_0$, $p(\uq) = \mathcal{N}\left(\uq_0,
        \mathbf{\Sigma}_0\right)$,
    \item the pseudo-data is some $y \in \DT$ and pseudo-likelihood density $p(y
        | \uq) = \mathcal{N}(f(\uq), \sigma^2)$.
\end{itemize}

\subsection{Kullback Leibler divergence}
\label{subsection:supplementKLDiv}

Here, we derive an analytical expression for the Kullback Leibler
divergence between the ideal IS distribution, $q^*(\uq)$, for this problem, and
the IS density, $q(\uq) = \mathcal{N}(\uqmap, \mathbf{H}_{\mathrm{GN}}^{-1})$.

By definition of the KL divergence, we have,
\begin{align}
    \begin{split}
        D_{\mathrm{KL}}\left(q^* ||  q\right) &= \int_{\mathbb{R}^{m}}
    \frac{\ind_{\DT}(f(\uq))p(\uq)}{\mu}\log\frac{\ind_{\DT}\left(f(\uq)\right) p(\uq)}{\mu q(\uq)} \mathrm{d} \uq\\
    &= \frac{1}{\mu}\int_{f^{-1}(\DT)} p(\uq) \Big( \log \frac{p(\uq)}{q(\uq)} - \log\mu\Big)\mathrm{d} \uq\\
    &= \frac{1}{\mu} \int_{f^{-1}(\DT)} p(\uq)
    \log\frac{p(\uq)}{q(\uq)}\mathrm{d} \uq - \log\mu .
    \end{split}
\end{align}
where $f^{-1}(\DT) 
= \{\uq \in \mathbb{R}^m: \vect{v}^T \uq + \beta \in \DT\}$.

Note that the pushforward density of $p(\uq)$ under $f(\uq)$ is also a Gaussian
with mean $\nu = \vect{v}^T\uq_0 + \beta$, and variance $\gamma^2 =
\vect{v}^T\mathbf{\Sigma}_0\vect{v}$. Let $\rho^2 = \sigma^2 / (\sigma^2 +
\gamma^2)$. It can be shown that 
\begin{align}
    \begin{split}
    \log\frac{p(\uq)}{q(\uq)} &= \frac{{(y - 
        f(\uq))}^2}{2\sigma^2} + \log \rho - \frac{{(y -
    \nu)}^2}{2(\sigma^2 + \gamma^2)} .
\end{split}
\end{align}

Therefore,
\begin{align}
    \begin{split}
        \int_{f^{-1}(\DT)} p(\uq) \log \frac{p(\uq)}{q(\uq)}\mathrm{d}\uq 
    &= \int_{f^{-1}(\DT)} \left(\frac{{\left(y - f(\uq)\right)}^2}{2\sigma^2} + \log\rho - \frac{{(y - \nu)}^2}{2(\sigma^2 + {\gamma}^2)}\right) p(\uq) \mathrm{d}\uq \\
    &= \int_{f^{-1}(\DT)} \frac{{\left(y -
    f(\uq)\right)}^2}{2\sigma^2}p(\uq)\mathrm{d}\uq + \left(\log\rho - \frac{{(y
- \nu)}^2}{2(\sigma^2 + {\gamma}^2)}\right)\mu .
 \end{split}
\end{align}

To continue further, we change variables to $z = f(\uq)$ and take advantage of
our knowledge of the probability density for $z$, $p_Z$. This probability
density is nothing but the push forward of $p(\uq)$ under $f$. Thus, $p_Z =
\mathcal{N}\left(\nu, \gamma^2\right)$. 

\begin{align}
    \begin{split}
        \int_{f^{-1}(\DT)} p(\uq) \log \frac{p(\uq)}{q(\uq)}\mathrm{d} \uq &=
    \int_{\DT} \frac{{\left(y - z\right)}^2}{2\sigma^2}p_Z(z) \mathrm{d} z +
    \left(\log\rho - \frac{{(y - \nu)}^2}{2(\sigma^2 + {\gamma}^2)}\right)\mu\\
                                                                           &=
    \frac{y^2}{2\sigma^2} \mu + \frac{1}{2\sigma^2} \int_{\DT} z^2 p_Z(z)
    \mathrm{d} z - \frac{y }{\sigma^2}\int_{\DT} z p_Z(z)\mathrm{d} z \\
    &+ \left(\log\rho - \frac{{(y - \nu)}^2}{2(\sigma^2 + {\gamma}^2)}\right)
    \mu .
\end{split}
\end{align}

Note that the two integrals in the final equation can be related to the mean 
and variance of a truncated normal distribution. Let $p_T(z)$ 
be the truncated distribution when $p_Z$ only has support on the interval $\DT$.
This truncated distribution is in fact the push forward under $f(\uq)$ of
$q^*(\uq)$. Thus, $p_T = {\ind_{\DT}(z) p_Z}/{\mu}$. We denote the mean and
variance of $p_T$ by $\nu_T$ and ${\gamma_T}^2$ respectively. 

Finally, 

\begin{align}
    \begin{split}
        \int_{f^{-1}({\DT})} p(\uq)\log\frac{p(\uq)}{q(\uq)}\mathrm{d} \uq &=
    \left(\frac{{(y - \nu_T)}^2 + {\gamma_T}^2}{2\sigma^2} + \log\rho -
        \frac{{(y - \nu)}^2}{2(\sigma^2 + {\gamma}^2)}\right)\mu. \\
    \end{split}
\end{align}

Computing $\nu_T$ and $\gamma_T$ is prone to catastrophic cancellation. 
We recommend using the scaled complementary error function, \texttt{erfcx}, for
this purpose, following~\cite{diaz2018moments}. The final expression for the KL divergence is, 

\begin{align}
    D_{\mathrm{KL}}(q^* || q) =  \frac{{\left(y - \nu_T\right)}^2 +
    {\gamma_T}^2}{2\sigma^2} - \frac{{(y - \nu)}^2}{2(\sigma^2 + {\gamma}^2)} +
    \log\rho - \log\mu.
    \label{dkl}
\end{align}

This expression is minimized when 

\begin{align}
    \begin{split}
        y^* &= \frac{\nu_T \gamma^2 - \nu \gamma_T^2}{\gamma^2 - \gamma_T^2}\\
        {\sigma^*}^2 &= \frac{\gamma_T^2\gamma^2}{\gamma^2 - \gamma_T^2}
    \end{split}
\end{align}

\subsection{Establishing BIMC optimality}
\label{supplement:bimc_optimality}

In this sub-section, we provide proof of \Cref{claim:bimc_optimality}.
We assume without loss of generality that $\uq_0 = \boldsymbol{0}$ and 
$\boldsymbol{\Sigma}_0 = \mathbf{I}$. Further, denote by $\hat{\vect{v}}
= \vect{v} / \|\vect{v}\|$. Now, select $\hat{\mathbf{V}}$ so that it's
columns form an orthonormal basis for 
$\mathbb{R}^m \textbackslash \,\text{span}(\hat{\vect{v}})$. Then, we have, for
any $\uq$:

\begin{align}
    \begin{split}
        \uq &= \hat{\vect{v}}\hat{\vect{v}}^T \uq + 
            \hat{\mathbf{V}}\hat{\mathbf{V}}^T\uq\\
            &= \hat{\boldsymbol{v}} x_1 + \hat{\mathbf{V}}\uq_{\perp}.\\
    \end{split}
\end{align}

where $x_1 := \hat{\boldsymbol{v}}^T\uq \in \mathbb{R}$ and 
$\uq_{\perp} := \hat{\mathbf{V}}^T\uq \in \mathbb{R}^{m - 1}$.

Using this decomposition of the parameter space, notice that 

\begin{align*}
    f(\uq) &= \vect{v}^T \uq + \beta\\
           &= \|\vect{v}\| x_1 + \beta\\
           &= \hat{f}(x_1),
\end{align*}

and, 

\begin{align}
    \label{eq:prior_decomp}
    p(\uq) &= p_1(x_1) p_{\perp}(\uq_{\perp}), 
\end{align}

where, $p_1(x_1) = \mathcal{N}\left(0, 1\right)$ and
$p_{\perp} = \mathcal{N}\left(\boldsymbol{0}_{\perp},
\mathbf{I}_{\perp}\right)$. By $\boldsymbol{0}_{\perp}$ and $\mathbf{I}_{\perp}$, 
we refer to the $m - 1$ dimensional zero vector and identity matrix respectively. 
Hence, such a decomposition of the parameter space
enables one to express ideal IS density, $q^*$, as 

\begin{align}
    \label{eq:ideal_is_decomp}
    \begin{split}
        q^*(\uq) &= \frac{\ind_{\DT}(f(\uq))p(\uq)}{\mu}\\
                 &= \frac{\ind_{\DT}(f(x_1))p_1(x_1)}{\mu}p_{\perp}(\uq_{\perp})\\
             &= q^*_1(x_1)p_{\perp}(\uq_{\perp}).
    \end{split}
\end{align}

Before we establish the optimality of the BIMC algorithm, we will require the
following result on the structure of the optimal Gaussian approximation of
$q^*(\uq_1, \uq_{\perp})$.

\begin{proposition}\label{prop:gaussian_structure}
    The Gaussian distribution closest in KL divergence to $q^*(x_1, \uq_{\perp}) =
    \ind_{\DT}(\hat{f}(x_1))p_1(x_1)p_{\perp}(\uq_{\perp})/\mu$ must be of the
    form $q_{1,\mathrm{opt}}(x_1)p_{\perp}(\uq_{\perp})$, 
    where $q_{1,\mathrm{opt}}(x_1)$ is the closest Gaussian
    approximation of $\ind_{\DT}(\hat{f}(x_1))p_1(x_1)  / \mu$.
\end{proposition}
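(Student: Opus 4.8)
\emph{Proof plan.} The plan is to recognize this as a moment-matching (M-projection) problem --- minimizing $D_{\mathrm{KL}}(q^*\,\|\,q)$ over Gaussians $q$ on $\mathbb{R}^m$ --- and to exploit the product structure $q^* = q^*_1(x_1)\,p_{\perp}(\uq_{\perp})$ from \eqref{eq:ideal_is_decomp} by splitting the divergence along the $(x_1,\uq_{\perp})$ coordinates. Since $q^*$ has full-dimensional support, any $q$ supported on a proper affine subspace gives $D_{\mathrm{KL}}(q^*\,\|\,q) = +\infty$, so it suffices to consider non-degenerate $q$, for which the Gaussian marginals and conditionals below are well defined. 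For such a $q$, factor $q(x_1,\uq_{\perp}) = q_1(x_1)\,q_{\perp\mid 1}(\uq_{\perp}\mid x_1)$ into its $x_1$-marginal and its conditional, both Gaussian. Substituting this and $q^* = q^*_1\,p_{\perp}$ into the definition of $D_{\mathrm{KL}}$, the integrand's logarithm separates, and integrating against $q^*_1\,p_{\perp}$ (using $\int p_{\perp} = \int q^*_1 = 1$) yields
\begin{equation}
    D_{\mathrm{KL}}(q^*\,\|\,q) = D_{\mathrm{KL}}(q^*_1\,\|\,q_1) + \mathbb{E}_{x_1\sim q^*_1}\big[\, D_{\mathrm{KL}}(p_{\perp}\,\|\,q_{\perp\mid 1}(\cdot\mid x_1))\, \big].
\end{equation}
Both terms are nonnegative. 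The first is at least $D_{\mathrm{KL}}(q^*_1\,\|\,q_{1,\mathrm{opt}})$ by the very definition of $q_{1,\mathrm{opt}}$ as the Gaussian closest to $\ind_{\DT}(\hat{f}(x_1))p_1(x_1)/\mu = q^*_1$, and the second is at least $0$. Hence $D_{\mathrm{KL}}(q^*\,\|\,q) \ge D_{\mathrm{KL}}(q^*_1\,\|\,q_{1,\mathrm{opt}})$ for every Gaussian $q$, while the choice $q = q_{1,\mathrm{opt}}(x_1)\,p_{\perp}(\uq_{\perp})$ makes the second term vanish and the first attain equality. So this $q$ is a minimizer, with optimal value $D_{\mathrm{KL}}(q^*_1\,\|\,q_{1,\mathrm{opt}})$.

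Next I would prove uniqueness, which is what gives the ``must be of the form'' statement its content. If $q$ is optimal, equality holds in both bounds above. Equality in the first forces $q_1 = q_{1,\mathrm{opt}}$ by uniqueness of the scalar Gaussian M-projection (an elementary computation: over Gaussians, $D_{\mathrm{KL}}(q^*_1\,\|\,q_1)$ is uniquely minimized by matching the mean and variance of $q^*_1$). Equality in the second forces $D_{\mathrm{KL}}(p_{\perp}\,\|\,q_{\perp\mid 1}(\cdot\mid x_1)) = 0$, i.e. $q_{\perp\mid 1}(\cdot\mid x_1) = p_{\perp} = \mathcal{N}(\boldsymbol{0}_{\perp},\mathbf{I}_{\perp})$, for $q^*_1$-almost every $x_1$. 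But for a jointly Gaussian $q$ the conditional $q_{\perp\mid 1}(\cdot\mid x_1)$ is Gaussian with covariance independent of $x_1$ and mean affine in $x_1$; requiring it to equal $\mathcal{N}(\boldsymbol{0}_{\perp},\mathbf{I}_{\perp})$ throughout the support of $q^*_1$ --- which has positive Lebesgue measure, since $\mu = \int_{\hat{f}^{-1}(\DT)} p_1(x_1)\,\mathrm{d} x_1 > 0$ forces $\hat{f}^{-1}(\DT)$ to contain two distinct points --- pins the conditional covariance to $\mathbf{I}_{\perp}$ and, since an affine map that agrees at two distinct points is globally constant, pins the $x_1$--$\uq_{\perp}$ cross-covariance to zero and the $\uq_{\perp}$-mean to $\boldsymbol{0}_{\perp}$. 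Together with $q_1 = q_{1,\mathrm{opt}}$, this gives $q = q_{1,\mathrm{opt}}(x_1)\,p_{\perp}(\uq_{\perp})$, completing the argument.

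The one genuinely delicate point is this last identifiability step --- ruling out Gaussians that ``tilt'' or rescale the $\uq_{\perp}$-block as a function of $x_1$ --- and it is precisely where the hypothesis $\mu > 0$ enters, guaranteeing that $q^*_1$ charges more than a single point so that an affine conditional mean is forced to be constant. Everything else is routine under the standing affine-Gaussian assumptions: existence of $q_{1,\mathrm{opt}}$, finiteness of the Gaussian and truncated-Gaussian moments implicitly involved, and the explicit moment-matching form and uniqueness of the scalar projection. As an alternative to the chain-rule decomposition, one may argue directly that the Gaussian M-projection matches the first two moments of $q^*$: since $x_1$ and $\uq_{\perp}$ are independent under $q^* = q^*_1 \otimes p_{\perp}$, the mean of $q^*$ is $(\mathbb{E}_{q^*_1}[x_1],\ \boldsymbol{0}_{\perp})$ and its covariance is block-diagonal with blocks $\mathrm{Var}_{q^*_1}(x_1)$ and $\mathbf{I}_{\perp}$, so the matched Gaussian is precisely $q_{1,\mathrm{opt}}(x_1)\,p_{\perp}(\uq_{\perp})$.
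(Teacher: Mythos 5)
Your proof is correct and follows essentially the same route as the paper's: factor $q$ into its $x_1$-marginal and conditional, apply the chain rule of KL divergence to split $D_{\mathrm{KL}}(q^*\,\|\,q)$ into $D_{\mathrm{KL}}(q^*_1\,\|\,q_1)$ plus an expected conditional divergence, and minimize each term separately. Your treatment is in fact somewhat more careful than the paper's --- you justify the uniqueness claim (ruling out Gaussians whose $\uq_{\perp}$-conditional tilts with $x_1$) and note the degenerate-support issue, both of which the paper passes over silently --- but the underlying argument is the same.
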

\begin{proof}
    Let $\mathcal{G}^m$ denote the set of all Gaussian distributions over
    $\mathbb{R}^m$. Any Gaussian distribution in $\mathbb{R}^m$ can be expressed
    as a joint density over the variables $x_1, \uq_{\perp}$, $q(x_1,
    \uq_{\perp})$. Then, the closest Gaussian approximation of $q^*(x_1,
    \uq_{\perp})$ is given by:

    \begin{align*}
        q_{\mathrm{opt}}(x_1, \uq_{\perp}) 
        = \argmin_{q \in \mathcal{G}^m} 
            D_{\mathrm{KL}}\left(q^*(x_1, \uq_{\perp}) ||
                                 q(x_1, \uq_{\perp})\right).
    \end{align*}

    Using the definition of KL divergence, we have,

    \begin{align}
        D_{\mathrm{KL}}\left(q^*(x_1, \uq_{\perp})||q(x_1, \uq_{\perp})\right) 
        = \int q^*(x_1, \uq_{\perp}) \log \frac{q^*(x_1, \uq_{\perp})}{q(x_1, \uq_{\perp})}
        \mathrm{d}x_1\mathrm{d}\uq_{\perp}.
    \end{align}

    To simplify notation, denote $\ind_{\DT}(\hat{f}(x_1))p_1(x_1)/\mu$ with 
    $q^*_1(\uq_1)$. Now, using the fact that $q(x_1, \uq_{\perp}) =
    q_1(x_1)q_{\perp}(\uq_{\perp} | x_1)$, the chain rule of KL divergence
    gives:

    \begin{align*}
        D_{\mathrm{KL}}(q^*(x_1, \uq_{\perp}) || &q(x_1, \uq_{\perp})) \\
        &= D_{\mathrm{KL}}(q^*_1(x_1) || q_1(x_1)) +
        \mathbb{E}_{q^*_1}\left[D_{\mathrm{KL}}(p_{\perp}(\uq_{\perp}) ||
        q_{\perp}(\uq_{\perp} | x_1))\right].
    \end{align*}

    Then, because both terms are positive,

    \begin{align*}
        q_{\mathrm{opt}} &= \argmin_{q \in \mathcal{G}^m} D_{\mathrm{KL}}
                                (q^*(x_1, \uq_{\perp}) || q(x_1, \uq_{\perp}))\\
            &= \argmin_{q_1(x_1)q_{\perp}(\uq_{\perp}|x_1) \in \mathcal{G}^m}
                    D_{\mathrm{KL}}(q^*_1(x_1)||q_1(x_1)) \\
            & \quad + \argmin_{q_1(x_1)q_{\perp}(\uq_{\perp}|x_1) \in \mathcal{G}^m} 
            \mathbb{E}_{q^*_1}\left[D_{\mathrm{KL}}
                (p_{\perp}(\uq_{\perp})||q_{\perp}(\uq_{\perp}|x_1))\right].
    \end{align*}

    Since the KL divergence is a positive quantity,
    $\mathbb{E}_{q^*_1}\left[D_{\mathrm{KL}}
    (p_{\perp}(\uq_{\perp}) || q_{\perp}(\uq_{\perp} | x_1))\right] \ge 0$. 
    The equality is achieved iff $q_{\perp}(\uq_{\perp} | x_1) =
    p_{\perp}(\uq_{\perp})$. Hence,
    $D_{\mathrm{KL}}(p_{\perp}(\uq_{\perp})||q_{\perp}(\uq_{\perp}|x_1))$ is
    minimized when $q_{\perp}(\uq_{\perp}|x_1) = p_{\perp}(\uq_{\perp})$. By
    inspection, the first term is minimized by $q_{1,\mathrm{opt}}(x_1)$, the
    closest Gaussian approximation of $q^*_1(x_1)$. Thus, $q_{\mathrm{opt}} =
    q_{1,\mathrm{opt}}(x_1) p_{\perp}(\uq_{\perp})$.
\end{proof}

\Cref{prop:gaussian_structure} states that for the affine-Gaussian case, 
$q_{\mathrm{opt}}$ must marginalize to
the corresponding marginal of $p(\uq)$ 
in all directions in parameter space to which
$f(\uq)$ is insensitive. Note that in the proof, the affine property of $f(\uq)$ was
never explicitly used. Hence, this result can be easily extended to arbitrary
non-linear $f(\uq)$ in the following sense. If there are \emph{global}
directions in parameter space to which $f(\uq)$ is insensitive, an optimal
Gaussian or Gaussian mixture approximation of $q^*$ must marginalize to the
corresponding marginal of $p(\uq)$ in those directions.

Next, we prove \Cref{claim:bimc_optimality}.

\subsubsection{Proof of \Cref{claim:bimc_optimality}}
\begin{proof}
    We begin by showing that BIMC produces a Gaussian distribution that indeed
    satisfies the form prescribed by \cref{prop:gaussian_structure}. 
    Let the IS distribution produced by BIMC be denoted $q_{\mathrm{BIMC}}(x)$.
    Because $q_{\mathrm{BIMC}}$ is a Bayesian posterior distribution, it has the
    following form:
    
    \begin{align}
        q_{\mathrm{BIMC}}(\uq) = \frac{p(y | \uq) p(\uq)}{p(y)},
    \end{align}

    where,

    \begin{align*}
        p(y | \uq) &\propto \exp\left(-\frac{1}{2\sigma^2}(y -
        f(\uq))^2\right)\\
        &\propto \exp \left(-\frac{1}{2\sigma^2}(y - \hat{f}(x_1))^2\right)\\
        &\propto p(y | x_1).
    \end{align*}

    Also, since $p(\uq)$ is a standard Gaussian, it can be expressed as $p(\uq)
    = p_1(\uq_1)p_{\perp}(\uq_{\perp})$.

    Therefore,
    \begin{align}
        q_{\mathrm{BIMC}}(\uq) \propto p(y | x_1)p_1(x_1)p_{\perp}(\uq_{\perp}).
    \end{align}

    Hence, the marginal distribution of $\uq_{\perp}$ when $x \sim
    q_{\mathrm{BIMC}}(\uq)$  is $p_{\perp}(\uq_{\perp})$, as required by
    \cref{prop:gaussian_structure}. Additionally, since $\hat{f}(x_1)$ is affine,
    $p(y | x_1)p_1(x_1)$ is Gaussian. All that remains to be shown is that
    \cref{param_selection} results in parameters such that $p(y | x_1)p_1(x_1)$
    is the optimal Gaussian approximation of
    $q^*_1(x_1) = \ind_{\DT}(\hat{f})(\uq_1)p_1(x_1)/\mu$. 

    Now, notice that $q^*_1(x_1) = \ind_{\DT}(\hat{f}(x_1))p_1(x_1)/\mu$ is a truncated normal
    distribution. That is because

    \begin{align*}
        q^*_1(x_1) &= \ind_{\DT}(\hat{f}(x_1))p_1(x_1)/\mu\\
                   &= \ind_{[y_{\min}, y_{\max}]}(\|\vect{v}\|x_1 + \beta) p_1(x_1)/\mu\\
                   &= \ind_{[\frac{y_{\min} - \beta}{\|\vect{v}\|}, \frac{y_{\max} -
        \beta}{\|\vect{v}\|}]}(x_1)p_1(x_1)/\mu.
    \end{align*}

    Thus, $q_1^*$ is a standard normal distribution truncated over
    $\left[\frac{y_{\min} - \beta}{\|\vect{v}\|}, \frac{y_{\max} - \beta}{\|\vect{v}\|}\right]$. 
    Denote the mean and variance of $q^*_1$ by
    $\nu_T$ and $\gamma_T^2$. Then the optimal Gaussian approximation of $q^*_1$
    is $\mathcal{N}(\phi_T, \omega_T^2)$. Hence, we have to demonstrate that the
    minimizer of \cref{param_selection} are such that $p(y | x_1) p_1(x_1)$
    is the same as $\mathcal{N}(\phi_T, \omega_T^2)$. 

    The expression for $D_{\mathrm{KL}}(q^*(x_1, \uq_{\perp}) || q(x_1,
    \uq_{\perp})$ in this case is:

    \begin{align*}
        \begin{split}
        D_{\mathrm{KL}}(q^*(x_1, \uq_{\perp}) || q(x_1, \uq_{\perp})) 
        = \text{const. } &+ \frac{\|\vect{v}\|^2}{2\sigma^2}\left(\omega_T^2 + \left(\frac{y -
                \beta}{\|\vect{v}\|} - \phi_T\right)^2\right) \\
                &- \frac{1}{2}\frac{(y - \beta)^2}{\sigma^2 + \|\vect{v}\|^2} +
                \frac{1}{2}\log\frac{\sigma^2}{\sigma^2 + \|\vect{v}\|^2}.
        \end{split}
    \end{align*}

    This expression is minimized in the BIMC algorithm. The minimum occurs at

    \begin{align}
        \label{optimal_param_affine_gauss}
        \begin{split}
            y^* &= \frac{\|\vect{v}\|}{1 - \omega_T^2}\phi_T
        + \beta\\
        {\sigma^*}^2 &= \frac{\|\vect{v}\|^2\omega_T^2}{1
        - \omega_T^2}.
    \end{split}
    \end{align}

    Since $\omega_T^2 < 1$ (Remark 2.1 in~\cite{zidek2003uncertainty}), ${\sigma^*}^2$ is a valid variance for the likelihood
    distribution.

    Now, it can be shown that,

    \begin{align*}
        \frac{p(y^* | x_1) p_1(x_1)}{p(y^*)}
                              &=
        \mathcal{N}\left(\frac{\|\vect{v}\|}{\|\vect{v}\|^2 +
            {\sigma^*}^2}(y^* - \beta), \frac{{\sigma^*}^2}{{\sigma^*}^2 +
    \|\vect{v}\|^2}\right).
    \end{align*}
    Plugging in expressions for $y^*$ and ${\sigma^*}^2$ from
    \cref{optimal_param_affine_gauss}, we obtain,

    \begin{align*}
        \frac{p(y^* | x_1) p_1(x_1)}{p(y^*)}
                              &= \mathcal{N}(\phi_T, \omega_T^2).
    \end{align*}

    Therefore, $p(y^*| x_1)p_1(x_1)/p(y^*)$ is the Gaussian closest in KL
    divergence to $\ind_{\DT}(\hat{f}(\uq_1)p_1(\uq_1) / \mu$. Hence, by
    \cref{prop:gaussian_structure},
    $q_{\mathrm{BIMC}}(\uq) = p(y^* | x_1)p_1(x_1)p_{\perp}(\uq_{\perp})/p(y^*)$
    must be the closest Gaussian distribution to $q^*(\uq_r, \uq_{\perp})$
\end{proof}

\section{Implementation details}
\label{supplement:fwdModels}
\subsection{The affine case}

We construct a affine map from $\mathbb{R}^m$ to $\mathbb{R}$. The map is
defined as 

\begin{align}
    f(\uq) = \vect{o}^T\mathbf{A}\uq,
\end{align}

where, $\vect{o} = \left(\frac{1}{m}, \frac{1}{m}, \ldots,
\frac{1}{m}\right)^T \in \mathbb{R}^m$ is an observation operator and 

\begin{align}
    \mathbf{A} = 
    \begin{pmatrix}
        1      & 0           & 0           & \cdots & 0           & \\
        0      & \frac{1}{2} & 0           & \cdots & 0           & \\
        0      & 0           & \frac{1}{3} &        & \vdots      & \\
        \vdots & \vdots      &             & \ddots & 0           & \\
        0      & 0           & \cdots      & 0      & \frac{1}{m} & \\
    \end{pmatrix} .
\end{align}

\subsubsection{Implementation}

We assume $p(\uq)$ is a Gaussian distribution with mean $\uq_0$ and covariance
$\mathbf{\Sigma}_0$, where $\uq_0 = \left(1, 1, \ldots, 1\right)^T \in
\mathbb{R}^m$ and $\mathbf{\Sigma_0} = 0.1 \mathbf{I} \in \mathbb{R}^{m\times m}$. Here,
$\mathbf{I}$ is the $m$ dimensional identity matrix. We chose $\DT = 
\left[1.2803, 1.4571\right]$ when $m = 2$ and $\DT = \left[6.2 \times 10^{-2},
6.3\times10^{-3}\right]$ when $m = 100$. The MAP point was computed using
MATLAB's \texttt{fminunc} routine.

\subsection{Synthetic non-linear case}

For quick testing, we construct the following non-linear problem from 
$\mathbb{R}^m$ to $\mathbb{R}$:
\begin{align}
    f(\uq) &= \vect{o}^T\vect{u},\,\text{where,}\\
    \left(\mathbf{S} + \varepsilon \uq\uq^T\right)\vect{u} &= \vect{b} .
\end{align}

\subsubsection{Implementation}

Again, $\vect{o} \in \mathbb{R}^m$ is an observation operator. For this
problem, we chose $\vect{o} = \left[1, 0, \ldots, 0\right]^T$. $\mathbf{S}$ is
a randomly chosen symmetric positive definite matrix, while $\vect{b}$ is a 
randomly chosen vector whose entries are distributed according to the standard 
normal distribution. We set $\varepsilon = 0.01\|\mathbf{S}\|_2$. The nominal
probability density in this case is $p(\uq)$ is a Gaussian with mean $\uq_0 =
\left(1, 1, \ldots, 1\right)^T$ and covariance $\mathbf{\Sigma_0} =
0.01\mathbf{I}$. The target interval $\DT$ is chosen to be 
$\left[1.24, 1.25\right]$ when $m = 10$ and $\left[0.919, 0.923\right]$ 
when $m = 2$. MATLAB's \texttt{fminunc} routine was
used for optimization again.

\subsection{Single step reaction}

The single step reaction is described by the following ODE:

\begin{align}
  \begin{split}
      &\frac{\mathrm{d}u}{\mathrm{d}t} = \frac{S^{*}(u)}{\tau_\mathrm{R}},\quad
      0 < t < t_f,\\
      &S^{*}(u) =  B u(1 - u)\exp\left(\frac{-T_{\mathrm{Act}}}{T_\mathrm{u} + (T_\mathrm{b} -
  T_\mathrm{u}) u}\right),
  \end{split}
  \label{1dcombRxn}
\end{align}

and we define $x = u(0)$ and $f(x) = u(t_f)$.

This equation uses the Arrhenius equation to describe the rate of a chemical 
reaction in terms of a progress variable $u$. The progress variable is 
routinely employed in the analyses of turbulent flames and is $0$ in regions 
of pure reactants and $1$ in pure products \cite{pope1985flamelet}.

Here, $\tau_{\mathrm{R}}$ is the time scale of the reaction and $S^*(u)$ is the normalized 
source term. The numerical constant in $S^*(u)$, $B$, ensures that it 
integrates to unity, $T_{\mathrm{Act}}$ is the activation temperature,
$T_\mathrm{u}$ is the temperature of the unburned reactants and $T_\mathrm{b}$ 
the temperature of the burnt products.

Since $u$ is always bounded between $0$ and $1$, the differential operator defined 
in \cref{1dcombRxn} is a map from $[0, 1]$ to $[0, 1]$. 

\subsubsection{Implementation}

For our numerical experiments, we set, $T_{\mathrm{u}}$ = 300 K,
$T_{\mathrm{b}}$ = 2100 K, $T_{\mathrm{Act}}$ = 30,000 K, $B = 6.11 \times
10^7$, $\tau_{\mathrm{R}} = 1$s. 
 
Further, we choose $\DT = \left[0.7, 0.8\right]$ and $p(x)$ = $\mathcal{N}(0.5,
0.01)$. We used MATLAB's \texttt{fmincon} routine to perform constrained
optimization in order to compute the MAP point.

\subsection{Hydrogen Autoignition}
\label{supplement:h2Autoignition}
We observe the heat released, $Q$, during autoignition of a hydrogen-air 
mixture in an adiabatic, constant pressure, fixed mass reactor. To describe 
the chemistry, we use a reduced mechanism that involves 5 elementary reactions 
among 8 chemical species - $\ce{H2}, \ce{O2}, \ce{N2}, \ce{HO2}, \ce{H}, 
\ce{O}, \ce{OH}, \ce{H2O}$ \cite{williams2008detailed}. We assume- 

\begin{itemize}
  \item reactants are ideal gases,
  \item there are no spatial gradients of temperature or species concentrations,
  \item the volume of the reactor can change to keep the pressure constant,
  \item only $\ce{H2}, \ce{O2}$, and \ce{N2} are present in the reactor initially.
\end{itemize}

Then specifying the pressure ($P$), temperature ($T$), and equivalence ratio 
($\phi$, defined as $\frac{[\ce{H2}]}{2[\ce{O2}]}$) is sufficient to completely 
define the initial state of the system. It is this triad that we define as our 
parameter vector, $\uq = \left(\phi, T, P\right)^T$. As stated 
earlier, the observable is the total heat released $Q$. 

\subsubsection{Implementation}
We assume $p(\uq)$ is a Gaussian with mean $\uq_0$ and covariance $\Sigma_0$,
where, 
\begin{align}
    \begin{split}
    \uq_0 &= \begin{bmatrix}
            1\\1500\\1.01325
          \end{bmatrix}, \\
    \Sigma &= \begin{bmatrix}
                     0.01 & 0     & 0\\
                     0    & 15.0  & 0\\
                     0    & 0     & 0.00101
              \end{bmatrix} .
    \end{split}
\end{align}
In addition, we select $\DT = [21000, 22000]$. The initial volume of the 
reactor is set to $V_0 = 1 \mathrm{m^3}$. We use MATLAB's inbuilt
\texttt{fminunc} algorithm to perform the non-linear optimization.

\subsubsection{Summary of equations}

\begin{table}[h]
\small
\caption{Reduced chemistry for hydrogen autoignition. Reaction rate constant 
$k = AT^b e^{-E/RT}$. Units are mol, cm, s, K, kJ. Chaperone efficiencies are 
$2.5$ for \ce{H2}, $16.0$ for \ce{H2O} and $1.0$ for all other species. 
Troe falloff with $F_{cent} = 0.5$ is assumed for reaction 5.}
\label{table:rxnMech}
\centering
\begin{tabular}[c]{l l l l l l}
\toprule
No. & Reaction &  & $A$ & $b$ & $E$\\
\midrule
1 & \ce{H2 + O2 -> H + HO2} &  & $2.69 \times 10^{12}$ & 0.36 & 231.86\\
2 & \ce{H + O2 -> OH + O}   &  & $3.52 \times 10^{16}$ & -0.7 & 71.4\\
3 & \ce{O + H2 -> H + OH}   &  & $5.06 \times 10^{4}$ & 2.7 & 26.3\\
4 & \ce{OH + H2 -> H + H2O} &  & $1.17 \times 10^{9}$ & 1.3 & 15.2\\
\multirow{2}{*}{5} & \multirow{2}{*}{\ce{H + O2 + M -> HO2 + M}} & $k_0$ & $5.75 \times 10^{19}$ & -1.4 & 0.0\\
& & $k_{\infty}$ & $4.65 \times 10^{12}$ & 0.4 & 0.0\\
\bottomrule
\end{tabular}
\end{table}

Williams (\cite{williams2008detailed}) identified a set of 5 elementary steps to study autoiginition of hydrogen-air mixtures. These elementary steps are given in \cref{table:rxnMech}. 

The net rates of production of each species, $\dot{S}_i$,  are given below:

\begin{align}
  \begin{split}
    &\dot{S}_{\ce{H2}} = -\dot{\omega}_1 - \dot{\omega}_3 - \dot{\omega}_4,\\
    &\dot{S}_{\ce{O2}} = -\dot{\omega}_1 - \dot{\omega}_2 - \dot{\omega}_5,\\
    &\dot{S}_{\ce{O}} = \dot{\omega}_2 - \dot{\omega}_3,\\
    &\dot{S}_{\ce{H}} = \dot{\omega}_1 - \dot{\omega}_2 + \dot{\omega}_3 + \dot{\omega}_4 - \dot{\omega}_5,\\
    &\dot{S}_{\ce{OH}} = \dot{\omega}_2 + \dot{\omega}_3 - \dot{\omega}_4,\\
    &\dot{S}_{\ce{HO2}} = \dot{\omega}_1 + \dot{\omega}_5,\\
    &\dot{S}_{\ce{H2O}} = \dot{\omega}_4\\
  &\dot{S}_{\ce{N2}} = 0,
  \end{split}
\end{align}

where $\omega_i$ is the rate of the $i$th reaction in \cref{table:rxnMech}:

\begin{align}
  \begin{split}
    &\dot{\omega}_1 = k_{f,1} [H_2][O_2]\\
    &\dot{\omega}_2 = k_{f,2} [H][O_2]\\
    &\dot{\omega}_3 = k_{f,3} [O][H_2]\\
    &\dot{\omega}_4 = k_{f,4} [OH][H_2]\\
    &\dot{\omega}_5 = k_{f,5} [H][O_2].\\
  \end{split}
\end{align}

Here, 

\begin{align}
  k_{f,i} = A^{(i)} T^{b^{(i)}} e^{-E^{(i)}/RT},\, i = 1\dots4,
\end{align}

$R$, the universal gas constant has the value 8.314 J/K/mol. $A^{(i)}$, $b^{(i)}$ and $E^{(i)}$ are as specified in \cref{table:rxnMech}. $k_{f,5}$ is calculated using the Lindemann form as:

\begin{align}
    \begin{split}
  &P_r = \frac{{k_0}[M]}{k_{\infty}},\\
  &[M] = 2.5 [H_2] + [O_2] + [H] + [O] + [OH] + [HO_2] + 16 [H_2O] + [N_2],\\
  &k_{f,5} = k_{\infty}\Big(\frac{P_r}{1 + P_r}\Big),\\
  &k_{0} = A^{(0)} T^{b^{(0)}} e^{-E^{(0)}/RT},\\
  &k_{\infty} = A^{(\infty)} T^{b^{(\infty)}} e^{-E^{(\infty)}/RT},
    \end{split}
\end{align}

with the Arrhenius parameters as in \cref{table:rxnMech}.

The molar enthalpies ($\bar{h}$) and specific heats ($\bar{c}_p$) for each species are given by:

\begin{align}
    \begin{split}
&\frac{\bar{c}_{p}(T)}{R} = a_0 + a_1 T + a_2 T^2 + a_3 T^3 + a_4 T^4,\\
&\frac{\bar{h}(T)}{RT} = a_0 + \frac{a_1}{2} T + \frac{a_2}{3}T^2 + \frac{a_3}{4}T^3 + \frac{a_4}{5}T^4 + \frac{a_5}{T}.
    \end{split}
\end{align}

where the coefficients $a_0 \dots a_5$ are tabulated in ref.~\cite{gri_mech} for each species.

The rate of heat release per unit volume is:

\begin{align}
    \Delta \dot{q} = &- \bar{h}_{\ce{H2}} \dot{S}_{\ce{H_2}} - \bar{h}_{\ce{O2}} \dot{S}_{\ce{O2}}
                    - \bar{h}_{\ce{H}} \dot{S}_{\ce{H}}
                    - \bar{h}_{\ce{O}} \dot{S}_{\ce{O}} 
                    - \bar{h}_{\ce{OH}} \dot{S}_{\ce{OH}}\\
                    &- \bar{h}_{\ce{HO2}} \dot{S}_{\ce{HO2}} 
                    - \bar{h}_{\ce{H2O}} \dot{S}_{\ce{H2O}} 
                    - \bar{h}_{\ce{N2}} \dot{S}_{\ce{N2}} .
\end{align}

The following ODEs describe the chemistry inside a constant pressure adiabatic reactor:

\begin{align*}
    &\frac{\mathrm{d}{T}}{\mathrm{d}t} = \frac{\Delta\dot{q}}{\beta},\\
  &\frac{\mathrm{d}{[H_2]}}{\mathrm{d}t} = \dot{S}_{\ce{H2}} -\alpha [H_2], \\
  &\frac{\mathrm{d}{[O_2]}}{\mathrm{d}t} = \dot{S}_{\ce{O2}} -\alpha [O_2], \\
  &\frac{\mathrm{d}{[H]}}{\mathrm{d}t} = \dot{S}_{\ce{H}} -\alpha [H], \\
  &\frac{\mathrm{d}{[O]}}{\mathrm{d}t} = \dot{S}_{\ce{O}} -\alpha [O], \\
  &\frac{\mathrm{d}{[OH]}}{\mathrm{d}t} = \dot{S}_{\ce{OH}} -\alpha [OH], \\
  &\frac{\mathrm{d}{[HO_2]}}{\mathrm{d}t} = \dot{S}_{\ce{HO2}} -\alpha [HO_2], \\
  &\frac{\mathrm{d}{[H_2O]}}{\mathrm{d}t} = \dot{S}_{\ce{H2O}} -\alpha [H_2O], \\
  &\frac{\mathrm{d}{[N_2]}}{\mathrm{d}t} = \dot{S}_{\ce{N2}} -\alpha [N_2], \\
  &\frac{1}{V}\frac{\mathrm{d}{V}}{\mathrm{d}t} = \alpha,\\
  &\alpha = \Bigg(\frac{\dot{S}_{\ce{H2}} + 
                      \dot{S}_{\ce{O2}}
                     + \dot{S}_{\ce{H}}
                     + \dot{S}_{\ce{O}}
                     + \dot{S}_{\ce{OH}}
                     + \dot{S}_{\ce{HO2}}
                     + \dot{S}_{\ce{H2O}}
                     + \dot{S}_{\ce{N2}}}
                     {[H_2] + [O_2] + [H] + [O] + [OH] + [HO_2] + [H_2O] +
                 [N_2]} + \frac{1}{T} \frac{\mathrm{d}T}{\mathrm{d}t}\Bigg),\\
  &\beta = [H_2]\bar{c}_{p, \ce{H2}}
                     + [O_2]\bar{c}_{p, \ce{O2}}
                     + [H]\bar{c}_{p, \ce{H}}
                     + [O]\bar{c}_{p, \ce{O}}
                     + [OH]\bar{c}_{p, \ce{OH}}\\
                     &+ [HO_2]\bar{c}_{p, \ce{HO2}}
                     + [H_2O]\bar{c}_{p, \ce{H2O}}
                     + [N_2]\bar{c}_{p, \ce{N2}} ,
\end{align*}
The net heat released is 

\begin{align}
  Q = \int_{0}^{t_f} \Delta \dot{q} V(t) \mathrm{d}t.
\end{align}

\subsection{The Lorenz system}

The Lorenz system is defined by the following ordinary differential equations: 
\begin{align} 
    \label{lorenzODE} 
    \begin{split} 
        &\frac{\mathrm{d}{u_1(t)}}{\mathrm{d}t} = s(u_2 - u_1),\quad 0 < t < t_f,\\ 
        &\frac{\mathrm{d}{u_2(t)}}{\mathrm{d}t} = u_1 (r - u_3) - u_2,\\
        &\frac{\mathrm{d}{u_3(t)}}{\mathrm{d}t} = u_2 u_2 - b u_3,
  \end{split}
\end{align}

The parameter vector is the initial condition of the system, 
$\uq = \vect{u}(0)$ while the observable is $u_1(t_{f})$.

We set $s = 10$, $r = 28$, $b = 8/3$ and observe $u_1$ after two time 
horizons, $t_f = 0.1$s and $t_f = 5$s. The nominal density $p(\uq)$
is selected to be a Gaussian with mean $\uq_0$ and covariance 
$\mathbf{\Sigma}_0$ where, 

\begin{align}
    \begin{split}
\uq_0 &= \begin{bmatrix}1.508870\\-1.531271\\25.46091\end{bmatrix},\,\text{and,}\\
    \mathbf{\Sigma}_0 &= \begin{bmatrix}
                     0.01508870 & 0     & 0\\
                     0    & 0.01531271 & 0\\
                     0    & 0     & 0.02546091
                  \end{bmatrix}.
    \end{split}
\end{align}

The target intervals are chosen to be $\DT = [-5, -4]$ when $t_f = 0.1$s and 
$\DT = [-0.22, -0.21]$ when $t_f = 5$s.
 
 \subsubsection{Implementation}
This system is chaotic with maximal Lyapunov exponent $\lambda \approx 0.906$. 
We perform the optimization with MATLAB's inbuilt \texttt{fminunc} algorithm 
using analytically derived gradients.

  \subsection{Elliptic PDE}
In this experiment, we invert for the log permeability field, $g$ in the following elliptic PDE:

\begin{align}
    \begin{split}
- \nabla \cdot (e^{g} \nabla u) &= h\,\text{in}\,\Omega,\\
u &= u_D\,\text{on}\,\partial\Omega_{D},\\
e^{g}\nabla{u}\cdot\mb{n} &= u_N\,\text{on}\,\partial\Omega_N,
\end{split}
  \label{poissonEq}
\end{align}

where $\Omega \subset \mathbb{R}^{2}$ is an open domain with boundary 
$ \partial\Omega = \partial\Omega_D \cup \partial\Omega_N$, 
$\partial\Omega_D \cap \partial\Omega_N = \emptyset$. $\partial\Omega_D$ and 
$\partial \Omega_N$ denote Dirichlet and Neumann type boundaries with 
boundary values $u_D$ and $u_N$ respectively. $\mb{n}$ is a unit vector 
normal to $\partial \Omega$ in the outward direction and $h \in L^2(\Omega)$ 
is the source term.

We assume $\Omega$ is a unit square, there is no source term, the left and right walls are no-flux 
boundaries, and the top and bottom walls are Dirichlet boundaries. That is,

\begin{align}
    \begin{split}
  \Omega &= [0, 1] \times [0, 1],\\
  h &= 0,\\
  \partial\Omega_D &= (0, 1) \times \{1\} \cup (0, 1) \times \{0\}, \\
  \partial\Omega_N &= \{0, 1\} \times (0, 1), \\
  u_N &= 0,\\
  u_D &= 
    \begin{cases}
      1, &\quad \mb{x} \in (0, 1) \times \{1\}\\
      0, &\quad \mb{x} \in (0, 1) \times \{0\}
    \end{cases} .
\end{split}
\end{align}

This is an instance of Bayesian inference in infinite dimensions. The problem
can be reduced to finite dimensions by using, for instance, a finite element
discretization. The inference is then performed for the vector of coefficients
of the finite element basis functions chosen. Here, we use first order
Lagrange basis functions with 4225 degrees of freedom. Thus, the parameter
vector then is the vector of coefficients $\uq = (g_1, g_2, \ldots, g_m) \in
\mathbb{R}^m, m = 4225$. We define the parameter-to-observable map as the fluid 
velocity at a particular location in the domain $\Omega$, 
$f (\uq) = u(0.1, 0.5)$.

To solve the inverse problem, we use \texttt{hIPPYlib} \cite{villa2018hippylib,
villa2019hippylib}. 
\texttt{hIPPYlib} is a scalable software framework to  solve large scale PDE 
constrained inverse problems. It relies on \texttt{FEniCS} for the discretization and 
solution of the PDE and \texttt{PETSc} for efficient implementation of linear algebra 
routines. \texttt{hIPPYlib} provides state-of-the-art algorithms for PDE constrained 
optimization, including an implementation of the Inexact Newton-CG algorithm 
for computing the MAP point as well as randomized algorithms for constructing a 
low rank approximation of the Hessian at the MAP point. For full details, we 
refer the reader to \cite{villa2018hippylib, villa2019hippylib}. We would like to remark here that
this inverse problem appears as a model problem in \cite{villa2019hippylib} and
has been slightly modified for our experiments. For completeness, we reproduce 
relevant details from here.

While constructing $p(\uq)$, it was assumed that the true log-permeability at 5
locations in $\Omega = \left[0, 1\right] \times \left[0, 1\right],
\boldsymbol{\omega}_1 = (0.1, 0.1), \boldsymbol{\omega}_2 = (0.1, 0.9), 
\boldsymbol{\omega}_3 = (0.5, 0.5), \boldsymbol{\omega}_4 =
(0.9, 0.1), \boldsymbol{\omega}_5 = (0.9, 0.9)$, is known. Let the true log-permeability
at these points be $x_{\mathrm{true}}^1, x_{\mathrm{true}}^2, \ldots,
x_{\mathrm{true}}^5$. In addition, the following mollifier functions were 
defined

\begin{align}
    \begin{split}
    \delta_i(\boldsymbol{\omega}) =
    \exp\left(-\frac{\gamma^2}{\delta^2}\|\boldsymbol{\omega} -
    \boldsymbol{\omega}_i\|_{\mathbf{\Theta}^{-1}}\right) .
\end{split}
\end{align}

where $\mathbf{\Theta}$ is an anisotropic symmetric positive definite tensor of
the form 

\begin{align}
    \mathbf{\Theta} = 
    \begin{pmatrix}
        \theta_1 \sin^2\alpha & (\theta_1 - \theta_2) \sin\alpha \cos\alpha\\
        (\theta_1 - \theta_2) \sin\alpha\cos\alpha & \theta_2 \cos^2\alpha\\
    \end{pmatrix} .
\end{align}
%
The various parameters were set to $\gamma = 0.1, \delta = 0.5, \alpha = \pi/4, \theta_1 = 2, \theta_2 =
0.5$. The covariance of $p(\uq)$ was finally defined as $\mathbf{\Sigma}_0 =
\mathcal{A}^{-2}$, where, 

\begin{align}
    \mathcal{A} &= \tilde{\mathcal{A}} + p \sum_{i = 1}^{5}\delta_i\mathbf{I},\\
                &= \tilde{\mathcal{A}} + p \mathcal{M}.
\end{align}

where $\tilde{A}$ is a differential operator of the form
$\gamma\nabla\cdot\left(\theta\nabla\right) + \delta \mathbf{I}$ and $p$ is a
penalization parameter, which was set to $p = 10$.

The mean of the nominal PDF $p(\uq)$, $\uq_0$, was set to be the solution of the following  
regularized least squares problem 

\begin{align}
    \uq_0 = \argmin_{\uq} =
    \frac{1}{2}\langle\uq,\uq\rangle_{\tilde{\mathcal{A}}} +
    \frac{p}{2}\langle\uq_{\mathrm{true}} - \uq, \uq_{\mathrm{true}} -
    \uq\rangle_{\mathcal{M}} .
\end{align}

The nominal distribution $p(\uq)$ is then defined to be $\mathcal{N}\left(\uq_0,
\mathbf{\Sigma}_0\right)$ and $\DT = \left[0.6, 0.7\right]$.

\subsection{Periodic map}

In this case the input-output map is defined to be:

\begin{align}
    f(\uq) = \sin(x_1)\cos(x_2) .
\end{align}

\subsection{Implementation}

Again, $p(\uq)$ is assumed to be a Gaussian distribution 
with mean $\uq_0$ and covariance $\mathbf{\Sigma}_0$, 
where $\uq_0 = \left(1, 1\right)^T$ 
and $\mathbf{\Sigma_0} = \mathbf{I}$. We chose $\DT = 
\left[0.4, 0.6\right]$. The MAP point was computed using
MATLAB's \texttt{fminunc} routine.

\section{Results}
\label{supplement:results}   

Here we report the probability estimates corresponding to \Cref{fig:numSamplesConv}.
\begin{table}[H]
\small
    \centering
\caption{Single step reaction}
\pgfplotstabletypeset[
    every head row/.style={before row=\toprule,after row=\midrule},
    every last row/.style={after row=\bottomrule},
    columns/0/.style={column name=$N$},
    columns/1/.style={column name=MC, 
        int detect, sci, precision=4},
    columns/2/.style={column name=BIMC{,} $n {=} 1$,
        sci, precision=4, zerofill},
    columns/3/.style={column name=BIMC{,} $n {=} 5$, 
        sci, precision=4, zerofill},
    columns/4/.style={column name=BIMC{,} $n {=} 10$, 
        sci, precision=4, zerofill},
    columns/5/.style={column name=BIMC{,} $n {=} 25$, 
        sci, precision=4, zerofill},
    col sep=space
]{./tables/numSamplesConv1DCombPGF.dat}
\end{table}
\begin{table}[H]
\small
    \centering
\caption{Autoignition}
\pgfplotstabletypeset[
    every head row/.style={before row=\toprule,after row=\midrule},
    every last row/.style={after row=\bottomrule},
    skip rows between index={1}{2},
    skip rows between index={3}{4},
    skip rows between index={5}{6},
    columns/0/.style={column name=$N$},
    columns/1/.style={column name=MC, 
        int detect, sci, precision=4},
    columns/2/.style={column name=BIMC{,} $n {=} 1$,
        sci, precision=4, zerofill},
    columns/3/.style={column name=BIMC{,} $n {=} 5$, 
        sci, precision=4, zerofill},
    columns/4/.style={column name=BIMC{,} $n {=} 10$, 
        sci, precision=4, zerofill},
    columns/5/.style={column name=BIMC{,} $n {=} 25$, 
        sci, precision=4, zerofill},
    col sep=space
]{./tables/numSamplesConvAutoPGF.dat}
\end{table}

\begin{table}[H]
\small
    \centering
\caption{Lorenz, $t_f$ = 0.1s}
\pgfplotstabletypeset[
    every head row/.style={before row=\toprule,after row=\midrule},
    every last row/.style={after row=\bottomrule},
    skip rows between index={1}{2},
    skip rows between index={3}{4},
    skip rows between index={5}{6},
    skip rows between index={7}{8},
    columns/0/.style={column name=$N$},
    columns/1/.style={column name=MC, 
        int detect, sci, precision=4},
    columns/2/.style={column name=BIMC{,} $n {=} 1$,
        sci, precision=4, zerofill},
    columns/3/.style={column name=BIMC{,} $n {=} 5$, 
        sci, precision=4, zerofill},
    columns/4/.style={column name=BIMC{,} $n {=} 10$, 
        sci, precision=4, zerofill},
    columns/5/.style={column name=BIMC{,} $n {=} 25$, 
        sci, precision=4, zerofill},
    col sep=space
]{./tables/numSamplesConvLorenzShortPGF.dat}
\end{table}

\begin{table}[H]
    \small
    \centering
\caption{Elliptic PDE}
\pgfplotstabletypeset[
    every head row/.style={before row=\toprule,after row=\midrule},
    every last row/.style={after row=\bottomrule},
    skip rows between index={3}{4},
    columns/0/.style={column name=$N$},
    columns/1/.style={column name=MC, 
        int detect, sci, precision=4},
    columns/2/.style={column name=BIMC{,} $n {=} 1$,
        sci, precision=4, zerofill},
    columns/3/.style={column name=BIMC{,} $n {=} 5$, 
        sci, precision=4, zerofill},
    columns/4/.style={column name=BIMC{,} $n {=} 10$, 
        sci, precision=4, zerofill},
    columns/5/.style={column name=BIMC{,} $n {=} 25$, 
        sci, precision=4, zerofill},
    col sep=space
]{./tables/numSamplesConvPoissonPGF.dat}
\end{table}
\bibliographystyle{siamplain}
\bibliography{refs}